\newcommand{\up}[2][]{\W_{#1}(#2)}
\newcommand{\I}{\mathbb{I}}
\newcommand{\II}{\mathcal{I}}
\newcommand{\J}{\mathbb{J}}
\renewcommand{\O}{\mathcal{O}}
\newcommand{\N}{\mathbb{N}}
\newcommand{\W}{\mathcal{W}}
\newcommand{\Z}{\mathbb{Z}}
\newcommand{\id}{\mathrm{id}}
\newcommand{\ins}[2]{\mathrm{ins}_{#1}(#2)}
\newcommand{\del}[2]{\mathrm{del}_{#1}(#2)}
\newcommand{\comp}[1]{\langle#1\rangle}
\newcommand{\ma}{maj}
\newcommand{\dom}{\mathrm{dom}}
\newcommand{\floor}[1]{\lfloor #1 \rfloor}
\newcommand{\bigfloor}[1]{\left\lfloor #1 \right\rfloor}
\newcommand{\bigceil}[1]{\left\lceil #1 \right\rceil}
\newcommand{\ceil}[1]{\lceil #1 \rceil}
\renewcommand{\int}[1]{{\llbracket #1\rrbracket}}
\newcommand{\bef}{\mathrel{\angle}}
\newcommand{\pic}[2][-.5mm]{\raisebox{#1}{\includegraphics{Images/#2}}}
\newcommand{\res}[1]{|_{#1}}
\newcommand{\im}[1]{\mathrm{im}(#1)}
\newcommand{\restr}[2]{#1\res{#2}}
\newcommand{\slot}{\raisebox{-.7mm}{\textcolor{gray}{\text{\Large$\blacksquare$}}}}
\newcommand{\maj}{\text{maj}}
\newcommand{\Mon}{\mathsf{Mon}}
\newcommand{\Di}{\mathrm{D}}
\newtheorem{theorem}{Theorem}[section]
\newtheorem*{theorem*}{Theorem}
\newtheorem{lemma}{Lemma}[section]
\newtheorem*{lemma*}{Lemma}
\newtheorem{corollary}{Corollary}[section]
\newtheorem{proposition}{Proposition}[section]
\newtheorem*{proposition*}{Proposition}
\theoremstyle{definition}
\newtheorem{definition}{Definition}[section]
\theoremstyle{remark}
\newtheorem{example}{Example}[section]
\newtheorem{remark}{Remark}[section]
\newtheorem{claim}{Claim}
\newcommand{\dlArrow}{\arrow[dl, start anchor=south west, end anchor=north east]}
\newcommand{\ulArrow}{\arrow[ul, dashed, start anchor=north west, end anchor=south east]}
\newcommand{\wid}[2][]{\mathmakebox[\widthof{$#1$}]{#2}}
\title{Local generation of languages: the monotonic binary sequences}
\author{Mathieu Hoyrup\footnote{Université de Lorraine, CNRS, Inria, Nancy, 54000, France}}
\begin{document}
\maketitle

\begin{abstract}
In a previous article, we have introduced the problem of local generation of languages, where the communication underlying the generation procedure is captured by a simplicial complex. We study in details this problem for the language of binary monotonic sequences. We prove general results and identify several classes of minimal simplicial complexes generating this language.
\end{abstract}

\tableofcontents
\section{Introduction}
For~$n\in\N$, we consider the language~$\Mon_n\subseteq \{0,1\}^n$ of binary sequences that are monotonic, i.e.~non-decreasing or non-increasing. For instance,
\[
\Mon_4=\{0000,0001,0011,0111,1111,1110,1100,1000\}.
\]

We investigate the problem of generating the strings in~$\Mon_n$ in a local way as follows. Each cell in~$I_n=\{0,\ldots,n-1\}$ has to produce a bit in such a way that they collectively produce any string in~$\Mon_n$. To achieve this goal, the cells need to communicate, and we analyze how much communication is needed between them.

Let us formulate this problem precisely, using the framework introduced in \cite{H25}. A language is a set~$L\subseteq A^I$ for some finite sets~$A$ and~$I$, and a generation procedure for~$L$ is simply a function from strings to strings whose image is~$L$ (the input strings are elements of~$B^J$ for some finite sets~$B$ and~$J$). Such a function~$f:B^J\to A^I$ has an intrinsic communication structure: output cells in~$I$ implicitly communicate by reading the contents of common input cells in~$J$. This structure is captured by the communication complex~$K_f$ of~$f$, whose vertex set is~$I$ and whose simplices are the subsets of~$I$ that read a common input cell. We then say that a simplicial complex over~$I$ generates~$L$ if there is a function~$f$ generating~$L$ such that~$K_f\subseteq K$. The general problem is then to identify the simplicial complexes that generate a given language~$L$, and it is sufficient to identify the minimal ones.

We apply this framework to~$\Mon_n$, whose behavior turns out to be particularly rich. In particular, there does not seem to be a uniform way of describing the minimal complexes generating~$\Mon_n$ when~$n$ grows.

\paragraph{Overview of the main results.}

Each way of generating~$\Mon_n$ can be extended into a generation procedure for~$\Mon_{n+1}$ in the simple following way: produce an element of~$\Mon_n$ and insert a bit at a given position. The allowed values for the new bit only depend on the values appearing at the two positions surrounding the insertion, so the new position only needs to communicate with these two ones. Therefore, every complex~$K$ generating~$\Mon_n$ induces complexes generating~$\Mon_{n+1}$, obtained by inserting a vertex in~$K$ in a certain way (Definition \ref{def_insertion}). The minimality of~$K$ does not always imply the minimality of the new complex, but we identify criteria to preserve minimality (Proposition \ref{prop_insertion_minimal2}).

Therefore, the minimal complexes generating~$\Mon_n$ fall in two categories: either they are obtained from minimal complexes generating~$\Mon_{n-1}$ by vertex insertions, or they pop up for this particular value of~$n$, providing a new way of generating~$\Mon_n$. For~$n=2$ one has~$\Mon_2=\{0,1\}^2$ so there is a trivial generation procedure where cells do not need to communicate at all, which is materialized by the complex~$K_2=\comp{\{0\},\{1\}}$ made of two vertices and no edge. We show that for~$n=5,7,8$, new generation procedures appear, materialized by complexes that we naturally call~$K_5,K_7$ and~$K_8$ (Propositions \ref{prop_K5}, \ref{prop_K7} and \ref{prop_K8}).

We then establish a partial classification of the minimal complexes generating~$\Mon_n$.

\begin{proposition*}[Proposition \ref{prop_mono_2_simplices}]
The minimal complexes generating~$\Mon_n$ and containing several simplices of size~$n-1$ are obtained from~$K_2$ by vertex insertions, and have the form~$\comp{I_n\setminus \{a\},I_n\setminus \{b\}}$ where~$a,b\in I_n$ are distinct.
\end{proposition*}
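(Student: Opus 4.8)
The overall plan is to reduce the statement to the single assertion that, for all distinct $a,b\in I_n$, the complex $\comp{I_n\setminus\{a\},I_n\setminus\{b\}}$ is obtained from $K_2$ by a sequence of vertex insertions. Granting this, I would conclude as follows. Let $K$ be a minimal complex generating $\Mon_n$ that contains two distinct simplices of size $n-1$; these are necessarily of the form $I_n\setminus\{a\}$ and $I_n\setminus\{b\}$ with $a\neq b$, and since $K$ is a simplicial complex it contains all subsets of each, so $\comp{I_n\setminus\{a\},I_n\setminus\{b\}}\subseteq K$. By the assertion, together with the facts recalled in the overview that $K_2$ generates $\Mon_2$ and that a vertex insertion turns a complex generating $\Mon_m$ into one generating $\Mon_{m+1}$, the complex $\comp{I_n\setminus\{a\},I_n\setminus\{b\}}$ already generates $\Mon_n$. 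Minimality of $K$ then forces $K=\comp{I_n\setminus\{a\},I_n\setminus\{b\}}$, which has the announced form and is obtained from $K_2$ by vertex insertions. (Alternatively, that $\comp{I_n\setminus\{a\},I_n\setminus\{b\}}$ generates $\Mon_n$ can be checked by a direct construction with two input cells, but routing it through insertions is what also yields the ``obtained from $K_2$'' part.)

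To prove the assertion, I would argue by induction on $n$. The case $n=2$ is immediate, the only instance being $\comp{I_2\setminus\{a\},I_2\setminus\{b\}}=\comp{\{0\},\{1\}}=K_2$. Let $n\geq 3$ and fix distinct $a,b\in I_n$. Because $n\geq 3$ there is a vertex $c\in I_n\setminus\{a,b\}$; deleting it and relabelling positions identifies the full subcomplex of $\comp{I_n\setminus\{a\},I_n\setminus\{b\}}$ on $I_n\setminus\{c\}$ with $\comp{I_{n-1}\setminus\{a'\},I_{n-1}\setminus\{b'\}}$ for the images $a',b'$ of $a,b$, and this smaller complex is obtained from $K_2$ by vertex insertions by the induction hypothesis. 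So it suffices to show that re-inserting a vertex at the former position of $c$ — a boundary insertion if $c\in\{0,n-1\}$, an interior one otherwise — returns $\comp{I_n\setminus\{a\},I_n\setminus\{b\}}$. Realising $\comp{I_{n-1}\setminus\{a'\},I_{n-1}\setminus\{b'\}}$ with two input cells, one per maximal face, a vertex fails to read one of the two cells exactly when it is the single vertex omitted from the facet realised by that cell; hence, the two positions surrounding the insertion site being distinct, they cannot both be that vertex and so jointly read both cells. Unwinding Definition \ref{def_insertion}, the inserted vertex $v$ is therefore joined to every old vertex, and $\tau\cup\{v\}$ is a face of the new complex precisely when $\tau$ is a face of the old one; that is, the new complex is the cone over $\comp{I_{n-1}\setminus\{a'\},I_{n-1}\setminus\{b'\}}$ with apex $v$, namely $\comp{(I_{n-1}\cup\{v\})\setminus\{a'\},(I_{n-1}\cup\{v\})\setminus\{b'\}}$, which is $\comp{I_n\setminus\{a\},I_n\setminus\{b\}}$ after relabelling.

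The main obstacle is this last verification: it requires a careful reading of Definition \ref{def_insertion}, and especially of the boundary insertions, because for the language $\Mon$ the admissible values of a prepended or appended bit depend on whether the current monotonic string is non-decreasing or non-increasing, hence on its two end values, so the ``two surrounding positions'' of a boundary insertion are the two endpoints of the current sequence rather than a single neighbour. One must also check that, when $c$ is an endpoint (so $c=0$ forces $a,b\geq 1$ and $c=n-1$ forces $a,b\leq n-2$), the combined read of the two endpoints of $I_{n-1}$ is still everything, which again holds because at most one vertex is omitted from a given facet. Once this bookkeeping is settled the cone computation is routine. Finally, if in addition one wants each complex $\comp{I_n\setminus\{a\},I_n\setminus\{b\}}$ to be minimal, and not merely of the stated form, I would check that the insertions used above meet the hypotheses of Proposition \ref{prop_insertion_minimal2}, so that minimality propagates along the chain starting from $K_2$; I expect this to be a direct verification.
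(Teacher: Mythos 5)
Your argument does establish the statement as quoted, and its final step is the same as the paper's concluding paragraph: once one knows that $\comp{I_n\setminus\{a\},I_n\setminus\{b\}}$ generates $\Mon_n$, a minimal $K$ containing two $(n-1)$-simplices contains such a subcomplex and must equal it. Where you differ is in how you justify that $\comp{I_n\setminus\{a\},I_n\setminus\{b\}}$ generates $\Mon_n$ and lies in the $K_2$ family: the paper gets generation from the independence of any two positions (recalled at the start of Section \ref{sec_mon_examples}) and records the insertion description in Example \ref{ex_K2}, while you re-derive it by induction using deletion and one re-insertion. That induction is correct: since each facet omits a single vertex and the two positions surrounding the insertion site are distinct, Definition \ref{def_insertion} adds the new vertex to both facets, giving back $\comp{I_n\setminus\{a\},I_n\setminus\{b\}}$. (Your worry about ``boundary'' insertions is moot: the definition uses $i-1\bmod n$, so the construction is circular, and the semantic issue about admissible bit values is already absorbed in Proposition \ref{prop_insertion}.)

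The real divergence is that Proposition \ref{prop_mono_2_simplices} as the paper states and proves it is an equality of families: it also asserts that every $\comp{I_n\setminus\{a\},I_n\setminus\{b\}}$ is itself \emph{minimal}, and that is where the paper's work lies, via two applications of Theorem \ref{thm_main_tool}: if $K\subseteq\comp{I_n\setminus\{a\},I_n\setminus\{b\}}$ generates $\Mon_n$, the absence of the edge $\{a,b\}$ yields partial inputs forcing $f_b=0$ and $f_b=1$ with domains $\up[f]{\int{b,a-1}}$ and $\up[f]{\int{a+1,b}}$, which must intersect, so $I_n\setminus\{a\}\in K$. Your deferred plan for this half --- propagate minimality from $K_2$ along the insertion chain using Proposition \ref{prop_insertion_minimal2} --- does not work. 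For $K_2$ the maximal intervals are the singletons $\{0\}$ and $\{1\}$, so for every insertion position one maximal interval contains $i\bmod 2$ but not $i-1\bmod 2$ and the other does the opposite; the hypothesis of Proposition \ref{prop_insertion_minimal2} (and of Corollary \ref{cor_insertion_minimal}, which moreover requires $n\geq 3$) is never satisfied, so minimality for $n=3$ needs a separate argument. Worse, even beyond the base case the hypothesis fails exactly when the insertion is made between the two omitted vertices while they are adjacent, and such an insertion is unavoidable if you want to reach a target $\comp{I_n\setminus\{a\},I_n\setminus\{b\}}$ with $a,b$ non-adjacent; so no choice of insertion order makes this propagation cover the whole family. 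If you intend the full proposition, the minimality of these complexes requires a direct argument such as the paper's.
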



\begin{theorem*}[Theorem \ref{thm_one_interval}]
The minimal complexes generating~$\Mon_n$ and containing one simplex of size~$n-1$ are obtained from~$K_5$ by vertex insertions.
\end{theorem*}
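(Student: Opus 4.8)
The plan is to induct on $n$, with $n\le 5$ as the base and $n>5$ as the inductive step; throughout, the argument runs parallel to the proof of Proposition~\ref{prop_mono_2_simplices}, with $K_5$ playing the role played there by $K_2$. For $n<5$ the claim is that no minimal complex generating $\Mon_n$ has exactly one simplex of size $n-1$; this follows from the description of the minimal complexes for $\Mon_3$ and $\Mon_4$---each is a vertex insertion of $K_2$ and so has at least two simplices of size $n-1$---or can be seen directly, by checking that the constraints on how the coordinate missing from the top simplex reads the input cannot be met for $n=3,4$. For $n=5$, Proposition~\ref{prop_K5} gives that $K_5$ generates $\Mon_5$, is minimal, and has a unique size-$4$ simplex; conversely, a minimal complex with unique size-$4$ simplex $S=I_5\setminus\{a\}$ is pinned down by a short finite analysis---first the position of $a$, then $\mathrm{lk}_K(a)$, using that in $\Mon_5$ the bit at $a$ is free exactly when the two coordinates flanking $a$ disagree---and one verifies that, up to the reflection $i\mapsto n-1-i$ (under which the whole problem is invariant), the only possibility is $K_5$.

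For the inductive step, fix $n>5$ and a minimal complex $K$ generating $\Mon_n$ whose unique simplex of size $n-1$ is $S=I_n\setminus\{a\}$. The decisive step, discussed below, is to produce a vertex $v\neq a$ for which $K=\ins{v}{K'}$, where $K':=K|_{I_n\setminus\{v\}}$ is read, after reindexing, as a complex on $I_{n-1}$. Granting this, the remainder is routine. First, $K'$ generates $\Mon_{n-1}$: deleting coordinate $v$ from a generating function for $\Mon_n$ produces one for $\pi_{I_n\setminus\{v\}}(\Mon_n)=\Mon_{n-1}$ with communication complex inside $K'$. Next, $K'$ is minimal, since a proper subcomplex of $K'$ generating $\Mon_{n-1}$ would, via $\ins{v}{-}$ (which is monotone, preserves generation---cf.\ the discussion at Definition~\ref{def_insertion}---and is inverted by deleting $v$, hence injective here), yield a proper subcomplex of $K$ generating $\Mon_n$. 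Finally, the inserted vertex enlarges each simplex by at most one, so $K'$ has a simplex of size $\ge n-2$; it is not the full simplex on $I_{n-1}$ (it generates $\Mon_{n-1}\neq\{0,1\}^{n-1}$), so this simplex has size exactly $n-2$, and there cannot be two or more: if there were, $K'$ being minimal would equal $\comp{I_{n-1}\setminus\{x\},I_{n-1}\setminus\{y\}}$ by Proposition~\ref{prop_mono_2_simplices}, and then a direct inspection of the insertion shows both of these facets acquire $v$, so $\ins{v}{K'}$ carries two simplices of size $n-1$, contradicting the hypothesis on $K$. Thus $K'$ is a minimal complex generating $\Mon_{n-1}$ with exactly one simplex of size $n-2$, the induction hypothesis (or the base case, when $n-1=5$) applies to it, and $K=\ins{v}{K'}$ is obtained from $K_5$ by vertex insertions. (The reverse inclusion---that every vertex-insertion tree over $K_5$ gives a minimal generator of $\Mon_n$ with a single top simplex---follows from Proposition~\ref{prop_insertion_minimal2}, whose hypotheses one checks are maintained at each insertion.)

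The main obstacle is the decisive step: producing $v$, equivalently ruling out the appearance of a \emph{new} minimal complex with a single top simplex for some $n>5$ (genuinely new complexes do appear at $n=5,7,8$, so some argument is needed). The approach is structural. One analyzes $N:=\mathrm{lk}_K(a)$: since $S$ is the only simplex of size $n-1$, $N$ is a proper face of $S$; using the shape of $\Mon_n$, one shows first that $a$ is an interior vertex of $I_n$---were $a$ an endpoint, the bit at $a$ would be free exactly when \emph{all} other bits agree, and tracking that through the unique coordinator of $S$ would force $N=S$ and collapse $K$ to the full simplex, impossible since $\Mon_n\neq\{0,1\}^n$---and then that $N$ is an interval of consecutive vertices containing $\{a-1,a+1\}$. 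One next examines the vertices of $S$ outside $N$ together with the maximal simplices of $K$ other than $S$: either some vertex of $S$ belongs to no maximal simplex but $S$, with a neighborhood exactly of the shape an insertion creates---this is the desired $v$---or $S$, $N$, and the remaining maximal simplices fit together in one of finitely many configurations, each of which is eliminated for $n>5$ by the product-type and consistency obstructions already used for the base case. Carrying out this case analysis---in particular, controlling how maximal simplices other than $S$ can sit relative to $S$ and to $a$---is the technical heart of the proof.
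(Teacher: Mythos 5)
Your overall plan (induct on $n$, reduce a minimal complex with a unique top simplex to one on $n-1$ vertices via an insertion vertex $v$) is reasonable in outline, but the proof has a genuine gap exactly where you locate it: the ``decisive step'' of producing $v$, i.e.\ ruling out genuinely new minimal one-top-simplex complexes for $n>5$, is never carried out. You reduce everything to a structural case analysis of $\mathrm{lk}_K(a)$ and ``finitely many configurations \ldots eliminated by product-type and consistency obstructions'', but no such obstructions are stated or proved, and since new minimal complexes really do appear at $n=7$ and $n=8$ (just without an $(n-1)$-simplex), this is precisely the point where a uniform-in-$n$ argument is required. Worse, one of the structural claims you announce along the way is false for the actual answer: in the minimal complexes $K^n_{i,j}=\comp{\int{1,n-1},\int{0,j-1},\int{j+1,i-1},\int{i+1,0}}$ the vertex missing from the top simplex is $a=0$, and the set of vertices sharing a simplex with $a$ is $I_n\setminus\{0,j\}$, a union of two arcs, not ``an interval of consecutive vertices containing $\{a-1,a+1\}$''. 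So the proposed case analysis, as described, could not terminate in the intended way. The $n=5$ base case is likewise only asserted (``a short finite analysis''), and the quotient by symmetry there should be the dihedral/circular action, not just the reflection.

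For comparison, the paper does not induct on $n$ and never needs to exhibit an insertion vertex inside an abstract minimal complex. It first proves, uniformly in $n$, that any complex generating $\Mon_n$ whose only $(n-1)$-interval is $\int{1,n-1}$ must contain some $K^n_{i,j}$ (Lemma \ref{lem_ij}, built on Theorem \ref{thm_main_tool} via Lemma \ref{lem_mono_3_intervals}); it then proves each $K^n_{i,j}$ is itself minimal using Lemma \ref{lem_technical} (which forces $\int{1,n-1}$ to survive in any generating subcomplex) together with Lemma \ref{lem_ij} again and the incomparability of the $K^n_{i,j}$; finally it exhibits an explicit insertion sequence from $K_5$ (actually $K^5_{2,3}$, a circular permutation of $K_5$) to each $K^n_{i,j}$. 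If you want to salvage your induction, the missing content is essentially an analogue of Lemmas \ref{lem_mono_3_intervals}--\ref{lem_ij} and \ref{lem_technical}: concrete consequences of Theorem \ref{thm_main_tool} that pin down which intervals must be present, rather than a link-based case analysis. Your auxiliary steps (deletion gives a generator of $\Mon_{n-1}$, minimality transfers along $\ins{v}{\cdot}$, the ``two $(n-2)$-simplices'' exclusion via Proposition \ref{prop_mono_2_simplices}) are fine, but they only become relevant once the unproved step is supplied.
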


In all these complexes, the maximal simplices turn out to be intervals, i.e.~sets of the form~$[i,j]$ or~$[0,i]\cup [j,n-1]$ for~$0\leq i<j<n$ (which are indeed intervals when identifying~$I_n$ with~$\Z/n\Z$ and arranging it on a circle). We show that it is not a coincidence.
\begin{theorem*}[Theorem \ref{thm_intervals}]
Let~$K$ be a minimal complex generating~$\Mon_n$. The maximal simplices of~$K$ are all intervals.
\end{theorem*}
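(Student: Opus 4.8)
The plan is to argue by contradiction: assuming a minimal complex $K$ generating $\Mon_n$ has a maximal simplex $\sigma$ that is not an interval, I would exhibit a strictly smaller generating subcomplex. Two easy reductions set this up. First, by minimality every function $f$ generating $\Mon_n$ with $K_f\subseteq K$ in fact satisfies $K_f=K$, since $K_f$ is otherwise a proper subcomplex of $K$ generating $\Mon_n$; fix such an $f\colon B^J\to\{0,1\}^{I_n}$. Second, since the complement of an interval of $\Z/n\Z$ is again an interval, $I_n\setminus\sigma$ is not an interval either, so both $\sigma$ and $I_n\setminus\sigma$ split into $k\ge 2$ maximal arcs; write $\sigma=R_1\sqcup\cdots\sqcup R_k$ for the runs of $\sigma$ and $G_1,\dots,G_k$ for the (nonempty) runs of $I_n\setminus\sigma$, cyclically arranged as $R_1,G_1,\dots,R_k,G_k$. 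The whole theorem then reduces to the claim that whenever $K$ generates $\Mon_n$ and $\sigma$ is a non-interval maximal simplex of $K$, the subcomplex $K\setminus\{\sigma\}$ (remove only the top cell $\sigma$) still generates $\Mon_n$: iterating this claim strips every non-interval maximal simplex and contradicts minimality.

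To prove the claim I would use the cyclic picture of $\Mon_n$: its $2n$ elements, listed as $0^n,0^{n-1}1,\dots,1^n,1^{n-1}0,\dots,10^{n-1},0^n$, form a cycle whose neighbours differ in one coordinate, so a monotonic string is a pair $(c,d)$ of a cut $c\in\{0,\dots,n\}$ and a direction $d\in\{\uparrow,\downarrow\}$ with $(0,\uparrow)=(n,\downarrow)=1^n$ and $(0,\downarrow)=(n,\uparrow)=0^n$. Two elementary facts drive everything. (a) Deleting coordinates from a monotonic string leaves a monotonic string, so $\restr{\Mon_n}{[p,q]}=\Mon_{q-p+1}$ for every linear interval $[p,q]\subseteq I_n$; in particular $\restr{\Mon_n}{R_m}=\Mon_{|R_m|}$. (b) The restriction of $s_{(c,d)}$ to a run $R_m=[a_m,b_m]$ is constant (all $0$ or all $1$) unless $a_m<c\le b_m$, a condition met for at most one index $m$. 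Thus in any monotonic string at most one run of $\sigma$ carries a non-constant pattern, and each other run is constant with value fixed by $d$ and by which side of the active run or gap it lies on.

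Now the point of $\sigma$ not being an interval is that its $k\ge2$ runs form a ``backbone'' $R_1,R_2,\dots,R_k$ along which the above bounded, already-agreed information can be forwarded using only the proper faces $R_j\cup R_{j+1}\subsetneq\sigma$, which are simplices of $K$ because $K$ is downward closed and $\sigma\in K$ (the degenerate case $k=2$ is handled via a face $\sigma\setminus\{v\}$ and a separate treatment of the single cell $v$). I would accordingly build $g$ from $f$ by replacing the input $j^\star$ that all of $\sigma$ reads by such a relay --- keeping the other inputs of $\sigma$'s cells untouched so that $\sigma$ stays correlated with the rest of the output --- and then check that $g$ has image exactly $\Mon_n$ and that no input of $g$ is read by all of $\sigma$, i.e.\ $K_g\subseteq K\setminus\{\sigma\}$.

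The main obstacle is making this relay genuinely work: one must route the handoff data through the chosen faces while keeping the outputs on $\sigma$ correlated with the outputs elsewhere (so that only monotonic strings appear) and while still attaining every monotonic string, and, since $g$'s domain is a full product, the relay must be self-correcting, turning an inconsistent combination of messages into a valid monotonic output rather than an invalid one. This is exactly where the contrast with an interval maximal simplex bites: there an interior cell straddles the cut and genuinely needs information jointly held by the whole simplex, so no relay can replace the common input --- which is precisely why interval maximal simplices, such as the size-$(n-1)$ one in $K_5$, are forced. I would carry out the verification by fixing $d$, scanning the cut in cyclic order, and checking realizability run by run, dealing with the identifications at $0^n$ and $1^n$ and with a gap that contains the seam between positions $n-1$ and $0$ by small separate arguments. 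A likely cleaner alternative is to bypass the explicit $g$ and instead prove the claim via a composition/gluing lemma for $\Mon_n$ along a gap, reducing it to the already-established structure of minimal complexes with few large simplices (Proposition \ref{prop_mono_2_simplices} and Theorem \ref{thm_one_interval}).
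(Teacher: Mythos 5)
Your write-up correctly reduces the theorem to showing that a non-interval maximal simplex~$\sigma$ can be dispensed with, but the core of the argument --- the construction of the new generating function~$g$ and the verification that~$\im{g}=\Mon_n$ --- is never carried out. The ``relay along the runs $R_1,\dots,R_k$'' is only a plan, and you yourself flag the unresolved obstacle: since the domain of~$g$ is a full product, the faces $R_j\cup R_{j+1}$ may carry mutually inconsistent messages, and nothing in your outline explains why the resulting output is still monotonic, nor why every monotonic string is still attained. In a one-shot evaluation there is no multi-round forwarding of ``already-agreed information''; whatever a cell uses must sit in its own input window, so the self-correction problem is not a technicality but the whole difficulty. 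The fallback you mention (a gluing lemma reducing to the classification of complexes with $(n-1)$-simplices) does not close the gap either: those classification results say nothing about a general maximal simplex that is not an interval, and in the paper they are proved \emph{after} this theorem. So as it stands the proposal identifies the right target statement but does not prove it.

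For comparison, the paper's proof isolates exactly the mechanism your relay is missing, and it needs only a two-way split rather than a $k$-way one. If~$\sigma$ is not an interval, there are two positions $a,b\notin\sigma$ such that~$\sigma$ meets both arcs $X=\int{a,b}$ and $Y=\int{b,a}$ while missing $X\cap Y=\{a,b\}$. One first checks that these two overlapping arcs \emph{decompose} $\Mon_n$: if $\restr{w}{X}$ and $\restr{w}{Y}$ each extend to monotonic strings, then $w$ is monotonic, because the values $w_a,w_b$ determine $w$ on one of the two arcs. Then, in a general decomposition theorem (Theorem \ref{thm_decomposition}), the single input cell read by all of~$\sigma$ is duplicated into $p_X$ and $p_Y$, read respectively by the cells of~$\sigma$ in $X$ and in $Y$. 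Inconsistency between $p_X$ and $p_Y$ is harmless precisely because the cells $a,b$ in the overlap do not read this input, so each of $\restr{g(u)}{X}$ and $\restr{g(u)}{Y}$ is a restriction of a genuine output of~$f$, and the decomposition property of the pair $(X,Y)$ forces $g(u)\in\Mon_n$; surjectivity is immediate by feeding both copies the same value. This yields a proper subcomplex generating $\Mon_n$, contradicting minimality. If you want to salvage your approach, the missing idea is this decomposition property of the two arcs cut out by two omitted vertices of~$\sigma$; your observation that at most one run of~$\sigma$ is non-constant concerns the restriction to~$\sigma$ itself and is not enough to control the global output.
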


We also prove a result that holds for all~$n$, studying how small the intervals of a complex generating~$\Mon_n$ can be. More precisely, let~$\mu(n)$ be the minimal~$k$ such that there is a complex generating~$\Mon_n$ whose intervals have lengths at most~$k$.  We show that~$\mu(n)\simeq \frac{3n}{4}$.
\begin{theorem*}[Theorem \ref{thm_mu}]
For~$n\geq 8$, one has
\[
\bigfloor{\frac{3n+1}{4}}\leq \mu(n)\leq \bigceil{\frac{3n}{4}}.
\]
\end{theorem*}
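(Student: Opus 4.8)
The plan is to prove the two inequalities separately: the upper bound by an explicit construction, and the lower bound by an impossibility argument whose delicate point is forcing the constant to be $3/4$ rather than $1/2$.

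\emph{Upper bound.} For $\mu(n)\le\ceil{3n/4}$, arrange $I_n$ on the circle and cut it into four consecutive blocks $Q_0,Q_1,Q_2,Q_3$ of sizes as equal as possible, so that $\min_t|Q_t|=\floor{n/4}$; let $K$ have maximal simplices the four arcs $\sigma_t=Q_t\cup Q_{t+1}\cup Q_{t+2}$ (indices mod $4$). Each $\sigma_t$ is an interval missing exactly one block, hence of length $n-|Q_{t+3}|\le\ceil{3n/4}$, and each cell lies in exactly three of the four arcs. It remains to build a function $f$ with $K_f\subseteq K$ and image $\Mon_n$: let the input cell attached to $\sigma_t$ carry the direction (non-decreasing or non-increasing) together with the location of the transition — resolved exactly inside the three blocks of $\sigma_t$, and only coarsely (left of my blocks / right of my blocks / inside the block I do not contain) otherwise — and let $f_i$, for $i\in Q_t$, combine the three visible descriptions into its output bit. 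Surjectivity is immediate (feed every $\sigma_t$ the description of the target string); the content is that no output is ever non-monotonic. The point of choosing the combination rule and the amount of data carried carefully is that each cell fails to see exactly one block, so two conflicting visible descriptions can disagree only about that block, and a disagreement confined to a single block cannot create a $010$ or $101$ pattern. A perhaps cleaner variant is to start from $K_8$ (Proposition~\ref{prop_K8}) and insert vertices one at a time, spreading the insertions so that no interval grows past $\ceil{3n/4}$; only the fact that the resulting complex still generates $\Mon_n$ is needed.

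\emph{Lower bound, incidence part.} Suppose $K$ generates $\Mon_n$ with all intervals of length $\le L$, where $4(L+1)\le 3n+1$, i.e.\ $L\le(3n-3)/4$; I want a contradiction. Take a generating $f$ whose input cells are the maximal simplices (arcs) of $K$, so that output cell $i$ reads exactly the arcs through $i$. The basic tool is the product lemma: if a set $S$ of output cells splits into nonempty parts no two of which lie in a common arc, then the image of $f$ restricted to $S$ is the product of its restrictions to those parts. Since the restriction of $\Mon_n$ to any $m\ge 3$ cells is a relabelling of $\Mon_m$, which is never such a product, the ``share-an-arc'' graph on any triple is connected; equivalently, non-adjacency of cells is a partial matching. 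But every interval of length $\le L$ containing the cell $0$ lies within radius $L-1$ of $0$, so the cells sharing no arc with $0$ contain the interval $[L,n-L]$, of size $n-2L+1$; as this set can have at most one element, $L\ge n/2$.

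\emph{Lower bound, functional part.} Improving $n/2$ to roughly $3n/4$ cannot come from the incidence pattern alone — the matching condition just obtained is satisfiable with intervals of length about $n/2$ — so it has to use $f$ itself. The plan is to fix a small family of cells spread around the circle and argue that, because every interval is short, the share-an-arc graph among them is forced to be sparse (a path, a fork, or a short cycle running through a narrow bundle of arcs), and then to run a refinement of the product lemma — the kind of case analysis already needed for triples, classifying which joint relations a path or fork of short arcs can realize — to conclude that the joint behaviour of these cells cannot equal the corresponding projection of $\Mon_n$ (a copy of $\Mon_m$, not decomposable in the forced way) unless one of the mediating arcs has length at least $\floor{(3n+1)/4}$. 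Optimizing where this configuration sits pins down the constant, and a handful of parity-sensitive small cases (such as $n=8$, where both bounds equal $6$) are checked directly.

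The main obstacle is precisely this last step. The purely combinatorial obstructions — the product lemma and the matching it forces — genuinely saturate at $n/2$, so the sharper bound has to be extracted from a structural analysis of $f$: choosing the right small family of cells, showing that shortness of all intervals pins down the bundle of arcs through which they must communicate, and computing which joint relations such a constrained communication pattern can produce. Getting the constant to be exactly $3/4$ and matching the floor/ceiling in the statement will need the witnessing configuration and its interval endpoints to be chosen and tracked with care; the extremal configuration should be readable off the upper-bound construction.
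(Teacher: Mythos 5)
Your upper bound is fine in its second variant: starting from $K_8$ and inserting vertices only needs Proposition \ref{prop_insertion} (generation is preserved), and the endpoint is exactly the paper's choice of a member of the $K_8$ family with three intervals of length $\bigceil{\frac{3n}{4}}$ and one of length $3(n-\ceil{3n/4})$. (Your first variant, with the four blocks and the ad hoc ``carry the transition location coarsely'' rule, is not a proof: the whole difficulty there is designing the local combination rule, which is precisely the technical content of Proposition \ref{prop_K8}, and the assertion that a disagreement confined to one invisible block can never produce a $010$/$101$ pattern is unjustified.)

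The lower bound, however, has a genuine gap, and you name it yourself: your product-lemma/matching argument saturates at $L\geq n/2$, and the ``functional part'' that is supposed to push this to $\floor{\frac{3n+1}{4}}$ is only a plan (``fix a small family of cells\ldots run a refinement of the product lemma\ldots''), with no identified configuration, no lemma, and no computation producing the constant $3/4$. This is the heart of the theorem and it is absent. For comparison, the paper gets it not from incidence data but from a quantitative propagation argument (Lemma \ref{lem_b}): assuming no interval of length $k=\floor{\frac{3n+1}{4}}$ exists, one proves by induction on $j$ (using Theorem \ref{thm_main_tool} and Lemma \ref{lem_mono}) that there is a partial input forcing $f_0=0$ whose domain is contained in $\up{\int{0,j}}\cup\up{\int{j+1,j+1+2(n-k)}}$; the induction step works because the inequalities $k<n$ and $3k\geq 2n+1$ make every relevant union of two of the four sliding intervals an interval of length at least $k$, hence with empty common window, so the two partial inputs in play are automatically compatible and the forced value propagates one step around the circle. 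Taking $j=k-1$ kills the first window, and a final application of Theorem \ref{thm_main_tool} produces a compatible partial input forcing $f_0=1$, a contradiction. The arithmetic $3k\geq 2n+1$ is exactly where the ratio $3/4$ enters; nothing in your sketch plays this role, so as written the proposal proves only $\mu(n)\gtrsim n/2$ on the lower side.
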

We also introduce an inference system that can, in principle, be used to express most of the arguments showing that a complex does not generate a language. This inference system has been implemented \cite{Prover} and used to help finding, by exhaustive search, some of the most technical arguments of the article. Note that this automatic tool is closer to a proof assistant than to an automatic prover, and is only used to test hypotheses and solve particular cases. It quickly becomes impractical as~$n$ grows, and cannot by itself find proofs of general statements such as Theorem \ref{thm_mu}.

The idea of performing computations in a local way with little communication between cells has been formalized and studied in many ways: cellular automata \cite{Kari05,DFP12}, automata networks \cite{GM90,Gad19,GGPT21}, distributed reactive systems \cite{PnueliR90}, distributed networks \cite{NS95}, distributed graph automata \cite{R15}, distributed environments \cite{FG18}, distributed computing \cite{HS99,HKR13}. It is customary to use graphs or simplicial complexes to represent the communication structure in a distributed setting: the interaction graph \cite{Gad19} in automata networks, or the input and output complexes in distributed computing \cite{HS99}. As explained in \cite{H25}, the local generation problem investigated in the present article can be reformulated using combinatorial topology as in \cite{HS99,HKR13}, and we briefly discuss this reformulation in the case of~$\Mon_n$ at the end of the article.


The article is organized as follows. In Section \ref{sec_def}, we briefly present the main definitions. In Section \ref{sec_mon_examples}, we identify some complexes generating~$\Mon_n$ for~$n=5,7,8$. In Section \ref{sec_general} we prove structural results, for instance the fact that minimal complexes generating~$\Mon_n$ are always made of intervals. In Section \ref{sec_rule_system} we present a unifying framework for presenting proofs of negative results, as well as technical tools that are used in the next arguments. In Section \ref{sec_families}, we prove the minimality of the families of complexes generating~$\Mon_n$ presented in Section \ref{sec_mon_examples}. In Section \ref{sec_34}, we show that in a complex generating~$\Mon_n$, the maximal length of its intervals is at least approximately~$\frac{3n}{4}$, and that this lower bound is tight. In Section \ref{sec_future}, we briefly discuss the reformulation of the problem in terms of combinatorial topology and list a few open questions. We defer some technical proofs to Section \ref{sec_app}.

\section{Definitions}\label{sec_def}
\subsection{Background}\label{sec_back}
We briefly recall the main notions introduced in \cite{H25}. There are two main objects in this study: languages and communication complexes.

A \textbf{language} is a set~$L\subseteq A^I$, where~$A,I$ are finite sets. We often consider the case~$I=I_n=[0,n-1]$ for some~$n\geq 1$, and denote~$A^{I_n}$ by~$A^n$. If~$B,J$ are finite sets, then a function~$f:B^J\to A^I$ \textbf{generates}~$L$ if~$L=\im{f}$. The elements of~$I$ are called \textbf{output cells}, the elements of~$J$ are called \textbf{input cells}. Each output cell~$i\in I$ evaluates its own function~$f_i(x)$, which is the value of~$f(x)$ at position~$i$. Usually, it only depends on the values of the input~$x$ at certain input cells. In order to produce the strings of~$L$ collectively, the output cells need some level of coordination, which is made possible by the implicit communication that is allowed by reading common input cells. The communication complex of~$f$ captures this underlying communication structure. To each~$i\in I$ is associated its \textbf{input window} $\W_f(i)\subseteq J$, which is the smallest subset~$W\subseteq J$ which determines the value of~$f_i$, i.e.~such that for all~$x,y\in B^J$, if their restrictions to~$W$ coincide, written~$\restr{x}{W}=\restr{y}{W}$, then~$f_i(x)=f_i(y)$. The dual windows are defined for~$j\in J$ as~$\W^f(j)=\{i\in I:j\in \W_f(i)\}$. The \textbf{visibility diagram} of~$f$ is the set~$\{(i,j)\in I\times J:j\in W_f(i)\}=\{(i,j)\in I\times J:i\in W^f(j)\}$, represented as a binary-valued matrix. For~$S\subseteq I$ we define
\[
\up[f]{S}=\bigcap_{i\in S}\W_f(i)=\{j\in J:S\subseteq \W^f(j)\},
\]
which is the set of input cells that are visible by \emph{all} the output cells in~$S$. The \textbf{communication complex} of~$f$, denoted by~$K_f$, is the simplicial complex over vertex set~$I$, defined for~$S\subseteq I$ by
\[
S\in K_f\iff \up[f]{S}\neq\emptyset.
\]
We will often write a simplicial complex~$K$ as~$K=\comp{S_0,\ldots,S_k}$ where the~$S_i$'s are the maximal simplices of~$K$.

\begin{definition}[Language generation]
Let~$L\subseteq A^I$ be a language and~$K$ a simplicial complex over vertex set~$I$. We say that~$K$ \textbf{generates}~$L$ if there exist~$B,J$ and a function~$f:B^J\to A^I$ such that~$\im{f}=L$ and~$K_f\subseteq K$.
\end{definition}

The general goal of the study is to describe, for any given language~$L$, the set of all the simplicial complexes generating~$L$. This set is upwards closed, i.e.~if~$K\subseteq K'$ and~$K$ generates~$L$, then~$K'$ generates~$L$ as well, so our goal is to identify the minimal complexes generating~$L$.

We list a few general results from \cite{H25} that will be used in this article. First, if a complex generates a language, then one can choose the input space of the generating function in a canonical way.
\begin{proposition}[Canonical form]\label{prop_canonical}
Let~$L\subseteq A^I$ and~$K$ be a complex over vertex set~$I$ that generates~$L$. Let~$B=L$ and let~$J$ be the set of maximal simplices of~$K$. There exists a function~$f:B^J\to A^I$ generating~$L$, such that~$K_f\subseteq K$ and for each~$x\in L$,~$f(x,\ldots,x)=x$.
\end{proposition}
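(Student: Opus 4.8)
The plan is to manufacture the function $f$ out of an \emph{arbitrary} generator of $L$ that already lives inside $K$, by ``re-routing'' each of its input cells through a maximal simplex of $K$. Concretely, I start from a function $g\colon C^M\to A^I$ with $\im{g}=L$ and $K_g\subseteq K$, which exists by hypothesis.

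The first step is a bookkeeping observation: for every input cell $m\in M$, the dual window $\W^g(m)$ is a simplex of $K_g$ --- because $m\in\up[g]{\W^g(m)}$, so this set is nonempty --- hence a simplex of $K$, hence contained in some maximal simplex. I fix once and for all a choice $\sigma(m)\in J$ with $\W^g(m)\subseteq\sigma(m)$ for each $m\in M$, and also a preimage $c^{(x)}\in g^{-1}(x)$ for each $x\in L$. I then define $f\colon L^J\to A^I$ by
\[
f(y)=g\Bigl(\bigl(c^{(y_{\sigma(m)})}_m\bigr)_{m\in M}\Bigr),
\]
i.e.\ to evaluate $g$ I feed its input cell $m$ the $m$-th coordinate of the chosen preimage $c^{(y_{\sigma(m)})}$ of the string $y_{\sigma(m)}$ stored at the maximal simplex $\sigma(m)$.

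It then remains to run three routine checks. First, $\im{f}=L$ together with $f(x,\dots,x)=x$: every value $f(y)$ lies in $\im{g}=L$, so $\im{f}\subseteq L$; and on a constant tuple $y=(x,\dots,x)$ one has $y_{\sigma(m)}=x$ for every $m$, whence $f(y)=g(c^{(x)})=x$, which yields both $L\subseteq\im{f}$ and the stated identity. Second, $K_f\subseteq K$: fix $i\in I$; since $g_i$ is determined by the restriction of its argument to $\W_g(i)$, the value $f_i(y)$ depends only on the coordinates $y_{\sigma(m)}$ with $m\in\W_g(i)$; but $m\in\W_g(i)$ means $i\in\W^g(m)\subseteq\sigma(m)$, so every such $\sigma(m)$ contains $i$; hence $\{\sigma\in J:i\in\sigma\}$ is an input window for $f_i$, so $\W_f(i)\subseteq\{\sigma\in J:i\in\sigma\}$. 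Consequently, if $S\in K_f$, then any $\sigma\in\up[f]{S}$ satisfies $i\in\sigma$ for all $i\in S$, i.e.\ $S\subseteq\sigma$, and since $\sigma\in K$ we get $S\in K$.

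The only genuinely non-obvious point is the design of $f$ itself. The naive attempt --- have each cell output the common value of the $y_\sigma$'s when they all coincide --- does not work, because ``all coordinates coincide'' is not a condition a single output cell can verify from its own window, so the resulting function would not respect $K$. Passing through $g$ and the covering assignment $m\mapsto\sigma(m)$ is precisely what forces the communication to collapse onto $K$, and once that choice is made everything else is immediate.
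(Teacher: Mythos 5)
Your construction is correct: routing each input cell $m$ of an arbitrary generator $g$ (with $K_g\subseteq K$) through a chosen maximal simplex $\sigma(m)\supseteq\W^g(m)$, and decoding the string stored there via a fixed preimage $c^{(x)}\in g^{-1}(x)$, does give $\im{f}=L$, $f(x,\ldots,x)=x$, and $\W_f(i)\subseteq\{\sigma\in J: i\in\sigma\}$, which yields $K_f\subseteq K$. The paper itself states this proposition without proof (it is imported from the earlier article), and your argument realizes exactly the canonical form described in the paragraph following the statement --- each maximal simplex holding a value of $B=L$ and each vertex reading only its incident simplices --- so it is a complete and faithful proof of the claim, up to the trivially degenerate case where $K$ has no simplices at all.
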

In particular, there is always a generation procedure working as follows: each maximal simplex is assigned a value in~$B$ and each vertex~$i\in I$ has a local rule that, upon reading the values of its incident simplices, determines an output value in~$A$ to be assigned to~$i$. Moreover, one can choose~$B=L$ and if all the simplices that are incident to a vertex~$i$ have the same value~$x\in L$, then~$v$ takes value~$x_i$.

If a language~$M$ is the image of a language~$L$ by some function~$f$, then any generation function for~$L$ induces a generation function for~$M$ by composition. The way it transforms the communication complex can be easily expressed in terms of~$f$ as follows.
\begin{proposition}[Image of a language]\label{prop_image}
Let~$L\subseteq {A_0}^{I_0}$ and~$f:{A_0}^{I_0}\to {A_1}^{I_1}$. If a complex~$K$ generates~$L$, then the complex~$f_*(K)$ generates~$f(L)$, where~$f_*(K)$ is induced by the following sets, for~$S\in K$:
\[
f_*(S)=\bigcup_{j\in S}\W^f(S).
\]
\end{proposition}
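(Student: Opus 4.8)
The plan is to take a generating function for $L$, postcompose it with $f$, and verify that the resulting communication complex lies inside $f_*(K)$. Since $K$ generates $L$, I would fix $g:B^J\to{A_0}^{I_0}$ with $\im{g}=L$ and $K_g\subseteq K$, and set $h=f\circ g:B^J\to{A_1}^{I_1}$. Then $\im{h}=f(\im{g})=f(L)$, so by the definition of language generation it remains to prove $K_h\subseteq f_*(K)$.

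The key step is to control how input windows transform under this composition: I claim that for every output cell $i\in I_1$ one has $\W_h(i)\subseteq\bigcup_{j\in\W_f(i)}\W_g(j)$. To see this, note that if two inputs $x,y\in B^J$ agree on $W:=\bigcup_{j\in\W_f(i)}\W_g(j)$, then they agree on each $\W_g(j)$ with $j\in\W_f(i)$, hence $g_j(x)=g_j(y)$ for all such $j$; so $g(x)$ and $g(y)$ agree on $\W_f(i)$, and therefore $h_i(x)=f_i(g(x))=f_i(g(y))=h_i(y)$. Thus $W$ determines $h_i$, and since $\W_h(i)$ is by definition the smallest set with this property, the inclusion follows.

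It then remains to check that every simplex $T$ of $K_h$ is contained in $f_*(S)$ for some $S\in K$. I would pick $k\in\up[h]{T}$, so that $k\in\W_h(i)$ for each $i\in T$, and set $S:=\W^g(k)=\{j\in I_0:k\in\W_g(j)\}$; since $k\in\up[g]{S}$ by construction, $S$ is a simplex of $K_g$, hence of $K$. For each $i\in T$, the window inclusion above supplies some $j_i\in\W_f(i)$ with $k\in\W_g(j_i)$, i.e.~$j_i\in S$; and $j_i\in\W_f(i)$ is the same as $i\in\W^f(j_i)$, so $i\in\bigcup_{j\in S}\W^f(j)=f_*(S)$. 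Hence $T\subseteq f_*(S)\in f_*(K)$, as desired. The whole argument is bookkeeping on input windows, and the only points that need care are invoking the minimality of $\W_h(i)$ in the correct direction and noticing that $\W^g(k)$ is automatically a simplex of $K_g$; I do not expect any real obstacle. (Here $f_*$ is read as the monotone operation $S\mapsto\bigcup_{j\in S}\W^f(j)$ on subsets of $I_0$, and $f_*(K)$ as the complex obtained as the downward closure of the sets $f_*(S)$ for $S\in K$.)
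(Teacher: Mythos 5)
Your proof is correct. The paper states this proposition without giving a proof (it is imported from \cite{H25}), and your argument is the natural one that such a statement calls for: postcompose a generating function $g$ with $f$, establish the window inclusion $\W_{f\circ g}(i)\subseteq\bigcup_{j\in\W_f(i)}\W_g(j)$, and for a simplex $T$ of $K_{f\circ g}$ push a common input cell $k$ through $S=\W^g(k)\in K_g\subseteq K$ to get $T\subseteq f_*(S)$; you also correctly read the displayed formula as $f_*(S)=\bigcup_{j\in S}\W^f(j)$, fixing the statement's typo.
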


Let~$L\subseteq A^I$. We say that two distinct positions~$i,j\in I$ are \textbf{independent} w.r.t.~$L$ if for all~$a,b\in A$ such that there exist~$x,y\in L$ satisfying~$x_i=a$ and~$y_j=b$, there exists~$z\in L$ satisfying~$z_i=a$ and~$z_i=b$. Intuitively, when generating a string in~$L$, the values at these positions can be chosen independently, i.e.~with no communication between~$i$ and~$j$. Indeed, as proved in \cite[Corollary 2.4]{H25}, if~$i$ and~$j$ are independent w.r.t.~$L$, then the complex~$K=\comp{I\setminus \{i\},I\setminus \{j\}}$ generates~$L$. The idea is simple:~$i$ and~$j$ directly take their values from two distinct input cells that are visible by all the other output cells; those ones therefore know the values taken by~$i$ and~$j$ so they can collectively agree on some sequence in~$L$ extending these values.

In this article, we analyze the language~$\Mon_n\subseteq\{0,1\}^n$ which is the set of monotonic (i.e.~non-decreasing or non-increasing) binary sequences. Concretely,
\[
\Mon_n=\{0^{n-k}1^k:0\leq k< n\}\cup \{1^{n-k}0^{k}:0\leq k< n\}.
\]
Our goal is to identify all the simplicial complexes generating~$\Mon_n$ for any value of~$n$.

\subsection{Intervals}
It will be convenient to identify~$I_n=[0,n-1]$ with~$\Z/n\Z$, which can be graphically arranged in a circle. As we will see in Section \ref{sec_intervals}, the analysis of~$\Mon_n$ reveals that the simplices made of ``consecutive'' positions, that we call \emph{intervals}, play an important role.

More precisely, for~$a,b\in\Z$ we define~$\int{a,b}\subseteq I_n$ as follows:
\[
\int{a,b}=\{c\bmod n:c\in\Z, a\leq c\leq b'\},
\]
where~$b'\in\Z$ is the unique representative of~$b$ modulo~$n$ satisfying~$a\leq b<a+n$. An \textbf{interval} is~$\int{a,b}$ for some~$a,b\in\Z$. Note that~$\int{a,a+n-1}=I$ but~$\int{a,a+n}=\int{a,a}=\{a\}$. As the interval~$\int{a,b}$ only depends on the equivalence classes of~$a$ and~$b$ modulo~$n$, we will sometimes use the notation~$\int{a,b}$ where~$a,b\in\Z/n\Z$.

The size of~$\int{a,b}$ is~$1+((b-a)\bmod n)$, which is the representative modulo~$n$ of~$1+b-a$ in~$[1,n]$. For~$k\in [1,n]$, a \textbf{$k$-interval} is an interval of size~$k$.

Let~$\I,\J$ be two intervals. We write~$\I\bef\J$ if~$\I\cup \J=\int{a,d}$ where~$\I=\int{a,c}$ and~$\J=\int{b,d}$. 
Intuitively, it means that~$\I$ and~$\J$ overlap and~$\I$ is located ``before’’~$\J$. We give a simple criterion, illustrated in Figure \ref{fig_before}.
\begin{figure}[ht]
\centering
\includegraphics{Images/intervals-0}
\caption{$\I\bef\J$}\label{fig_before}
\end{figure}

\begin{lemma}\label{lem_bef}
Let~$a,b,c,d\in \Z$. If~$a\leq b\leq c\leq d<a+n$, then~$\int{a,c}\bef\int{b,d}$.
\end{lemma}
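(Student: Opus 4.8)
The plan is to unwind the definition of~$\int{\cdot,\cdot}$ and reduce the claim to a trivial fact about integer intervals. The first step is to observe that, under the hypothesis~$a\leq b\leq c\leq d<a+n$, the endpoint-normalization built into the notation is inactive for each interval involved. Concretely: for~$\int{a,c}$ the required representative of~$c$ in~$[a,a+n)$ is~$c$ itself (since~$a\leq c<a+n$); for~$\int{b,d}$ it is~$d$ itself (since~$b\leq d$ and~$d<a+n\leq b+n$); and for~$\int{a,d}$ it is again~$d$ (since~$a\leq d<a+n$). Hence
\[
\int{a,c}=\{x\bmod n: a\leq x\leq c\},\quad \int{b,d}=\{x\bmod n: b\leq x\leq d\},\quad \int{a,d}=\{x\bmod n: a\leq x\leq d\}.
\]

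The second step is purely elementary: since~$a\leq b\leq c\leq d$, the integer intervals satisfy~$[a,c]\cup[b,d]=[a,d]$ (the two blocks overlap on the nonempty segment~$[b,c]$). Applying the reduction map~$x\mapsto x\bmod n$ to both sides yields~$\int{a,c}\cup\int{b,d}=\int{a,d}$. Taking the representations~$\I=\int{a,c}$ and~$\J=\int{b,d}$, this is exactly the condition defining~$\I\bef\J$, so we are done. The picture to keep in mind is Figure~\ref{fig_before}.

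I do not expect a genuine obstacle here; the only point needing care is the first step, namely checking that the chosen representatives really are~$c,d,d$ and not some shifted copies — this is where the strict inequality~$d<a+n$ matters, and where one must keep the normalization window for~$\int{b,d}$ (which is~$[b,b+n)$) distinct from that for~$\int{a,d}$ (which is~$[a,a+n)$). Once that bookkeeping is in place, the statement is immediate, so this lemma is best viewed as a sanity check that the circular-interval notation behaves as intended.
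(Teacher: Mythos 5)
Your proof is correct and follows essentially the same route as the paper: you verify that under the given inequalities each of $\int{a,c}$, $\int{b,d}$, $\int{a,d}$ is the image of the corresponding integer interval under reduction mod~$n$ (the paper phrases this as $f([x,y])=\int{x,y}$ whenever $x\leq y<x+n$), and then push the identity $[a,c]\cup[b,d]=[a,d]$ through the reduction map. No gaps.
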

\begin{proof}
As~$a\leq b\leq c\leq d$, the intervals in~$\Z$ satisfy~$[a,c]\cup [b,d]=[a,d]$.

Let~$f:\Z\to\Z/n\Z$ be the function sending a number to its equivalence class modulo~$n$. For a pair~$(x,y)\in\Z^2$, if~$x\leq y<x+n$, then~$f([x,y])=\int{x,y}$.

The conditions of the lemma imply that the pairs~$(a,c)$,~$(b,d)$ and~$(a,d)$ satisfy these inequalities, therefore~$\int{a,c}\cup \int{b,d}=f([a,c])\cup f([b,d])=f([a,c]\cup [b,d])=f([a,d])=\int{a,d}$.
\end{proof}

When using Lemma \ref{lem_bef} to show that~$\int{a,c}\bef\int{b,d}$, we will summarize the inequalities that need to be checked by the following diagram:
\begin{center}
\begin{tikzcd}
a\arrow[r]&b\dlArrow\\
c\arrow[r]&d\ulArrow
\end{tikzcd}
\end{center}
where a plain arrow~$a\to b$ means~$a\leq b$ and a dashed arrow~$d\dashrightarrow a$ means~$d<a+n$. An advantage of these diagrams will become apparent in the proofs of Lemmas \ref{lem_technical} and \ref{lem_b}, because they can be concatenated and summarize a large quantity of inequalities in a condensed way.
\subsection{Graphical representation of complexes}
The simplicial complexes encountered in this study usually have high dimensions, so can hardly be visualized. Fortunately, all the simplicial complexes considered in this article will be generating by intervals, and therefore have a simple graphical representation. We illustrate this representation on the next example, that we will meet again in Section \ref{sec_K5}.

Let~$K_5$ be the simplicial complex over~$I_5=[0,4]$ be defined by
\[
K_5=\comp{\int{0,2},\int{1,3},\int{2,4},\int{3,1}}.
\]
This complex can be visualized as shown in Figure \ref{fig_K50}, in which the first column lists the vertices in~$[0,n-1]$ and the next columns represent the maximal intervals. This representation is closely related to the visibility diagram of a function, but is more condensed because it only shows the maximal simplices, and shows each of them exactly once.
\begin{figure}[ht]
\centering
\includegraphics{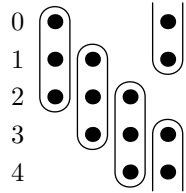}
\caption{A graphical representation of~$K_5$
}\label{fig_K50}
\end{figure}

\subsection{Symmetries}
The language~$\Mon_n$ has two types of symmetries. For~$b\in\{0,1\}$, let~$\overline{b}=1-b$. If~$x=x_0\ldots x_{n-1}$ is monotonic, then:
\begin{itemize}
\item Its reflection~$x_{n-1}\ldots x_0$ is monotonic,
\item The sequence~$\overline{x_{n-1}}x_0\ldots x_{n-2}$, obtained by applying a circular permutation and a bit flip, is monotonic.
\end{itemize}

These symmetries are captured by the action of the dihedral group
\[
\Di_{2n}=\left\langle r,s|r^{2n}=s^2=(sr)^2=1\right\rangle
\]
on~$\{0,1\}^n$, defined as follows:
\begin{align*}
r\cdot (x_0\ldots x_{n-1})&=\overline{x_{n-1}} x_0 \ldots x_{n-2},\\
s\cdot (x_0\ldots x_{n-1})&=x_{n-1}\ldots x_0.
\end{align*}

The action of~$\Di_{2n}$ on~$\Mon_n$ is transitive, i.e.~if~$x,y\in\Mon_n$ then there exists~$g\in \Di_{2n}$ such that~$y=g\cdot x$. Indeed, the orbit of~$0^n$ under the action of~$r$ visits every element of~$\Mon_n$:
\newcommand{\tor}{\to}
\[
0^n\tor 10^{n-1}\tor \ldots\tor 1^{n-1}0\tor 1^n\tor 01^{n-1}\tor \ldots\tor 0^{n-1}1\tor 0^{n}.
\]
We will use transitivity in some arguments, proving a result for one sequence and obtaining the full result by symmetry.

The symmetries of~$\Mon_n$ induce symmetries on the class of simplicial complexes generating~$\Mon_n$. To state it precisely, we consider the action of~$\Di_{2n}$ on~$I_n$:
\begin{align*}
r\cdot i&=i+1\bmod n,\\
s\cdot i&=n-1-i.
\end{align*}
The two actions are closely related:~$(g\cdot x)_i$ is a function of~$g,i$ and~$x_{g^{-1}\cdot i}$, i.e.~does not depend on the values of~$x$ at other positions.
 
If~$K$ is a simplicial complex over~$I_n$ and~$g\in \Di_{2n}$, then let~$g\cdot K=\{g\cdot S:S\in K\}$ where~$g\cdot S=\{g\cdot i:i\in S\}$ for~$S\subseteq I_n$.
\begin{proposition}[Symmetries]\label{prop_mon_symmetries}
If a complex~$K$ generates~$\Mon_n$, then for each~$g\in \Di_{2n}$, the complex~$g\cdot K$ generates~$\Mon_n$.
\end{proposition}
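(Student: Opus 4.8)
The plan is to exhibit, for each $g \in \Di_{2n}$, a generating function for $\Mon_n$ whose communication complex is contained in $g \cdot K$, starting from a function witnessing that $K$ generates $\Mon_n$. Since $\Di_{2n}$ is generated by $r$ and $s$, it suffices to handle these two generators and then compose. So first I would fix $f : B^J \to \{0,1\}^n$ with $\im{f} = \Mon_n$ and $K_f \subseteq K$, and then construct, for $g \in \{r, s\}$, a function $f^g$ with $\im{f^g} = \Mon_n$ and $K_{f^g} \subseteq g \cdot K$.

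The key observation, already highlighted in the paragraph preceding the proposition, is that the action of $\Di_{2n}$ on $\Mon_n$ is \emph{position-local}: $(g \cdot x)_i$ depends only on $g$, $i$, and the single coordinate $x_{g^{-1}\cdot i}$. Concretely, define $f^g : B^J \to \{0,1\}^n$ by $f^g(y) = g \cdot f(y)$. Then $\im{f^g} = g \cdot \im{f} = g \cdot \Mon_n = \Mon_n$, using that $\Di_{2n}$ preserves $\Mon_n$. It remains to bound $K_{f^g}$. For output cell $i$, the value $(f^g(y))_i = (g\cdot f(y))_i$ is a function of $(f(y))_{g^{-1}\cdot i}$ alone, hence the input window satisfies $\W_{f^g}(i) \subseteq \W_f(g^{-1}\cdot i)$. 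Therefore, for any $S \subseteq I_n$,
\[
\up[f^g]{S} = \bigcap_{i \in S} \W_{f^g}(i) \subseteq \bigcap_{i \in S} \W_f(g^{-1}\cdot i) = \up[f]{g^{-1}\cdot S},
\]
so if $S \in K_{f^g}$ then $g^{-1}\cdot S \in K_f \subseteq K$, which means $S \in g \cdot K$. This gives $K_{f^g} \subseteq g \cdot K$, as required.

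Finally I would note that the map $K \mapsto g \cdot K$ respects composition in $\Di_{2n}$, i.e.\ $(gh)\cdot K = g\cdot(h\cdot K)$, and that the construction above is compatible with this: applying the argument for $g$ to a generator $f^h$ of $\Mon_n$ with $K_{f^h}\subseteq h\cdot K$ yields a generator of $\Mon_n$ with communication complex contained in $g\cdot(h\cdot K) = (gh)\cdot K$. Writing an arbitrary $g\in\Di_{2n}$ as a word in $r$ and $s$ and iterating then settles the general case. I do not expect a genuine obstacle here; the only point requiring a little care is the inclusion $\W_{f^g}(i)\subseteq\W_f(g^{-1}\cdot i)$, which relies precisely on the position-locality of the two symmetry actions and the fact that $\W_f$ is the \emph{smallest} determining window — but this is exactly the statement already recorded in the excerpt, so the verification is routine.
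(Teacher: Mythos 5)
Your proof is correct and takes essentially the same route as the paper: compose a generating function with the map $x\mapsto g\cdot x$ and use the position-locality of the action — $(g\cdot x)_i$ depends only on $x_{g^{-1}\cdot i}$ — to conclude that the communication complex is carried into $g\cdot K$. The paper merely packages your window computation as an instance of Proposition \ref{prop_image} and observes that the locality holds for every $g\in\Di_{2n}$ at once, so your reduction to the generators $r$ and $s$ is harmless but unnecessary.
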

\begin{proof}
Let~$g\in \Di_{2n}$ and consider the function~$f:x\mapsto g\cdot x$. In order to evaluate~$f_i(x)$, one only needs to know the value of~$x$ at position~$g^{-1}\cdot i$: it is either~$x_{g^{-1}\cdot i}$ or~$\overline{x_{g^{-1}\cdot i}}$ depending on~$g$ and~$i$. Therefore, one has~$\W_f(i)=\{g^{-1}\cdot i\}$, hence~$\W^f(j)=\{g\cdot j\}$ and~$f_*(K)=g\cdot K$ for any complex~$K$. Proposition \ref{prop_image} implies that if~$K$ generates~$\Mon_n$, then~$f_*(K)=g\cdot K$ generates~$f(\Mon_n)=\Mon_n$.
\end{proof}

A similar result was proved in \cite[Proposition 2.6]{H25}, assuming that the symmetry group acts on the language by permuting the symbols. Here, the action also flips the values of the symbols, but the argument is essentially the same.
%

\section{Examples of complexes generating \texorpdfstring{$\Mon_n$}{Mon}}\label{sec_mon_examples}
In this section, we identify certain complexes generating~$\Mon_n$, and will later prove their minimality. These examples are particular cases and we currently do not have a complete classification of the minimal complexes generating~$\Mon_n$ for any~$n\in\N$.

The first observation is that for~$n\geq 2$, any pair of distinct elements~$i,j\in I_n$ are independent w.r.t.~$\Mon_n$, because any assignment of binary values to~$i$ and~$j$ can be extended into a monotonic sequence. As explained in Section \ref{sec_back}, it implies that the complex~$\comp{I_n\setminus\{i\},I_n\setminus\{j\}}$ generates~$\Mon_n$. 
Once we have developed the necessary machinery, we will be able to prove that this complex is minimal.

It turns out that for~$n\geq 5$ there are other complexes generating~$\Mon_n$. We give three examples, for~$n=5,7$ and~$8$. The corresponding generation procedures were found by first rejecting other complexes and then designing a corresponding generation procedure by trial and error. We will see later that these complexes are minimal generating~$\Mon_n$.

\subsection{A complex generating \texorpdfstring{$\Mon_5$}{Mon5}}\label{sec_K5}
For~$n=5$, let
\[
K_5=\comp{\int{0,2},\int{1,3},\int{2,4},\int{3,1}}
\]
be the complex that was already shown in Figure \ref{fig_K50}, and that we repeat here in Figure \ref{fig_K5}. We show that it generates~$\Mon_5$.
\begin{figure}[ht]
\centering
\includegraphics{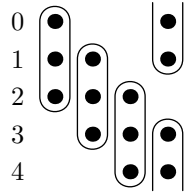}
\caption{The complex~$K_5$
}\label{fig_K5}
\end{figure}
\begin{proposition}\label{prop_K5}
The complex~$K_5$ generates~$\Mon_5$.
\end{proposition}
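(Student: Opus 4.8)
The plan is to exhibit an explicit generating function $f:B^J\to\{0,1\}^5$ and verify two things: that $\im{f}=\Mon_5$, and that $K_f\subseteq K_5$. By Proposition~\ref{prop_canonical} it is natural to take $B=\Mon_5$ and $J$ the set of four maximal simplices $\{\int{0,2},\int{1,3},\int{2,4},\int{3,1}\}$; an input is then a $4$-tuple $(u,v,w,z)\in\Mon_5^4$, one string per maximal interval, and each output cell $i\in I_5$ must compute its bit $f_i$ reading only those coordinates of the input indexed by simplices containing $i$. Concretely, cell $0$ sees $(u,z)$ (the intervals $\int{0,2}$ and $\int{3,1}$), cell $1$ sees $(u,v,z)$, cell $2$ sees $(u,v,w)$, cell $3$ sees $(v,w,z)$, and cell $4$ sees $(w,z)$. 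The constraint $K_f\subseteq K_5$ is then automatically satisfied as long as each local rule $f_i$ only depends on the listed coordinates, so the real work is to design the rules so that the image is exactly $\Mon_5$.

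For the image computation, there are two directions. The inclusion $\Mon_5\subseteq\im{f}$ is handled by requiring, as in Proposition~\ref{prop_canonical}, that $f(x,x,x,x)=x$ for every $x\in\Mon_5$: feeding the same monotonic string into all four input cells reproduces it. The harder inclusion is $\im{f}\subseteq\Mon_5$: for every choice of four monotonic strings $(u,v,w,z)$, the resulting output must again be monotonic. I would organize this by thinking of each local rule as a small ``majority/consensus'' decision — for instance cell $i$ could default to copying a bit from one designated input but override it when the other visible inputs force monotonicity — and then do a finite case analysis on the ``shapes'' of $u,v,w,z$, i.e. on the positions of their $0$-to-$1$ or $1$-to-$0$ transitions. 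Because $|\Mon_5|=8$, there are only $8^4=4096$ input tuples, and the symmetry group $\Di_{10}$ (Proposition~\ref{prop_mon_symmetries}) cuts this down substantially: it suffices to check representatives of the orbits of input tuples, and one may further normalize by cyclically rotating which interval is ``first.''

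The main obstacle is purely the design of the four local rules so that they are simultaneously consistent — i.e. that the overlaps $\int{0,2}\cap\int{1,3}=\{1,2\}$, $\int{1,3}\cap\int{2,4}=\{2,3\}$, etc., transmit enough information around the $5$-cycle of intervals to rule out every non-monotonic output, while each cell still respects its restricted window. I expect the right rule to be something like: each cell outputs the value that the ``most informed'' visible interval assigns to its position, with ties or conflicts broken in favor of extending a monotone pattern; but pinning down the exact tie-breaking that works for all $4096$ inputs is the delicate step, and is presumably where the ``trial and error'' mentioned in the text was spent. Once a candidate rule is fixed, verifying $\im{f}\subseteq\Mon_5$ reduces to a bounded check that can be done by hand using the symmetry reduction, or mechanically with the tool of~\cite{Prover}.
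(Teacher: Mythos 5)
Your write-up is a plan, not a proof: the entire content of Proposition~\ref{prop_K5} is the existence of a generating function, and you never exhibit one. The framework you set up is fine --- taking $B=\Mon_5$, $J$ the four maximal intervals, and letting each cell read only the simplices containing it does guarantee $K_f\subseteq K_5$, and imposing $f(x,x,x,x)=x$ handles $\Mon_5\subseteq\im{f}$ --- but the step you defer (``pinning down the exact tie-breaking that works for all $4096$ inputs'') is precisely the nontrivial claim being proved. Saying that some ``most informed interval with monotone tie-breaking'' rule presumably works is not an argument; a priori it is entirely possible that no assignment of local rules on this complex has image exactly $\Mon_5$ (indeed, for other complexes of the same general shape, such as $K_6$ for $\Mon_6$, the paper leaves the analogous existence question open). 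So there is a genuine gap: the construction itself is missing, and with it the verification $\im{f}\subseteq\Mon_5$, which cannot even be started without a concrete $f$. A secondary gap: your proposed reduction of the check to orbit representatives under $\Di_{10}$ is only legitimate if the candidate $f$ is equivariant under (part of) that group, which is an extra design constraint you would have to build in and verify, not something that comes for free.

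For comparison, the paper's proof does exactly the work you postpone. It uses a leaner encoding (five binary input cells $a,\dots,e$, one per position, with windows chosen so that the communication complex is $K_5$) and explicit local rules built from the ternary majority function, with a deliberate asymmetry --- $B$ votes with the \emph{corrected} value $A$ rather than the raw input $a$, $D$ votes with $E$, and $A$, $E$ negate the wrapped-around inputs $e$, $a$ --- and then verifies $\im{f}=\Mon_5$ by noting $f$ fixes monotonic inputs, commutes with the global bit flip (the only symmetry it actually uses, and one that is checked for these specific rules), and reduces the remaining check to a short table with $a=0$. The non-obvious structure of those rules is a good indication that the ``copy with monotone override'' heuristic you sketch would not fall out of a routine search without further ideas.
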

\begin{proof}
It is convenient to rename the elements of~$[0,4]$ as~$A,B,C,D,E$. We design a generation procedure which is a function~$f:\{0,1\}^{\{a,b,c,d,e\}}\to \{0,1\}^{\{A,B,C,D,E\}}$ that takes a binary sequence as input and corrects it to make it monotonic. Its visibility diagram is shown in Figure \ref{fig_vis_K5} (a square in row~$A$ and column~$a$ indicates that~$A$ reads~$a$, i.e.~$a\in\W_f(A)$ or equivalently~$A\in\W^f(a)$).

\begin{figure}[ht]
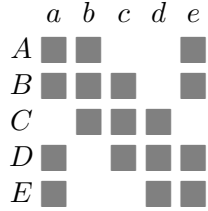

\centering
\begingroup
\arraycolsep=1pt
\begin{equation*}
\begin{array}{cccccc}
&a&b&c&d&e\\
A&\slot&\slot&&&\slot\\
B&\slot&\slot&\slot&&\slot\\
C&&\slot&\slot&\slot&\\
D&\slot&&\slot&\slot&\slot\\
E&\slot&&&\slot&\slot
\end{array}
\end{equation*}
\endgroup
\caption{The visibility diagram of a function generating~$\Mon_5$}\label{fig_vis_K5}
\end{figure}

The function~$f$ uses the ternary majority function~$\maj:\{0,1\}^3\to \{0,1\}$, which corrects the lack of monotonicity of its input, in the sense that the function
\begin{align*}
\{0,1\}^3&\to\{0,1\}^3\\
(x,y,z)&\mapsto (x,\maj(x,y,z),z)
\end{align*}
always outputs a monotonic sequence, and leaves the sequence unchanged if it is already monotonic. Observe that~$\maj$ commutes with bit flips:
\[
\maj(\overline{x},\overline{y},\overline{z})=\overline{\maj(x,y,z)}.
\]



The output cells have the following local rules: 
\begin{align*}
\mathtt{A(e,a,b)}&\mathtt{=\ma(\overline{e},a,b)}\\
\mathtt{B(e,a,b,c)}&\mathtt{=\ma(A,b,c)}\\
\mathtt{C(b,c,d)}&\mathtt{=\ma(b,c,d)}\\
\mathtt{D(c,d,e,a)}&\mathtt{=\ma(c,d,E)}\\
\mathtt{E(d,e,a)}&\mathtt{=\ma(d,e,\overline{a})},
\end{align*}
where~$\mathtt{\ma(A,b,c)}$ means~$\mathtt{\ma(A(e,a,b),b,c)=\ma(\ma(\overline{e},a,b),b,c)}$, and similarly for~$\mathtt{\ma(c,d,E)}$. Intuitively, each output cells reads the corresponding input cell as well as its neighbors and applies a majority vote, with the subtlety that~$B$ uses the output value~$A$ rather than the input value~$a$, and~$D$ uses the output value~$E$ rather than the input value~$e$.

\paragraph{Verification.} Let~$f:\{0,1\}^{\{a,b,c,d,e\}}\to \{0,1\}^{\{A,B,C,D,E\}}$ be the global function derived from these local rules. Its communication complex is~$K_5$ by construction, we check that its image is~$\Mon_5$.

First, if~$abcde$ is monotonic, then~$f(abcde)=abcde$, so~$\im{f}$ contains~$\Mon_5$. Conversely, we show that every output of~$f$ is monotonic. As~$f$ commutes with bit flips, i.e.~$f(\overline{abcde})=\overline{f(abcde)}$, it is sufficient to show the result when~$a=0$. A partial evaluation of~$f$ is shown in the next table, assuming~$a=0$:
%

\[
\begin{array}{l|l}
\mathtt{e=0}&\mathtt{e=1}\\
\hline
\mathtt{A=b}&\mathtt{A=0}\\
\mathtt{B=b}&\mathtt{B=b\land c}\\
\mathtt{C=\ma(b,c,d)}&\mathtt{C=\ma(b,c,d)}\\
\mathtt{D=d}&\mathtt{D=c\lor d}\\
\mathtt{E=d}&\mathtt{E=1}
\end{array}
\]
One easily checks that~$ABCDE$ is always monotonic.
\end{proof}

\subsection{A complex generating \texorpdfstring{$\Mon_7$}{Mon7}}

For~$n=7$, the complex
\[
K_7=\{\int{0,4},\int{1,5},\int{2,6},\int{4,1},\int{5,2}\}
\]
shown in Figure \ref{fig_K7} generates~$\Mon_7$.
\begin{figure}[ht]
\centering
\includegraphics{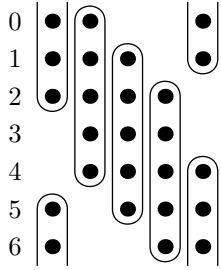}
\caption{The complex~$K_7$}
\label{fig_K7}
\end{figure}

\begin{proposition}\label{prop_K7}
The complex~$K_7$ generates~$\Mon_7$.
\end{proposition}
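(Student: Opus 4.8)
The plan is to mimic the construction used for $K_5$ in Proposition~\ref{prop_K5}: exhibit an explicit function $f:\{0,1\}^{\{a,b,c,d,e,f,g\}}\to\{0,1\}^{\{A,\dots,G\}}$ whose communication complex is contained in $K_7$ and whose image is exactly $\Mon_7$. The five maximal simplices $\int{0,4},\int{1,5},\int{2,6},\int{4,1},\int{5,2}$ of $K_7$ are the $5$-intervals starting at positions $0,1,2,4,5$, so each output cell lies in several of them and may therefore read the corresponding input cells; the delicate point is to use only the reads permitted by these intervals. As in the $K_5$ case, I would have each output cell apply a majority-type vote over a window of consecutive positions, with a few cells reading the \emph{output} of a neighbour rather than the raw input, so that the correction propagates. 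Since $\maj$ commutes with bit flips and $\Mon_7$ is closed under the global bit flip, it suffices to verify monotonicity of the output under the assumption that one chosen input bit is $0$, which cuts the case analysis in half.

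Concretely, first I would check that $\im f\supseteq\Mon_7$: the local rules should be chosen so that a monotonic input is returned unchanged (each $\maj(x,y,z)$ with $x\le y\le z$ or $x\ge y\ge z$ returns $y$, and the output-reading substitutions are consistent in that case). This is the easy inclusion. Second, and this is the main content, I would verify $\im f\subseteq\Mon_7$, i.e.\ that every output string $ABCDEFG$ is non-decreasing or non-increasing. Following the $K_5$ template, I would fix (say) $a=0$ and then split further on one or two more input bits — the bits at the ``ends'' of the circular arrangement, which are the ones whose influence is least localized — producing a small table of partial evaluations of $A,\dots,G$ as Boolean formulas in the remaining inputs, exactly analogous to the $\mathtt{e=0}$ / $\mathtt{e=1}$ table in the proof of Proposition~\ref{prop_K5}. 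In each branch one reads off that the resulting tuple is monotonic.

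The main obstacle is finding local rules that simultaneously (i) respect the window constraints imposed by the five $5$-intervals of $K_7$, (ii) fix monotonic inputs, and (iii) force every output to be monotonic. This is a genuine design problem rather than a routine verification: with windows of size only $5$ on a circle of $7$ cells, no single cell sees the whole string, so the ``correction'' must be routed through output-to-output reads (as with $B$ using $A$ and $D$ using $E$ in the $K_5$ construction), and one must argue that these longer chains of substitutions cannot create a non-monotonic pattern such as $0\cdots010\cdots$ or $1\cdots101\cdots$. Once suitable rules are in hand — and the article states they were found by trial and error — the remaining work is the finite case check described above, which, after exploiting the bit-flip symmetry and possibly the rotational symmetry of Proposition~\ref{prop_mon_symmetries}, reduces to a manageable table. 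I would present the local rules explicitly, display the visibility diagram to confirm $K_f\subseteq K_7$, and then give the case table, concluding that $\im f=\Mon_7$.
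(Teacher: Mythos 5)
There is a genuine gap: your text is a plan for a proof, not a proof. The proposition is an existence statement — one must actually exhibit a function $f$ with $\im{f}=\Mon_7$ and $K_f\subseteq K_7$ — and the explicit local rules are precisely the content you defer (``once suitable rules are in hand\dots''). Nothing in your argument establishes that such rules exist; you yourself note that the design problem is the main obstacle, and it is not resolved by appealing to the $K_5$ template, since the same template applied naively to other complexes (e.g.\ $K_6$, for which the question is open in the paper) need not succeed. The paper closes this gap by writing the rules down: it uses only five input cells $a,b,d,f,g$, one per maximal interval, and sets $A=\maj(\overline{g},a,b)$, $B=\maj(A,b,d)$, $C=\maj(a,b,D)$, $D=\maj(b,d,f)$, $E=\maj(D,f,g)$, $F=\maj(d,f,G)$, $G=\maj(f,g,\overline{a})$, then verifies monotonicity of all outputs by a case split on $a,g,d$ and verifies surjectivity by listing an explicit preimage for each monotonic output starting with $0$.

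A secondary issue: you assume a seven-bit input $a,\dots,g$ in bijection with the output cells, so that surjectivity follows from ``monotonic inputs are fixed.'' The paper's construction does not have this shape — inputs $c$ and $e$ are absent, so no input--output bijection exists and surjectivity cannot be obtained by fixed points alone; it requires the separate table of preimages (checked via the values of two consecutive output cells that pin down the monotonic extension). If you insist on seven inputs you must additionally check that every dual window $\W^f(j)$ lies inside one of the five $5$-intervals of $K_7$, a constraint your sketch acknowledges but never discharges. In short, the strategy is the right one, but the proposition's proof is exactly the part you have not supplied.
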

\begin{proof}
We define a function~$f:\{0,1\}^{\{a,b,d,f,g\}}\to \{0,1\}^{\{A,B,C,D,E,F,G\}}$, whose visibility diagram is shown in Figure \ref{fig_visibility_7}.
\begin{figure}[ht]
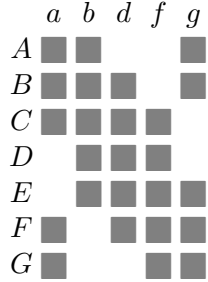

\begingroup
\arraycolsep=1pt
\begin{equation*}
\begin{array}{cccccc}
&a&b&d&f&g\\
A&\slot&\slot&&&\slot\\
B&\slot&\slot&\slot&&\slot\\
C&\slot&\slot&\slot&\slot&\\
D&&\slot&\slot&\slot&\\
E&&\slot&\slot&\slot&\slot\\
F&\slot&&\slot&\slot&\slot\\
G&\slot&&&\slot&\slot
\end{array}
\end{equation*}
\endgroup
\caption{The visibility diagram of a function generating~$\Mon_7$}\label{fig_visibility_7}
\end{figure}

As in Proposition \ref{prop_K5}, each output cell applies a majority vote, using a mixture of input and output values:
\begin{align*}
\mathtt{A(g,a,b)}&\mathtt{=\ma(\overline{g},a,b)}\\
\mathtt{B(g,a,b,d)}&\mathtt{=\ma(A,b,d)}\\
\mathtt{C(a,b,d,f)}&\mathtt{=\ma(a,b,D)}\\
\mathtt{D(b,d,f)}&\mathtt{=\ma(b,d,f)}\\
\mathtt{E(b,d,f,g)}&\mathtt{=\ma(D,f,g)}\\
\mathtt{F(d,f,g,a)}&\mathtt{=\ma(d,f,G)}\\
\mathtt{G(f,g,a)}&\mathtt{=\ma(f,g,\overline{a})}.
\end{align*}

\paragraph{Verification.}
We first show that the procedure only generates monotonic sequences. As it commutes with bit flips, we only need to consider the case~$a=0$. A partial evaluation of the function is presented in the next table. Note that~$\maj(x,y,0)=x\land y$ and~$\maj(x,y,1)=x\lor y$.  
\[
\begin{array}{l|l|l}
\text{If }\mathtt{g=1}&\multicolumn{2}{c}{\text{If }\mathtt{g=0}}\\
&\text{and }\mathtt{d=0}&\text{and }\mathtt{d=1}\\
\hline
\mathtt{A=0}&\mathtt{A=b}&\mathtt{A=b}\\
\mathtt{B=b\land d}&\mathtt{B=b}&\mathtt{B=b}\\
\mathtt{C=b\land (d\lor f)}&\mathtt{C=b\land f}&\mathtt{C=b}\\
\mathtt{D=\ma(b,d,f)}&\mathtt{D=b\land f}&\mathtt{D=b\lor f}\\
\mathtt{E=(b\land d)\lor f}&\mathtt{E=b\land f}&\mathtt{E=f}\\
\mathtt{F=d\lor f}&\mathtt{F=f}&\mathtt{F=f}\\
\mathtt{G=1}&\mathtt{G=f}&\mathtt{G=f}
\end{array}
\]
In any case, one easily checks that the output sequence is monotonic. Conversely, we show that every monotonic sequence is reached. As the function commutes with bit flips, it is sufficient to check that every sequence starting with~$\mathtt{A=0}$ is reached. The next table shows that every such sequence is the image of some input sequence. In order to check that the image of a particular input is a particular output, say $\mathtt{0001111}$, it is sufficient to check that~$\mathtt{CD=01}$ because then~$\mathtt{0001111}$ is its unique monotonic extension. In the next table, we underline the output values that need to be checked:
\begin{center}
\begin{tabular}{c|c}
\texttt{abdfg}&\texttt{ABCDEFG}\\
\hline
\texttt{00000}&\texttt{\underline{0}00000\underline{0}}\\
\texttt{00001}&\texttt{00000\underline{01}}\\
\texttt{00010}&\texttt{0000\underline{01}1}\\
\texttt{00011}&\texttt{000\underline{01}11}\\
\texttt{00110}&\texttt{00\underline{01}111}\\
\texttt{01011}&\texttt{0\underline{01}1111}\\
\texttt{01111}&\texttt{\underline{01}11111}
\end{tabular}
\end{center}
%
%
\end{proof}

\subsection{A complex generating \texorpdfstring{$\Mon_8$}{Mon8}}\label{sec_Mon8}

For~$n=8$, the complex
\[
K_8=\comp{\int{0,5},\int{2,7},\int{4,1},\int{6,3}}
\]
shown in Figure \ref{fig_K8} generates~$\Mon_8$.
\begin{figure}[ht]
\centering
\includegraphics{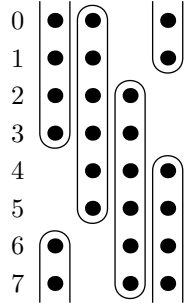}
\caption{The complex~$K_8$}
\label{fig_K8}
\end{figure}
\begin{proposition}\label{prop_K8}
The complex~$K_8$ generates~$\Mon_8$.
\end{proposition}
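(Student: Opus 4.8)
The plan is to follow exactly the template set by the proofs of Propositions \ref{prop_K5} and \ref{prop_K7}: exhibit an explicit generation function $f:\{0,1\}^J\to\{0,1\}^8$ whose communication complex is contained in $K_8$, and check that $\im{f}=\Mon_8$. Concretely, I would take $J$ to be a set of five input cells (one fewer than $8$ is plausible here, mimicking the $K_7$ case where two output cells share no private input), index the maximal intervals $\int{0,5},\int{2,7},\int{4,1},\int{6,3}$, and for each output cell $i\in I_8$ design a local rule that reads only the input cells lying in a common maximal simplex with $i$, and outputs a value determined by a ternary majority vote $\maj$ on a mixture of input bits and already-computed output bits of its neighbours. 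The key point that makes the communication complex land inside $K_8$ is that cell $i$ only ever consults input cells $j$ with $\{i,j\}$ contained in one of the four listed intervals; chaining through a neighbour's output value is free because that neighbour's window is itself a subset of a common simplex. The visibility diagram should be displayed as an $8\times 5$ matrix of $\slot$ symbols, exactly as in Figures \ref{fig_vis_K5} and \ref{fig_visibility_7}.

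First I would pin down the input cells and the local rules, exploiting the dihedral symmetry: since $K_8=\comp{\int{0,5},\int{2,7},\int{4,1},\int{6,3}}$ is invariant under $r^2$ (rotation by $2$) and under the reflection $s$, it is natural to make the rules equivariant under the order-$8$ subgroup generated by $r^2$ and $s$ (note $r^2\cdot i=i+2$, and recall $r$ also flips a bit, so $r^2$ does not), so that only two or three rule shapes need to be written down and verified. Each interval has length $6$, which is generous, so a cell near the "middle" of an interval can take a plain $\maj$ of itself and its two immediate neighbours, while cells near the ends borrow a corrected output value from across, as $B$ and $D$ do in Proposition \ref{prop_K5}. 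I would then record that $\maj$ fixes monotonic triples and commutes with bit flips, so that $f$ commutes with the global bit flip and fixes every monotonic sequence; hence $\Mon_8\subseteq\im{f}$ immediately, and $K_f\subseteq K_8$ holds by construction.

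The substantive half is the two inclusions of the image. For $\im{f}\subseteq\Mon_8$, using the bit-flip symmetry I would reduce to inputs with $a=0$ (or whatever the first cell is), and then, exactly as in the $n=7$ table, do a case split on one or two further input bits and present a partial-evaluation table whose every column is manifestly monotonic (using $\maj(x,y,0)=x\land y$ and $\maj(x,y,1)=x\lor y$ to collapse the majorities into conjunctions and disjunctions). For $\Mon_8\supseteq\im{f}$ — which is already handled by the "fixes monotonic sequences" remark — I would instead, to keep the write-up parallel to Proposition \ref{prop_K7}, also give the surjectivity table: list the $8$ monotonic sequences starting with $0$ (i.e.\ $0^81^0,\dots,0^11^7$, and then the cyclic continuation absorbed by the flip), exhibit for each a short input word, and underline the two or three output coordinates whose values force the unique monotonic extension.

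The main obstacle I anticipate is purely combinatorial bookkeeping: finding local rules that simultaneously (i) respect the four windows dictated by $K_8$, (ii) agree pairwise on overlaps so that the chained output-value references are well-defined and acyclic, and (iii) actually produce every monotonic string rather than getting "stuck" on, say, the balanced sequence $0^41^4$ or its rotations. With intervals of length only $6$ in an $8$-cell circle, the two "blind spots" of each interval are tight, so a naive majority rule will likely fail surjectivity somewhere and the rule table will need a few hand-tuned exceptions (as already happens with $B$ and $D$ in $K_5$); isolating those exceptions is where the trial-and-error alluded to in the text really bites. Once a consistent rule set is in hand, both verification tables are short finite checks that a reader — or the prover of \cite{Prover} — can confirm mechanically.
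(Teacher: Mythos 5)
The central content of Proposition \ref{prop_K8} is the existence of a concrete function $f$ with $\im{f}=\Mon_8$ and $K_f\subseteq K_8$, and your proposal never produces one. Everything you write is a plan for how such a function ought to look (majority-style local rules on an $8\times 5$ visibility diagram, equivariance under $r^2$ and $s$, a partial-evaluation table, a surjectivity table), but the step you yourself identify as the main obstacle --- actually finding local rules that respect the four windows, chain consistently through neighbours' outputs, and still reach every monotonic string --- is exactly the proposition, and you leave it to ``trial and error.'' A proof of an existence statement of this kind cannot defer the construction; until the rule table is written down and the two finite checks are carried out, there is no argument, only the (reasonable) expectation that one exists because $K_5$ and $K_7$ worked similarly.

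It is also worth knowing that the paper does \emph{not} obtain $K_8$ by the $K_5$/$K_7$ template you are imitating. Its proof groups the eight cells into four pairs over the alphabet $\Sigma=\{00,01,10,11\}$, so the generating function is $\phi:\Sigma^{\{ab,cd,ef,gh\}}\to\Sigma^{\{AB,CD,EF,GH\}}$ with a $4\times 4$ visibility diagram matching the four maximal intervals $\int{0,5},\int{2,7},\int{4,1},\int{6,3}$; each output pair applies a correction rule $\rho:\Sigma^3\to\Sigma$ defined by cases on the number of constant blocks among its three arguments (with bit flips on the wrap-around arguments), and the verification splits on that count, using as a sub-claim that the plain majority-with-flips function generates $\Mon_4$. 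The author explicitly remarks that this construction is more technical and less elegant than the $\maj$-based rules for $\Mon_5$ and $\Mon_7$, which is indirect evidence that the straightforward bitwise-majority scheme you sketch is not known to go through for $n=8$. So your proposal has a genuine gap (the missing construction), and moreover the route you expect to take is not the one the paper found workable; if you want to complete your version, you must either exhibit and verify an explicit rule table of the kind you describe, or adopt a block/pair-based construction such as the paper's $\rho$.
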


%
\begin{proof}
We design a function~$\phi:\{0,1\}^{\{a,b,c,d,e,f,g,h\}}\to \{0,1\}^{\{A,B,C,D,E,F,G,H\}}$. It is convenient to gather output cells and input cells in pairs: if~$\Sigma=\{00,01,10,11\}$, then the function can be seen as~$\phi:\Sigma^{\{ab,cd,ef,gh\}}\to\Sigma^{\{AB,CD,EF,GH\}}$. Its visibility diagram is shown in Figure \ref{fig_diag_K8}.
\begin{figure}
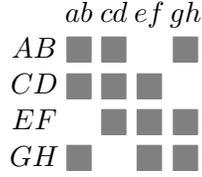

\begingroup
\arraycolsep=1pt
\begin{equation*}
\begin{array}{ccccc}
&ab&cd&ef&gh\\
AB&\slot&\slot&&\slot\\
CD&\slot&\slot&\slot&\\
EF&&\slot&\slot&\slot\\
GH&\slot&&\slot&\slot\\
\end{array}
\end{equation*}
\endgroup
\caption{The visibility diagram of a function generating~$\Mon_8$}\label{fig_diag_K8}
\end{figure}

We define a correction rule~$\rho:\Sigma^3\to \Sigma$ and then let
\begin{align*}
\mathtt{AB(gh,ab,cd)}&\mathtt{=\rho(\overline{gh},ab,cd)},\\
\mathtt{CD(ab,cd,ef)}&\mathtt{=\rho(ab,cd,ef)},\\
\mathtt{EF(cd,ef,gh)}&\mathtt{=\rho(cd,ef,gh)},\\
\mathtt{GH(ef,gh,ab)}&\mathtt{=\rho(ef,gh,\overline{ab})}.
\end{align*}

In order to define~$\rho$, we say that~$x\in \Sigma$ is a constant block if~$x=00$ or~$x=11$. If~$x\in \Sigma$, then~$x_0$ denotes its first bit. For~$x,y,z\in \Sigma$,~$\rho(x,y,z)$ is defined depending on the number~$k$ of constant blocks among~$x,y,z$:
%
\begin{itemize}
\item If~$k=0$ or~$3$, then~$\rho(x,y,z)=pp$ with~$p=\maj(x_0,y_0,z_0)$,
\item If~$k=1$, then~$\rho(x,y,z)$ is the unique constant block,
\item If~$k=2$: if~$y$ is constant, then~$\rho(x,y,z)=y$, otherwise~$\rho(x,y,z)=x_0z_0$.
\end{itemize}

We first check that if~$xyz$ is monotonic, then~$\rho(x,y,z)=y$. Observe that~$k=2$ or~$3$. If~$k=3$, then~$\rho(x,y,z)=y$. If~$k=2$ and~$y$ is constant, then~$\rho(x,y,z)=y$. If~$k=2$ and~$y$ is non-constant, then~$x\neq z$ so~$\rho(x,y,z)=x_0z_0=y$.

Therefore, if~$xyzt$ is monotonic, then~$\phi(x,y,z,t)=xyzt$, so the image of~$f$ contains~$\Mon_8$.

We show that~$\im{f}\subseteq \Mon_8$. We start with a first observation.
\begin{claim}
The function~$\phi_0:\{0,1\}^4\to \{0,1\}^4$ defined by
\[
\phi_0(x,y,z,t)=(\maj(\overline{t},x,y),\maj(x,y,z),\maj(y,z,t),\maj(z,t,\overline{x}))
\]
generates~$\Mon_4$.
\end{claim}
If~$xyzt$ is monotonic, then~$\phi_0$ sends this sequence to itself, so~$\Mon_4\subseteq\im{\phi_0}$. We show that every~$s=\phi_0(x,y,z,t)$ belongs to~$\Mon_4$. As~$\phi_0$ commutes with the action of the group~$\Di_8$, it is sufficient to show that if~$s_0=s_3$, then~$s$ is constant. Indeed, every string~$s$ can be sent to a string~$s'$ satisfying~$s'_0=s'_3$ by iterating the action of~$r\in \Di_8$: if~$s_0\neq s_3$ then there exists~$i\in [0,2]$ such that~$s_i\neq s_{i+1}$, so~$s'=r^{3-i}\cdot s$ satisfies~$s'_0=\overline{s_{i+1}}=s_i=s'_3$.

Assume that~$s_0=s_3$. One must have~$x=t$ and~$y=s_0=s_3=z$, so~$s_1=s_2=y=z$, so~$s$ is constant. The claim is proved.

Let~$k$ be the number of constant blocks among~$x,y,z,t$:
\begin{itemize}
\item If~$k=0$ or~$k=4$, then~$\phi(x,y,z,t)=ppqqrrss$ where~$pqrs=\phi_0(x_0,y_0,z_0,t_0)$. As~$pqrs$ is monotonic, so is~$ppqqrrss$.
\item If~$k=1$, we can assume by symmetry that~$y$ is constant. One has~$\phi(x,y,z,t)=yyypp$ for some~$p\in\{0,1\}$, so it is monotonic,
\item If~$k=2$ then, up so symmetry,~$x$ is constant and~$y$ or~$z$ is constant. In the first case,~$\phi(x,y,z,t)=xyy\overline{x}$ is monotonic, in the second case~$\phi(x,y,z,t)=xx_0z_0zz_0\overline{x_0}$ is monotonic,
\item If~$k=3$, we can assume by symmetry that~$t$ is non-constant. One has~$\phi(x,y,z,t)=xppzz_0\overline{x_0}$ where~$p=\maj(x_0,y_0,z_0)$, which is monotonic.\qedhere
\end{itemize}
%
\end{proof}

\begin{remark}
The definition of the function generating~$\Mon_8$ is quite technical and not as elegantly expressed as the functions generating~$\Mon_5$ and~$\Mon_7$. It would be interesting to find a clearer definition, or a clearer function.
\end{remark}

\section{Structural results}\label{sec_general}
We now prove general results about the generation of~$\Mon_n$ for any~$n$.

We start with a result for general languages, which will be applied to~$\Mon_n$, to show that the minimal complexes generating~$\Mon_n$ only have intervals as maximal simplices.

\subsection{Decomposition}
We recall from \cite{H25} that a language~$L\subseteq A^I$ is not irreducible if there exists a non-trivial partition~$I=X\sqcup Y$ such that for every~$w\in A^I$, if~$\restr{w}{X}$ and~$\restr{w}{Y}$ have extensions in~$L$, then~$w\in L$. Intuitively, it means that the contents of~$X$ and~$Y$ can be chosen independently. We showed in \cite{H25} that in this case, every minimal complex generating~$L$ is made of simplices contained in either~$X$ or~$Y$, hence allowing no communication between~$X$ and~$Y$.

For~$n\geq 3$, the language~$\Mon_n$ is irreducible so this result cannot be applied. However, we prove a generalization of this result by relaxing the condition that~$X$ and~$Y$ are disjoint.
\begin{definition}\label{def_decomposition}
Let~$X,Y\subseteq I$ satisfy~$X\cup Y=I$. We say that~$X$ and~$Y$ \textbf{decompose}~$L\subseteq A^I$ if for every~$w\in A^I$, if~$\restr{w}{X}$ and~$\restr{w}{Y}$ have extensions in~$L$, then~$w\in L$.
\end{definition}

Intuitively, when fixing the content of~$X\cap Y$, the contents of~$I\setminus X$ and~$I\setminus Y$ can be filled independently of each other.

\begin{theorem}\label{thm_decomposition}
Let~$X\cup Y=I$ and assume that~$X$ and~$Y$ decompose~$L$. If~$K$ is a minimal complex generating~$L$, then every maximal simplex intersecting both~$X$ and~$Y$ intersects~$X\cap Y$.
\end{theorem}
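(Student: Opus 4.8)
The plan is to argue by contradiction: suppose $K$ is a minimal complex generating $L$ but there is a maximal simplex $S \in K$ that meets both $X \setminus Y$ and $Y \setminus X$ yet avoids $X \cap Y$. I would then construct a strictly smaller complex $K'$ that still generates $L$, contradicting minimality. The natural candidate is to ``split'' $S$ along the partition it induces: writing $S = S_X \sqcup S_Y$ with $S_X = S \cap X$ and $S_Y = S \cap Y$ (a genuine disjoint union since $S$ misses $X \cap Y$), replace the maximal simplex $S$ by the two smaller simplices $S_X$ and $S_Y$. Concretely, $K' $ has the same simplices as $K$ except that $S$ is no longer a face; i.e.\ $K' = \{T \in K : T \not\supseteq \text{(no $T$ with $S\subseteq T$ except via $S_X$ or $S_Y$)}\}$ — more cleanly, $K'$ is the subcomplex of $K$ generated by all maximal simplices of $K$ other than $S$, together with $S_X$ and $S_Y$. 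Since $S \notin K'$ (as $S_X, S_Y$ are proper subsets of $S$ and $S$ was maximal, no other simplex of $K$ contains $S$), we have $K' \subsetneq K$.

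**Building the generating function for $K'$.** By Proposition~\ref{prop_canonical}, fix a canonical generating function $f : L^{J} \to A^I$ with $K_f \subseteq K$, where $J$ is the set of maximal simplices of $K$ and $f(x,\dots,x) = x$ for $x \in L$. I want a function $g$ with $\im g = L$ and $K_g \subseteq K'$. Let $J' = (J \setminus \{S\}) \cup \{S_X, S_Y\}$ be the maximal simplices of $K'$. Define $g : L^{J'} \to A^I$ as follows: given an input assigning a word $w^T \in L$ to each $T \in J'$, for each output cell $i \in I$, run $f_i$ but feed it, in place of the coordinate indexed by $S$, the value $w^{S_X}$ if $i \in X$ and $w^{S_Y}$ if $i \in Y \setminus X$ (this is well-defined because $i$ only reads coordinate $S$ of $f$ when $i \in S$, and $S \cap X = S_X$, $S \setminus X = S_Y$). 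All other coordinates are passed through unchanged. Then $\W_g(i) \subseteq (\W_f(i) \setminus \{S\}) \cup \{S_X \text{ or } S_Y\}$, so indeed $K_g \subseteq K'$: any simplex of $K_g$ is either a simplex of $K$ not containing $S$, or contained in $S_X$ or $S_Y$.

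**Showing $\im g = L$.** That $L \subseteq \im g$ is easy: feeding every coordinate of $J'$ the same $x \in L$ gives $g = f(x,\dots,x) = x$. The crux is $\im g \subseteq L$, and this is exactly where the \emph{decomposition} hypothesis enters. Take any input to $g$ and let $v = g(\text{input})$. Consider the two auxiliary inputs to $f$: one where coordinate $S$ gets $w^{S_X}$ (and everything else as in the $g$-input), call its image $u^X = f(\dots)$; and one where coordinate $S$ gets $w^{S_Y}$, call its image $u^Y$. Both $u^X, u^Y \in L$ since $f$ generates $L$. By construction, $v$ agrees with $u^X$ on $X$ (cells in $X$ only ever saw the $S_X$-value) and $v$ agrees with $u^Y$ on $Y$. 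Hence $\restr{v}{X} = \restr{u^X}{X}$ extends to $u^X \in L$ and $\restr{v}{Y} = \restr{u^Y}{Y}$ extends to $u^Y \in L$; since $X$ and $Y$ decompose $L$, Definition~\ref{def_decomposition} gives $v \in L$. This closes the argument, contradicting minimality of $K$.

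**Main obstacle.** The delicate point is verifying that cells in $X$ genuinely never depend on the ``$S_Y$ half'' of the input (and vice versa) — i.e.\ that the substitution defining $g$ is coherent and that $\W_g(i)$ lands in $K'$. This rests on the fact that $f_i$ only consults coordinate $S$ of its input when $i \in S$, which follows from $\W_f(i) \subseteq$ the simplices of $K$ incident to $i$; combined with $S = S_X \sqcup S_Y$ and $S_X \subseteq X$, $S_Y \subseteq Y$, the bookkeeping goes through. One should also double-check the edge case where $S_X$ or $S_Y$ is already a face of $K$ (e.g.\ contained in another maximal simplex): then $K'$ may coincide with $K$ minus $S$ without literally adjoining a new maximal simplex, but this only makes $K'$ smaller and the function $g$ still works with the obvious identification of coordinates. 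The decomposition hypothesis does all the real work in the final step; everything else is careful but routine reindexing.
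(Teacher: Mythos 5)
Your construction is essentially the paper's: split the input coordinate associated with the offending maximal simplex $S$ into two copies, route the cells of $X$ to one copy and the cells of $Y$ to the other, and use the decomposition hypothesis to see that the image stays in $L$, contradicting minimality. The one step that is not justified is exactly the point you flag as delicate: the claim that an output cell $i$ consults the coordinate indexed by $S$ only when $i\in S$. You say this ``follows from $\W_f(i)\subseteq$ the simplices of $K$ incident to $i$'', but that inclusion is itself the property in question; it is not a consequence of $K_f\subseteq K$, and it is not part of the statement of Proposition \ref{prop_canonical}. The condition $K_f\subseteq K$ only says that any set of output cells sharing an input cell forms a simplex of $K$; it is perfectly consistent with this that some cell $i\in X\cap Y$ reads the coordinate labelled $S$ (its dual window is then some simplex of $K$ containing $i$, with no reason to be contained in $S$). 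If such an $i$ exists, your argument breaks at the crucial point: $v$ agrees with $u^X$, not with $u^Y$, at that $i$, so $\restr{v}{Y}$ need not have an extension in $L$, and similarly the dual windows $\W^g$ of the untouched coordinates need not land in $K'$.

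The paper avoids this by not privileging the coordinate ``labelled $S$''. Since $K_f\subseteq K$ and $f$ generates $L$, minimality gives $K_f=K$, so $S\in K_f$, i.e.\ there is an input cell $p$ with $S\subseteq \W^f(p)$; since $\W^f(p)$ is a simplex of $K$ and $S$ is maximal, $\W^f(p)=S$. Splitting \emph{this} cell $p$ into $p_X,p_Y$, the set of cells reading $p$ is exactly $S$, which misses $X\cap Y$, so the routing is invisible to $X\cap Y$, and one gets $\W^g(p_X)=S\cap X$, $\W^g(p_Y)=S\cap Y$ with all other dual windows unchanged. With that substitution the rest of your argument (surjectivity via constant inputs, and Definition \ref{def_decomposition} giving $\im{g}\subseteq L$) goes through exactly as in the paper. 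So the route is the same; what is missing is this identification, via $K_f=K$ and maximality of $S$, of an input cell whose readers are exactly $S$ --- or, alternatively, an explicit appeal to the stronger form of the canonical generating function (each cell reads only its incident simplices), which the paper only states informally after Proposition \ref{prop_canonical} and which your derivation does not establish.
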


\begin{proof}
Let~$J$ be the set of maximal simplices of~$K$,~$B$ a finite set and~$f:B^J\to A^I$ generate~$L$, such that~$K_f\subseteq K$ (thus~$K_f=K$ by minimality of~$K$). This function exists by Proposition \ref{prop_canonical}. Assume for a contradiction that for some~$p\in J$,~$\W^f(p)$ intersects both~$X$ and~$Y$ but not~$X\cap Y$, i.e.~$\W^f(p)\subseteq X\cup Y$.

The idea is to replace the input cell~$p$ by two new input cells~$p_X$ and~$p_Y$. Every output cell~$i\in I$ that was reading~$p$ now reads either~$p_X$ or~$p_Y$ depending on whether~$i\in X$ or~$i\in Y$. The problem is that in general~$p_X$ and~$p_Y$ will not store the same value, so for a general language the output has no reason to be correct, but the decomposition assumption implies that the output is indeed correct.

More precisely, let~$J'=J\setminus \{p\}\cup\{p_X,p_Y\}$. Let~$\varphi:J'\to J$ send~$p_X$ and~$p_Y$ to~$p$ and every other element to itself. Let~$\psi_X,\psi_Y:J\to J'$ send~$p$ to~$p_X$ and~$p_Y$ respectively, and every other element to itself. Note that~$\varphi\circ \psi_X=\varphi\circ \psi_Y=\id_J$. For~$u\in B^{J'}$, let~$u_X=x\circ \psi_X$ (resp.~$u_Y=u\circ \psi_Y$) be obtained by copying the content of~$p_X$ (resp.~$p_Y$) in~$p$.

We define~$g:B^{J'}\to A^I$ as follows: for~$u\in B^{J'}$ and~$i\in I$,
\[
g_i(u)=\begin{cases}
f_i(u_X)&\text{if }i\in X,\\
f_i(u_Y)&\text{if }i\in Y.
\end{cases}
\]
We first show that it is well-defined when~$i\in X\cap Y$. The sequences~$u_X$ and~$u_Y$ coincide everywhere except possible at~$p$. Therefore, when~$i\in X\cap Y$, one has~$i\notin \W^f(p)$ so~$f_i(u_X)=f_i(u_Y)$.

One easily has~$L=\im{f}\subseteq \im{g}$. Indeed, for each~$u\in B^J$ one has~$g(u\circ \varphi)=f(u)$ because for each~$i\in I$,~$g_i(u\circ \varphi)$ equals~$f_i(u\circ \varphi\circ\psi_X)=f_i(u)$ or~$f_i(u\circ \varphi\circ\psi_Y)=f_i(u)$.

We finally show that~$\im{g}\subseteq L$. Let~$u\in B^{J'}$ and~$w=g(u)$. By definition of~$g$,~$w$ coincides with~$f(u_X)$ on~$X$ and with~$f(u_Y)$ on~$Y$, so the restrictions of~$w$ to~$X$ and~$Y$ both have extensions in~$L$. It implies that~$w\in L$ as~$X$ and~$Y$ decompose~$L$.

The communication complex~$K_g$ is properly contained in~$K$. Indeed,
\begin{align*}
\W^g(p_X)&=\W^f(p)\cap X\subseteq \W^f(p)\setminus Y\subsetneq \W^f(p),\\
\W^g(p_Y)&=\W^f(p)\cap Y\subseteq \W^f(p)\setminus X\subsetneq\W^f(p),\\
\W^g(q)&=\W^f(q) \text{ for every~$q\in J\setminus \{p\}$.}
\end{align*}
It contradicts the minimality of~$K$.
\end{proof}
As a corollary, we obtain the previously mentioned result about non-irreducible languages: if~$X$ and~$Y$ are disjoint, then no maximal simplex of~$K$ intersects both~$X$ and~$Y$, because it cannot intersect~$X\cap Y=\emptyset$.

\subsection{Communication only occurs via intervals}\label{sec_intervals}
We now apply this result to~$\Mon_n$.
\begin{theorem}[Minimal complexes are made of intervals]\label{thm_intervals}
Let~$K$ be a minimal complex generating~$\Mon_n$. The maximal simplices of~$K$ are intervals.
\end{theorem}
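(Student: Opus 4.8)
The plan is to apply Theorem~\ref{thm_decomposition} repeatedly, choosing suitable decompositions $X\cup Y=I_n$ for which $X\cap Y$ is small, and conclude that every maximal simplex of a minimal $K$ must be an interval. First I would identify which pairs $(X,Y)$ decompose $\Mon_n$. A natural candidate: pick two ``cut points'' on the circle $\Z/n\Z$, say positions $p$ and $q$, and let $X=\int{p,q}$, $Y=\int{q,p}$ be the two arcs they determine, so that $X\cap Y=\{p,q\}$. I claim these decompose $\Mon_n$: if $w\res X$ extends to a monotonic sequence and $w\res Y$ extends to a monotonic sequence, then $w$ itself is monotonic. The point is that a binary sequence on $\Z/n\Z$ fails to be monotonic precisely when it has ``too many descents/ascents'', and any such failure is witnessed within one of the two arcs once the shared endpoints $w_p,w_q$ are fixed — I would verify this by a short case analysis on the values $(w_p,w_q)$, using the explicit description $\Mon_n=\{0^{n-k}1^k\}\cup\{1^{n-k}0^k\}$ and the circular reformulation of intervals.

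Granting that, Theorem~\ref{thm_decomposition} says: for every minimal $K$ generating $\Mon_n$ and every arc-decomposition $(X,Y)=(\int{p,q},\int{q,p})$, every maximal simplex $S$ meeting both arcs must meet $\{p,q\}$. Now fix a maximal simplex $S$ of $K$ and suppose, for contradiction, that $S$ is not an interval. Then on the circle $\Z/n\Z$ the complement $I_n\setminus S$ is a union of at least two disjoint arcs (``gaps''); pick two distinct gaps and pick one point $p$ inside the first gap and one point $q$ inside the second gap. Then $X=\int{p,q}$ and $Y=\int{q,p}$ each contain points of $S$ (because $S$ is split by the two gaps into at least two pieces, one on each side), so $S$ meets both $X$ and $Y$; but $p,q\notin S$, so $S$ does not meet $X\cap Y=\{p,q\}$ — contradicting Theorem~\ref{thm_decomposition}. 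Hence $S$ is an interval.

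The main obstacle I expect is the decomposition claim itself, i.e.\ verifying that arc-pairs $(\int{p,q},\int{q,p})$ decompose $\Mon_n$: one has to be careful that ``extends to a monotonic sequence'' is the right local condition and that fixing only the two shared endpoints really does force global monotonicity once each arc is individually completable. A clean way is to note that for a fixed choice of $(w_p,w_q)\in\{0,1\}^2$, monotonic sequences through those endpoints, restricted to the arc $X$, are exactly those arc-strings with at most one ``block boundary'' in a prescribed location range (and symmetrically for $Y$), so that a sequence monotone-extendable on both arcs has at most one block boundary on the whole circle in one orientation — which is exactly being in $\Mon_n$. I would organize this as a small lemma stating the decomposition, prove it by the endpoint case analysis, and then the deduction above is essentially immediate. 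One minor care point: when $S$ is the whole of $I_n$ or a single vertex it is trivially an interval, and one should make sure the two gaps chosen are genuinely nonempty, which holds precisely when $S$ is a non-interval proper subset.
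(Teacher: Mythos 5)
Your proposal is correct and follows essentially the same route as the paper: the paper shows that for distinct $a,b$ the two arcs $\int{a,b}$ and $\int{b,a}$ (intersecting exactly in $\{a,b\}$) decompose $\Mon_n$ via precisely the endpoint case analysis on $(w_a,w_b)$ that you sketch, and then applies Theorem~\ref{thm_decomposition} to a non-interval maximal simplex by choosing $a,b$ in two distinct gaps of its complement. No substantive difference or gap.
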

\begin{proof}
Let~$0\leq a<b<n$ be distinct. We show that the intervals~$\int{a,b}$ and~$\int{b,a}$ decompose~$\Mon_n$. The reason is that if~$w$ is monotonic, then the values of~$w_a$ and~$w_b$ uniquely determine the values of~$w$ on either~$\int{a,b}$ or~$\int{b,a}$.

Assume that the restrictions of~$w\in\{0,1\}^n$ to~$\int{a,b}$ and~$\int{b,a}$ are monotonic. There are two cases:
\begin{itemize}
\item If~$w_a=w_b=v$, as~$\restr{w}{\int{a,b}}$ is monotonic, it takes constant value~$v$. As~$\restr{w}{\int{b,a}}$ is monotonic, so is~$w$.
\item If~$w_a=\overline{w_b}=v$, as~$\restr{w}{\int{b,a}}$ is monotonic,~$w$ takes constant value~$v$ on~$\int{0,a}$ and constant value~$\overline{v}$ on~$\int{b,n-1}$. As~$\restr{w}{\int{a,b}}$ is monotonic, so is~$w$.
\end{itemize}
In any case,~$w$ is monotonic, so~$\int{a,b}$ and~$\int{b,a}$ indeed decompose~$\Mon_n$.


Let~$S\subseteq I_n$ be a maximal simplex of~$K$. If~$S$ is not an interval, then there exist distinct elements~$a,b\in I_n\setminus S$ such that~$S$ intersects both~$\int{a,b}$ and~$\int{b,a}$. It contradicts Theorem \ref{thm_decomposition}, as~$S$ does not intersect~$\int{a,b}\cap\int{b,a}=\{a,b\}$.
\end{proof}

\subsection{Vertex deletion and insertion}\label{sec_insertion}
There are simple ways to transform complexes generating~$\Mon_n$ into complexes generating~$\Mon_{n+1}$ and vice versa.

\subsubsection{Vertex deletion}
The first observation is that the operation of removing a bit sends the set of monotonic sequences of length~$n+1$ to the set of monotonic sequences of length~$n$. Therefore, any complex that generates~$\Mon_{n+1}$ can be easily transformed into a complex generating~$\Mon_n$ by simply removing a vertex, and renaming the other vertices. Let us formalize this idea.

Let~$i\in [0,n]$. The set~$[0,n]\setminus \{i\}$ can be identified with~$[0,n-1]$ using the functions~$\rho^-_i:[0,n]\setminus \{i\}\to [0,n-1]$ and~$\rho^+_i:[0,n-1]\to [0,n]\setminus \{i\}$ defined by
\[
\rho^-_i(j)=\begin{cases}
j&\text{if }j<i,\\
j-1&\text{if }j>i,
\end{cases}
\quad\text{and}\quad
\rho^+_i(j)=\begin{cases}
j&\text{if }j<i,\\
j+1&\text{if }j\geq i.
\end{cases}
\]
These functions are inverses of each other.

\begin{definition}[Vertex deletion]
If~$K$ is a complex over~$[0,n]$ then the complex obtained by deletion of~$i$ in~$K$ is the complex over~$[0,n-1]$ defined by
\begin{align*}
\del{i}{S}&=\rho^-_i(S\setminus\{i\}),\\
\del{i}{K}&=\{\del{i}{S}:S\in K\}.
\end{align*}
\end{definition}

It is indeed a complex: if~$T\subseteq \del{i}{S}$ for some~$S\in K$, then~$T=\del{i}{T'}$ where~$T'=\rho_i^+(T)$ belongs to~$K$ because~$T'$ is contained in~$S$.

In the graphical representation of the complex~$K$, the new complex~$\del{i}{K}$ is obtained by removing row~$i$, shifting the lower rows upwards and removing the non-maximal intervals (see Figure \ref{fig_deletion}).
\begin{figure}[ht]
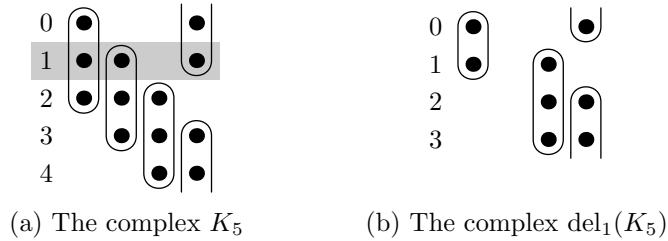

\centering
\subcaptionbox{The complex~$K_5$}[.4\textwidth]{\includegraphics{Images/complex-12}}
\subcaptionbox{The complex~$\del{1}{K_5}$}[.4\textwidth]{\includegraphics{Images/complex-13}}
\caption{Vertex deletion}\label{fig_deletion}
\end{figure}

\begin{proposition}[Vertex deletion and generation]\label{prop_deletion}
Let~$i\in [0,n]$. If a complex~$K$ over~$[0,n]$ generates~$\Mon_{n+1}$, then~$\del{i}{K}$ generates~$\Mon_n$.
\end{proposition}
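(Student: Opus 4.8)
The statement is that vertex deletion is compatible with language generation: if $K$ over $[0,n]$ generates $\Mon_{n+1}$, then $\del{i}{K}$ generates $\Mon_n$. The natural approach is to realize the bit-deletion map as a function between languages and invoke Proposition \ref{prop_image} (Image of a language). Concretely, let $d_i:\{0,1\}^{n+1}\to\{0,1\}^n$ be the map that deletes coordinate $i$, i.e.\ $d_i(x)=x\circ\rho^+_i$ (renaming $[0,n]\setminus\{i\}$ to $[0,n-1]$ via $\rho^+_i$). First I would check that $d_i(\Mon_{n+1})=\Mon_n$: deleting a bit from a monotonic sequence clearly leaves a monotonic sequence, and conversely every monotonic sequence of length $n$ extends to one of length $n+1$ (e.g.\ by duplicating a neighbouring bit, or inserting an equal bit at position $i$), so surjectivity holds. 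Hence $d_i$ is a function with $d_i(\Mon_{n+1})=\Mon_n$.

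Next I would compute the dual windows of $d_i$ and apply Proposition \ref{prop_image}. Since $(d_i)_j(x)=x_{\rho^+_i(j)}$ depends only on the single input coordinate $\rho^+_i(j)$, we have $\W_{d_i}(j)=\{\rho^+_i(j)\}$, hence for the input coordinate $k\in[0,n]\setminus\{i\}$ the dual window is $\W^{d_i}(k)=\{\rho^-_i(k)\}$, and $\W^{d_i}(i)=\emptyset$ (coordinate $i$ of the input is read by no output cell). Then Proposition \ref{prop_image} says that $(d_i)_*(K)$ generates $d_i(\Mon_{n+1})=\Mon_n$, where $(d_i)_*(S)=\bigcup_{k\in S}\W^{d_i}(k)$. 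Unwinding this union: the element $i\in S$ contributes nothing, and each other $k\in S$ contributes $\rho^-_i(k)$, so $(d_i)_*(S)=\rho^-_i(S\setminus\{i\})=\del{i}{S}$. Therefore $(d_i)_*(K)=\del{i}{K}$, which is exactly the complex we want, and it generates $\Mon_n$.

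The verification is essentially bookkeeping: the only points requiring care are (i) checking that $d_i$ restricted to $\Mon_{n+1}$ is indeed onto $\Mon_n$, and (ii) matching the combinatorial definition of $(d_i)_*(S)$ against the definition of $\del{i}{S}$ given just above the proposition, in particular correctly handling the indexing shift encoded by $\rho^-_i$ and the fact that the deleted index $i$ contributes the empty set. I would also note that one could instead give a direct argument: take $g:B^{J}\to\{0,1\}^{n+1}$ generating $\Mon_{n+1}$ with $K_g\subseteq K$ and postcompose with $d_i$, observing that the output cell $\rho^-_i(k)$ of $d_i\circ g$ reads exactly the input cells that output cell $k$ of $g$ read, so the communication complex of $d_i\circ g$ is $\del{i}{K_g}\subseteq\del{i}{K}$; but invoking Proposition \ref{prop_image} is cleaner and avoids re-deriving this. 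I do not anticipate a genuine obstacle here; the statement is a routine application of the image-of-a-language machinery, and the main risk is a sign/indexing slip in identifying $(d_i)_*$ with the deletion operation.
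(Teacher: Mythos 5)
Your proposal is correct and follows essentially the same route as the paper: realize bit deletion as a map $d_i$ with $d_i(\Mon_{n+1})=\Mon_n$, compute its dual windows, and apply Proposition \ref{prop_image}, identifying $(d_i)_*(S)$ with $\del{i}{S}$. The only cosmetic difference is that the paper first reduces to $i=n$ by symmetry so that $\rho^-_i$ is the identity, whereas you carry the index shift through explicitly; both are fine.
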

\begin{proof}
For simplicity of notation, we can assume by symmetry that~$i=n$, so~$\rho^-_i$ is the identity. The function~$f:\{0,1\}^{[0,n]}\to \{0,1\}^{[0,n-1]}$ that removes the last bit satisfies~$f(\Mon_{n+1})=\Mon_n$, and
\[
\W^f(j)=\begin{cases}
\emptyset&\text{if }j=n\\
\{j\}&\text{otherwise.}
\end{cases}
\]
Therefore,~$f_*(K)$ generates~$\Mon_n$ by Proposition \ref{prop_image}. If~$S\in K$, then~$f_*(S)=\bigcup_{j\in S}\W^f(j)=S\setminus \{n\}$. Finally,~$f_*(K)$ is the complex induced by these sets, and we have seen that these sets already form a complex.
\end{proof}

However, if~$K$ is minimal generating~$\Mon_{n+1}$, then~$\del{i}{K}$ might not be minimal generating~$\Mon_n$. For instance, we will see in Section \ref{sec_K5_family} that~$K_5$ is minimal generating~$\Mon_5$, however~$\del{1}{K_5}$ is not minimal as it properly contains~$\comp{I_4\setminus \{0\},I_4\setminus\{1\}}$ which already generates~$\Mon_4$ (see Figure \ref{fig_deletion}).


\subsubsection{Vertex insertion}
Conversely, there is a simple way to extend a complex generating~$\Mon_n$ into a complex generating~$\Mon_{n+1}$, based on the following observation: every~$w\in\Mon_{n+1}$ can be obtained by starting from some~$u\in\Mon_n$ and inserting a bit at some position~$i$. The possible value(s) of the inserted bit only depend on the values of~$u_{i-1}$ and~$u_{i}$. Therefore, the new cell only needs to communicate with~$i-1$ and~$i$, by reading all the input cells read by them.

Again, the corresponding transformation of the communication complex can be easily expressed: it consists in adding a new vertex, adding it to all the maximal simplices containing~$i-1$ or~$i$, and renaming the vertices. 

\begin{definition}[Vertex insertion]\label{def_insertion}
Let~$i\in [0,n]$. If~$K$ is a complex over~$[0,n-1]$, then~$\ins{i}{K}$ is the complex over~$[0,n]$ induced by the sets
\[
\ins{i}{S}=
\begin{cases}
\rho_i^{+}(S)\cup \{i\}&\text{if $S$ contains~$i-1\bmod n$ or~$i\bmod n$},\\
\rho_i^{+}(S)&\text{otherwise,}
\end{cases}
\]
where~$S\in K$.
\end{definition}

The graphical representation of~$\ins{i}{K}$ is obtained by inserting a row at position~$i$, and copying the contents of rows with former indices~$i-1\bmod n$ and~$i\bmod n$ into it (see Figure \ref{fig_insertion}).
\begin{figure}[ht]
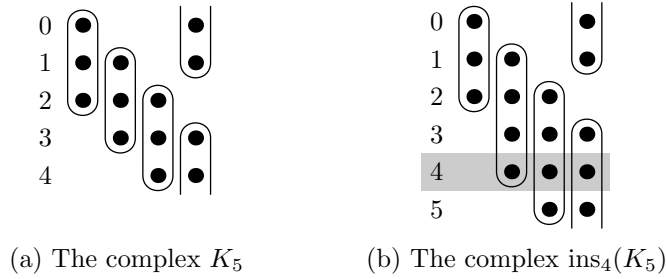

\centering
\subcaptionbox{The complex~$K_5$}[.4\textwidth]{\includegraphics{Images/complex-14}}
\subcaptionbox{The complex~$\ins{4}{K_5}$}[.4\textwidth]{\includegraphics{Images/complex-15}}
\caption{Vertex insertion}\label{fig_insertion}
\end{figure}

%
%
%

\begin{proposition}[Vertex insertion and generation]\label{prop_insertion}
Let~$i\in[0,n]$. A complex~$K$ generates~$\Mon_n$ if and only if~$\ins{i}{K}$ generates~$\Mon_{n+1}$.
\end{proposition}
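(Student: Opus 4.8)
The plan is to prove both implications by constructing the appropriate generating function and tracking how vertex insertion transforms communication complexes. For the harder direction, suppose $K$ generates $\Mon_n$, say via $f:B^J\to\{0,1\}^{[0,n-1]}$ with $K_f\subseteq K$. I would build $g:B^J\times\{0,1\}^{\{q\}}\to\{0,1\}^{[0,n]}$ where $q$ is a new input cell: for output positions $j<i$ set $g_j=f_j$, for $j>i$ set $g_j=f_{j-1}$, and at the inserted position $i$ define $g_i$ as the monotonicity-preserving bit determined by the values that would occupy positions $i-1\bmod n$ and $i\bmod n$ in the underlying length-$n$ word, using $q$ to break ties when both values are equal (and the endpoints of the circle agree appropriately). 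The key point is that the set of valid bits to insert between two adjacent values $u,v$ in a monotonic word depends only on $u,v$: if $u=v$ one may insert $u$ (freely, in the constant case), and if $u\neq v$ the inserted bit may be either, so a single fresh input bit $q$ suffices to realize every monotonic extension. One checks $\im{g}=\Mon_{n+1}$: every image is monotonic by construction, and every $w\in\Mon_{n+1}$ is hit by deleting position $i$ from $w$ to get $u\in\Mon_n=\im f$, taking a preimage of $u$, and choosing $q=w_i$.

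The second thing to verify is $K_g\subseteq\ins{i}{K}$. The new cell $i$ reads exactly the input cells read (in $f$) by positions $i-1\bmod n$ and $i\bmod n$, plus the fresh cell $q$, so $\W_g(i)=\W_f(i-1\bmod n)\cup\W_f(i\bmod n)\cup\{q\}$; every old cell $j$ keeps its window (after the index shift $\rho_i^+$). Hence for $j\in J$, $\W^g(j)=\rho_i^+(\W^f(j))\cup\{i\}$ if $i-1\bmod n$ or $i\bmod n$ lies in $\W^f(j)$, and $\rho_i^+(\W^f(j))$ otherwise; and $\W^g(q)=\{i\}$. Comparing with Definition \ref{def_insertion} term by term shows that each maximal simplex of $K_g$ is contained in $\ins{i}{S}$ for the corresponding $S\in K_f$, or is the singleton $\{i\}\subseteq\ins{i}{S}$ for any $S$ containing $i-1\bmod n$ (such $S$ exists since these are genuine simplices). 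Therefore $K_g\subseteq\ins{i}{K}$, so $\ins{i}{K}$ generates $\Mon_{n+1}$.

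For the converse, suppose $\ins{i}{K}$ generates $\Mon_{n+1}$. The cleanest route is to invoke Proposition \ref{prop_deletion}: deleting vertex $i$ from a complex generating $\Mon_{n+1}$ yields a complex generating $\Mon_n$, and one checks that $\del{i}{\ins{i}{K}}=K$. This last identity is a direct computation from the definitions: $\rho_i^-\circ\rho_i^+=\id$, and the vertex $i$ added to a simplex $S$ by $\ins{i}{}$ is exactly the one removed by $\del{i}{}$, so $\del{i}{\ins{i}{S}}=S$ for every $S\in K$; one must also note that passing to maximal simplices commutes with these operations, which holds because $\rho_i^\pm$ are order-isomorphisms on the relevant index sets and insertion only enlarges simplices.

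The main obstacle is the correctness argument $\im{g}=\Mon_{n+1}$ in the forward direction, specifically pinning down the local rule at the inserted vertex $i$ so that it (a) produces a monotonic word for every input, not just for inputs in the image, and (b) still only depends on the windows of $i-1\bmod n$ and $i\bmod n$ together with $q$. The delicate case is the ``wrap-around'' when $i=0$ or $i=n$, where the neighbours $i-1\bmod n$ and $i\bmod n$ are the two ends of the circle and the symmetry action of $\Di_{2n}$ is implicitly in play; a uniform definition covering all $i$, perhaps after reducing to a single value of $i$ by Proposition \ref{prop_mon_symmetries}, is where care is needed. Everything else is bookkeeping about $\rho_i^\pm$ and the window calculus.
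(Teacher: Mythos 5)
Your proposal is correct and follows essentially the same route as the paper: in the forward direction you add a fresh tie-breaking input cell read only by the inserted vertex, which computes the output values of its two neighbours from their windows and chooses a monotonicity-preserving bit, with the same window calculus giving $K_g\subseteq\ins{i}{K}$, and the converse is exactly the appeal to Proposition \ref{prop_deletion} together with $\del{i}{\ins{i}{K}}=K$. The one detail to settle is the rule at the inserted vertex in the wrap-around case you flag: there it is the mirror image of your interior description (equal neighbouring values mean the length-$n$ word is constant, so the inserted bit is free and $q$ is used; unequal values force the bit to copy the appropriate endpoint), and the paper handles it precisely as you suggest, reducing to the single position $i=n$ by symmetry and writing the rule for that case.
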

\begin{proof}
First, if~$\ins{i}{K}$ generates~$\Mon_{n+1}$, then~$K=\del{i}{\ins{i}{K}}$ generates~$\Mon_n$ by Proposition \ref{prop_deletion}.

Now assume that~$K$ generates~$\Mon_n$. For simplicity of notation, we assume by symmetry that~$i=n$, so the new vertex is~$n$ and there is no need to rename the other vertices, i.e.~$\rho_i^+$ is the identity.

The idea is very simple: every output cell in~$[0,n-1]$ uses the same generation procedure generating~$\Mon_n$, and the new cell~$n$ reads all the input cells read by~$0$ and by~$n-1$, evaluates their output values and then outputs any value that makes the extended sequence monotonic (if~$0$ and~$n-1$ take the same value, then~$n$ can take any value, otherwise~$n$ takes the same value as~$n-1$). The new output cell has moreover access to a new input cell~$p$ containing an arbitrary number~$k\in\{0,1\}$ so that if the two values~$0$ and~$1$ are possible, then it outputs~$k$ (if only one output value~$a$ is possible, then it ignores~$k$ and outputs~$a$).

Let~$f:B^J\to\Mon_n$ generate~$\Mon_n$ with~$K_f\subseteq K$. Let~$J'=J\cup\{p\}$ where~$p\notin J$. The new generation procedure for~$\Mon_{n+1}$ is a function~$g:B^{J'}\to \Mon_{n+1}$. One has~$\W^g(p)=\{n\}$ and for~$j\in J$,
\[
\W^g(j)=\begin{cases}
\W^f(j)&\text{if }\W^f(j)\cap\{0,n-1\}=\emptyset,\\
\W^f(j)\cup\{n\}&\text{otherwise}.
\end{cases}
\]
Therefore,~$K_g=\ins{n}{K_f}\subseteq\ins{n}{K}$, so~$\ins{n}{K}$ generates~$\Mon_{n+1}$.
\end{proof}

\begin{example}\label{ex_K2}
For~$n=2$, the complex~$K_2=\comp{\{0\},\{1\}}$ generates~$\Mon_2=\{0,1\}^2$, via the identity~$\id:\{0,1\}^2\to\{0,1\}^2$. The complexes that can be obtained from~$K_2$ by successively inserting vertices are the complexes of the form~$\{I_n\setminus\{i\},I_n\setminus\{j\}\}$, which indeed generate~$\Mon_n$ as mentioned at the beginning of Section \ref{sec_mon_examples}.
\end{example}

If~$K$ is minimal generating~$\Mon_n$, then the complex~$\ins{i}{K}$ might not be minimal generating~$\Mon_{n+1}$, i.e.~some maximal intervals in~$\ins{i}{K}$ may be replaced by smaller intervals. We show that some of the maximal intervals of~$\ins{i}{K}$ cannot be removed.
%
%
%
%
%
%
\begin{proposition}\label{prop_insertion_minimal}
Let~$n\geq 2$ and~$K$ be minimal generating~$\Mon_n$. Let~$i\in [0,n]$ and~$K'\subseteq \ins{i}{K}$ generate~$\Mon_{n+1}$. If~$I$ is a maximal interval of~$K$ containing~$\{i-1\bmod n,i\bmod n\}$ or disjoint from~$\{i-1\bmod n,i\bmod n\}$, then~$K'$ contains~$\ins{i}{I}$.
\end{proposition}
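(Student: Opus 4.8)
The plan is to combine vertex deletion with the interval structure of minimal complexes. Following the symmetry reduction already used in the proof of Proposition~\ref{prop_insertion}, I would assume $i=n$, so that $\rho_n^+=\id$, the new vertex is $n$, and $\{i-1\bmod n,\,i\bmod n\}=\{0,n-1\}$; then $\ins{n}{S}=S\cup\{n\}$ when $S$ meets $\{0,n-1\}$ and $\ins{n}{S}=S$ otherwise. Since the complexes generating $\Mon_{n+1}$ form an upwards closed family, I may further assume $K'$ is a minimal complex generating $\Mon_{n+1}$: replacing $K'$ by a minimal sub-complex $K''\subseteq K'$ generating $\Mon_{n+1}$ preserves $K''\subseteq\ins{n}{K}$ and makes $K''$ globally minimal, and proving $\ins{n}{I}\in K''$ yields $\ins{n}{I}\in K'$. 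Next I would establish $\del{n}{K'}=K$: deletion is monotone, hence $\del{n}{K'}\subseteq\del{n}{\ins{n}{K}}=K$ (the equality $\del{n}{\ins{n}{K}}=K$ is the one invoked in the proof of Proposition~\ref{prop_insertion}), while $\del{n}{K'}$ generates $\Mon_n$ by Proposition~\ref{prop_deletion}, so minimality of $K$ gives equality. As $I\in K=\del{n}{K'}$, I would lift $I$ to a simplex $T_0$ of $K'$ with $T_0\setminus\{n\}=I$ and enlarge $T_0$ to a maximal simplex $T$ of $K'$; then $I\subseteq T\setminus\{n\}\in\del{n}{K'}=K$, so $T\setminus\{n\}=I$ by maximality of $I$ in $K$, and therefore $T\in\{I,\,I\cup\{n\}\}$. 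It remains to decide which.

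If $I$ is disjoint from $\{0,n-1\}$, then $\ins{n}{I}=I$ and I claim $T=I$. Otherwise $T=I\cup\{n\}$ is a simplex of $\ins{n}{K}$, so $T\subseteq\ins{n}{S}$ for some $S\in K$; since $n\in T$ but $n\notin S$, necessarily $\ins{n}{S}=S\cup\{n\}$ with $S$ meeting $\{0,n-1\}$, whence $I\subseteq S$ and then $I=S$ by maximality of $I$, contradicting $I\cap\{0,n-1\}=\emptyset$. Hence $T=I=\ins{n}{I}\in K'$.

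The case $I\supseteq\{0,n-1\}$, where $\ins{n}{I}=I\cup\{n\}$ and I must exclude $T=I$, is the crux. Here I would use Theorem~\ref{thm_intervals}: since $K'$ is minimal generating $\Mon_{n+1}$, each of its maximal simplices is an interval of $I_{n+1}\cong\Z/(n+1)\Z$, so it suffices to check that $I$ is \emph{not} such an interval. First, $I\neq[0,n-1]$: otherwise $[0,n-1]$ would be the unique maximal simplex of $K$, so $K=\comp{[0,n-1]}$, which for $n\geq 2$ is not minimal since it properly contains $\comp{I_n\setminus\{0\},I_n\setminus\{1\}}$, a complex already generating $\Mon_n$. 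Thus $I$ is a proper subset of $[0,n-1]\subseteq I_{n+1}$ containing $\{0,n-1\}$ but not $n$. Now the two arcs of $\Z/(n+1)\Z$ joining $0$ and $n-1$ are $[0,n-1]$ and $\{n-1,n,0\}$, and any interval of $\Z/(n+1)\Z$ containing $0$ and $n-1$ must contain one of these two arcs; but $I$ contains neither (it omits some point of $[0,n-1]$ and it omits $n$), a contradiction. Hence $T=I\cup\{n\}=\ins{n}{I}$, which lies in $K'$. I expect the only delicate points to be the bookkeeping of the embedding $I_n\hookrightarrow I_{n+1}$ and this elementary arc dichotomy; the remaining steps are routine.
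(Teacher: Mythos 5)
Your proof is correct and follows the same skeleton as the paper's: reduce by symmetry to $i=n$, use $\del{n}{K'}\subseteq\del{n}{\ins{n}{K}}=K$ together with Proposition~\ref{prop_deletion} and the minimality of~$K$ to get $\del{n}{K'}=K$, lift~$I$ to a simplex of~$K'$ whose deletion is~$I$ (hence equal to $I$ or $I\cup\{n\}$), and, when $\{0,n-1\}\subseteq I$, exclude the lift~$I$ via the dichotomy that an interval of $\Z/(n+1)\Z$ containing $0$ and $n-1$ must contain $n$ or all of $[0,n-1]$, the latter being impossible because $[0,n-1]\in K$ would contradict the minimality of~$K$ for $n\geq 2$. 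The one point where you genuinely diverge is the justification that the lifted maximal simplex is an interval of~$I_{n+1}$: the paper simply asserts that the lift~$J$ is an interval of~$K'$, whereas you first replace~$K'$ by a minimal subcomplex generating~$\Mon_{n+1}$ (harmless, since membership of $\ins{n}{I}$ in the subcomplex gives membership in~$K'$) and then invoke Theorem~\ref{thm_intervals}; this is legitimate (Theorem~\ref{thm_intervals} is established before and independently of this proposition, so there is no circularity) and it makes explicit a step the paper leaves without justification, since for a non-minimal~$K'$ the only lift of~$I$ could a priori be the non-interval~$I$ itself. One minor simplification: in the case $I\cap\{0,n-1\}=\emptyset$ you need not determine~$T$ at all, since $\ins{n}{I}=I\subseteq T\in K'$ already yields $\ins{n}{I}\in K'$ by downward closure, exactly as in the paper.
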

\begin{proof}
For simplicity of notation, we can assume by symmetry that~$i=n$.

Observe that~$\del{n}{\ins{n}{K}}=K$. If~$K'\subseteq\ins{n}{K}$ generates~$\Mon_{n+1}$, then~$\del{n}{K'}\subseteq \del{n}{\ins{n}{K}}=K$ generates~$\Mon_n$, so~$\del{n}{K'}=K$ by minimality of~$K$.

Therefore, if~$I$ is a maximal interval of~$K$, then there exists an interval~$J$ of~$K'$ such that~$I\subseteq \del{n}{J}$. If~$I$ satisfies the conditions in the statement, then we show that~$\ins{n}{I}\subseteq J$, implying that~$\ins{n}{I}\in K'$. As~$I=\del{n}{J}$, we already know that~$J$ contains~$I$.
\begin{itemize}
\item If~$I$ contains neither~$0$ nor~$n-1$, then~$\ins{n}{I}=I\subseteq J$.
\item If~$I$ contains both~$0$ and~$n-1$, then so does~$J$. As~$J$ is an interval,~$J$ contains~$n$ or~$[0,n-1]$. The latter case is impossible because it implies that~$I=\del{n}{J}=[0,n-1]$ so~$[0,n-1]\in K$, contradicting the minimality of~$K$ (and~$n\geq 2$). Therefore,~$n\in J$, hence~$\ins{n}{I}\subseteq J$.\qedhere
\end{itemize}
%
%
%
%
\end{proof}

\begin{corollary}\label{cor_insertion_minimal}
Let~$n\geq 3$,~$K$ be minimal generating~$\Mon_n$ and~$i\in [0,n]$. Assume that every maximal interval of~$K$ contains or is disjoint from~$\{i-1\bmod n,i\bmod n\}$. The complex~$\ins{i}{K}$ is minimal generating~$\Mon_{n+1}$.
\end{corollary}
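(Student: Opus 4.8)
The plan is to package Propositions~\ref{prop_insertion} and~\ref{prop_insertion_minimal} together, the only missing ingredient being an explicit description of the maximal simplices of~$\ins{i}{K}$. Proposition~\ref{prop_insertion} already gives that~$\ins{i}{K}$ generates~$\Mon_{n+1}$, so all that remains is minimality, i.e.\ that any~$K'\subseteq\ins{i}{K}$ generating~$\Mon_{n+1}$ must equal~$\ins{i}{K}$.

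First I would observe that the operation~$S\mapsto\ins{i}{S}$ is monotone for inclusion and recoverable on~$K$: checking the cases of Definition~\ref{def_insertion}, if~$S\subseteq S'$ in~$K$ then~$\ins{i}{S}\subseteq\ins{i}{S'}$, and since~$\rho_i^+$ is a bijection onto~$[0,n]\setminus\{i\}$ while~$i\notin\im{\rho_i^+}$, one recovers~$S$ from~$\ins{i}{S}$ via~$\ins{i}{S}\setminus\{i\}=\rho_i^+(S)$; hence~$S\subsetneq S'$ implies~$\ins{i}{S}\subsetneq\ins{i}{S'}$. It follows that for distinct maximal simplices~$I\neq I'$ of~$K$ the sets~$\ins{i}{I}$ and~$\ins{i}{I'}$ are incomparable, and that every simplex of~$\ins{i}{K}$ --- being contained in some~$\ins{i}{S}$ with~$S\subseteq I$ for~$I$ maximal --- lies inside some~$\ins{i}{I}$. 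Therefore the maximal simplices of~$\ins{i}{K}$ are exactly the sets~$\ins{i}{I}$ as~$I$ ranges over the maximal simplices of~$K$ (which are intervals by Theorem~\ref{thm_intervals}).

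Then I would conclude as follows. Let~$K'\subseteq\ins{i}{K}$ generate~$\Mon_{n+1}$. Since~$n\geq 3\geq 2$ and~$K$ is minimal generating~$\Mon_n$, Proposition~\ref{prop_insertion_minimal} applies, and by the hypothesis of the corollary every maximal interval~$I$ of~$K$ contains or is disjoint from~$\{i-1\bmod n,\,i\bmod n\}$, so~$\ins{i}{I}\in K'$ for every such~$I$. As these are precisely the maximal simplices of~$\ins{i}{K}$, we get~$\ins{i}{K}\subseteq K'$, hence~$K'=\ins{i}{K}$, which is the required minimality.

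I do not expect a genuine obstacle: the statement is a recombination of earlier results. The only point needing (easy) care is the description of the maximal simplices of~$\ins{i}{K}$, i.e.\ verifying that vertex insertion neither merges two distinct maximal intervals of~$K$ nor makes an image~$\ins{i}{I}$ non-maximal --- which is exactly the injectivity of~$\rho_i^+$ recorded above.
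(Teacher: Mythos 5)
Your proposal is correct and follows essentially the route the paper intends: the corollary is a direct combination of Proposition~\ref{prop_insertion} (generation) with Proposition~\ref{prop_insertion_minimal} applied to every maximal interval of~$K$, which forces any generating~$K'\subseteq\ins{i}{K}$ to contain all maximal simplices of~$\ins{i}{K}$. Your extra verification that the maximal simplices of~$\ins{i}{K}$ are exactly the sets~$\ins{i}{I}$ (via monotonicity and injectivity of~$S\mapsto\ins{i}{S}$) is the small detail the paper leaves implicit, and it is carried out correctly.
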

This condition can be expressed as~$i-1\bmod n$ and~$i\bmod n$ belonging to the same simplices, or reading the same input cells.

We now weaken the condition in Corollary \ref{cor_insertion_minimal}.
\begin{proposition}[Vertex insertion and minimality]\label{prop_insertion_minimal2}
Let~$n\geq 2$,~$K$ be minimal generating~$\Mon_n$ and~$i\in[0,n]$. Assume that every interval of~$K$ containing~$i\bmod n$ contains~$i-1\bmod n$ (or symmetrically, every interval containing~$i-1\bmod n$ contains~$i\bmod n$). The complex~$\ins{i}{K}$ is minimal generating~$\Mon_{n+1}$.
\end{proposition}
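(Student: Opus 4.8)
The plan is to invoke Proposition~\ref{prop_insertion}, which gives that $\ins{i}{K}$ generates $\Mon_{n+1}$, and then to prove its minimality. I would first reduce to the case $i=n$ with the hypothesis taken in its first form (every maximal interval of $K$ containing $0$ contains $n-1$): a rotation of $\Di_{2n}$ carries the edge between positions $i-1\bmod n$ and $i\bmod n$ onto the edge between $n-1$ and $0$ and carries the insertion onto a rotated copy of the corresponding inserted complex, while the reflection $s$ of $\Di_{2n}$ (which swaps $0$ and $n-1$ and fixes the edge between them) turns the second form of the hypothesis into the first; at each step one uses Proposition~\ref{prop_mon_symmetries} and that these symmetries commute suitably with vertex insertion. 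To prove minimality it is enough to show that every $K'\subseteq\ins{n}{K}$ generating $\Mon_{n+1}$ equals $\ins{n}{K}$, and, replacing $K'$ by a minimal generating subcomplex of itself, I may assume $K'$ is minimal --- this extra assumption is what makes the argument go through. Since the maximal simplices of $\ins{n}{K}$ are exactly the sets $\ins{n}{I}$ with $I$ a maximal simplex of $K$, it suffices to show $\ins{n}{I}\in K'$ for each such $I$.

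Each maximal simplex $I$ of $K$ is disjoint from $\{0,n-1\}$, contains both $0$ and $n-1$, or contains $n-1$ but not $0$; the first form of the hypothesis rules out the remaining possibility of containing $0$ but not $n-1$. The first two kinds give $\ins{n}{I}\in K'$ directly by Proposition~\ref{prop_insertion_minimal}, so I may assume $I$ contains $n-1$ but not $0$. By Theorem~\ref{thm_intervals} such an $I$ has the form $\int{a,n-1}$ with $1\le a$, and in fact $a\le n-2$: if $I=\{n-1\}$ were a maximal simplex, then by the hypothesis every maximal simplex meeting $0$ would be contained in $\{n-1\}$, forcing $0$ to lie in no simplex of $K$, which is impossible since $K$ generates $\Mon_n$. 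For such $I$ one has $\ins{n}{I}=\int{a,n}$, so the goal becomes $\int{a,n}\in K'$.

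The crux is the identity $\del{n-1}{\ins{n}{K}}=K$, which holds exactly because of the hypothesis. Indeed, for a maximal simplex $S$ of $K$ the hypothesis makes ``$S$ contains $0$ or $n-1$'' equivalent to ``$n-1\in S$'', so $\ins{n}{S}=S\cup\{n\}$ when $n-1\in S$ and $\ins{n}{S}=S$ otherwise; applying $\del{n-1}$ (which acts as the bijection $\rho^-_{n-1}$ on $[0,n]\setminus\{n-1\}$, fixing $[0,n-2]$ and sending $n$ to $n-1$) returns $S$ in both cases. Hence $\del{n-1}{K'}\subseteq\del{n-1}{\ins{n}{K}}=K$, and since $\del{n-1}{K'}$ generates $\Mon_n$ by Proposition~\ref{prop_deletion}, minimality of $K$ forces $\del{n-1}{K'}=K$. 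Now $I=\int{a,n-1}$ is a maximal simplex of $K=\del{n-1}{K'}$, so there is a maximal simplex $J$ of $K'$ with $\del{n-1}{J}=I$; since $\rho^-_{n-1}$ is a bijection, $J\setminus\{n-1\}$ is forced to equal $\int{a,n-2}\cup\{n\}$, hence $J$ is $\int{a,n-2}\cup\{n\}$ or $\int{a,n}$. But $K'$ is minimal, so its maximal simplex $J$ is an interval of $\Z/(n+1)\Z$ by Theorem~\ref{thm_intervals}, whereas $\int{a,n-2}\cup\{n\}$ is not: it misses $n-1$ and also the nonempty block $[0,a-1]$, so it breaks into two cyclic arcs. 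Therefore $J=\int{a,n}=\ins{n}{I}$, which lies in $K'$, completing the proof.

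I expect the main obstacle to lie not in the final case analysis --- which is short once $K'$ is assumed minimal and the identity $\del{n-1}{\ins{n}{K}}=K$ is in hand --- but in the symmetry reduction to $i=n$ and to the first form of the hypothesis, which needs a careful check of how vertex insertion interacts with the relevant dihedral symmetries, especially the reflection fixing the inserted vertex. I would also handle with care the degenerate sub-cases (very small $n$, and the excluded case $I=\{n-1\}$), which are exactly where the interval structure of $K$ and the hypothesis itself are genuinely used.
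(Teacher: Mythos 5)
Your proof is correct, but for the crucial case it takes a genuinely different route from the paper. Both arguments reduce to $i=n$, dispose of the maximal intervals containing or disjoint from $\{0,n-1\}$ via Proposition~\ref{prop_insertion_minimal}, and must handle an interval $I=\int{a,n-1}$ containing $n-1$ but not $0$ (the case ``$0$ but not $n-1$'' being excluded by the hypothesis, and $I=\{n-1\}$ being ruled out --- your exclusion is fine, though the phrase ``contained in $\{n-1\}$'' should read ``contains $n-1$'', hence properly contains $\{n-1\}$). For that remaining case the paper observes that the hypothesis gives the inclusion $\ins{n}{K}\subseteq\ins{n-1}{K}$, notes $\ins{n}{I}=\ins{n-1}{I}$ since $n-2,n-1\in I$, and simply reapplies Proposition~\ref{prop_insertion_minimal} at the shifted insertion position $n-1$, where $I$ now contains the relevant pair $\{n-2,n-1\}$; no further machinery is needed. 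You instead exploit the hypothesis through the identity $\del{n-1}{\ins{n}{K}}=K$, use minimality of $K$ to get $\del{n-1}{K'}=K$, and then resolve the ambiguity between the two preimages $\int{a,n-2}\cup\{n\}$ and $\int{a,n}$ by invoking Theorem~\ref{thm_intervals}, which forces you to first pass to a minimal generating subcomplex of $K'$. Your route is a self-contained adaptation of the deletion argument behind Proposition~\ref{prop_insertion_minimal} and makes very explicit where the hypothesis is used, at the cost of the extra reduction to $K'$ minimal and the appeal to the interval theorem; the paper's route is lighter, needing neither. Two small points to tidy: justify that a simplex $J_0\in K'$ with $\del{n-1}{J_0}=I$ extends to a \emph{maximal} $J$ still satisfying $\del{n-1}{J}=I$ (this follows from maximality of $I$ in $K$), and carry out, or at least state precisely, the dihedral compatibility of insertion used in the reduction to $i=n$ --- the paper glosses over the latter just as you do.
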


\begin{proof}
Again, we assume by symmetry that~$i=n$ for simplicity.

First observe that~$\ins{n}{K}\subseteq\ins{n-1}{K}$. Indeed, let~$I$ be an interval of~$K$:
\begin{itemize}
\item If~$n-1\in I$, then~$\ins{n}{I}=\ins{n-1}{I}$,
\item  If~$n-1\notin I$, then~$0\notin I$ so~$\ins{n}{I}=I\subseteq \ins{n-1}{I}$.
\end{itemize}

Now, let~$K'\subseteq \ins{n}{K}$ generate~$\Mon_{n+1}$ and let~$I$ be a maximal interval of~$K$. We show that~$K'$ contains~$\ins{n}{I}$. If~$I$ contains or is disjoint from~$\{0,n-1\}$, then~$K'$ contains~$\ins{n}{I}$ by Proposition \ref{prop_insertion_minimal}.

If~$I$ contains~$n-1$ but not~$0$, then~$I$ contains~$n-2$ ($I$ is maximal so it is not a singleton as~$n\geq 3$), so~$\ins{n}{I}=\ins{n-1}{I}$. We apply Proposition \ref{prop_insertion_minimal} to~$i=n-1$. As~$I$ contains both~$n-2$ and~$n-1$, and~$K'\subseteq \ins{n}{K}\subseteq \ins{n-1}{K}$,~$K'$ contains~$\ins{n-1}{I}$ by Proposition \ref{prop_insertion_minimal}.

We have covered all the possible maximal intervals~$I$, so~$K'=\ins{n}{K}$, which is minimal.
%
\end{proof}

The vertex insertion illustrated in Figure \ref{fig_insertion} satisfies the conditions of Proposition \ref{prop_insertion_minimal2}: every maximal simplex of~$K_5$ containing~$4$ also contains~$3$, so if~$K_5$ is minimal generating~$\Mon_5$ (which we will establish in Section \ref{sec_K5_family}), then~$\ins{4}{K_5}$ is minimal generating~$\Mon_6$.

\section{A framework for discarding a complex}\label{sec_rule_system}
The main way to prove that a complex does not generate a language is to assume that it does and to successively derive constraints expressing that certain partial inputs are sent to certain partial outputs, and eventually derive a contradiction from conflicting constraints that send compatible partial inputs to incompatible partial outputs. Most of the proofs of such results in \cite{H25} and in this article follow this pattern and we find it clarifying to present a unifying framework in which these arguments can be expressed. We will however not push the formalization too far in order to make the arguments more human-friendly, and we will not explicitly refer to this framework in the proofs, but only use it as a guideline.



This framework is essentially a logical inference system, whose interest is twofold:
it identifies the common structure in many arguments, hopefully clarifying them, and it can be turned into an automatic tool for finding proofs. Some of the arguments of this article have been found by a computer program exhaustively searching for conflicts (the program is available at \cite{Prover}). However, such an automatic approach quickly reaches its limits due to the combinatorial explosion of the search space. Moreover, it does not directly solve the problem of finding the whole family of complexes generating a language, and is only used as a tool that helps discarding particular complexes and testing hypotheses.
 

\subsection{The inference system}
We now present this framework.

Let~$A$ and~$I$ be finite sets. A \textbf{partial sequence} of~$A^I$ is an element~$\alpha$ of~$A^{D}$ for some~$D\subseteq I$ called the \textbf{domain} of~$\alpha$ and written~$\dom(\alpha)$. Two partial sequences~$\alpha,\beta$ are \textbf{compatible} if they have a common extension, i.e.~if they coincide on~$\dom(\alpha)\cap\dom(\beta)$. Their least common extension is denoted by~$\alpha\cup\beta$.

Let~$f:B^J\to A^I$ be a function. A \textbf{partial intput} is a partial sequence of~$B^J$, a \textbf{partial output} is a partial sequence of~$A^I$. A \textbf{constraint} is a statement~$f(\alpha)\succeq p$ where~$\alpha$ is a partial input and~$p$ is a partial output. It means that for every input~$x\in B^J$ extending~$\alpha$,~$f(x)$ extends~$p$. We will sometimes write a constraint~$f(\alpha)\succeq p$ as~$f_D(\alpha)=p$, where~$D=\dom(p)$. Moreover, if~$\dom(p)$ is a singleton~$\{i\}$ and~$p$ assigns value~$v$ to~$i$, we will write~$f(\alpha)\succeq p$ as~$f_i(\alpha)=v$.

We now restrict our attention to binary languages, i.e.~subsets of~$\{0,1\}^I$ for some~$I$. A \textbf{rule} is a pair~$(p,q)$ of partial binary sequences, written~$p\to q$. A partial binary sequence~$r$ \textbf{respects} a rule~$p\to q$ if~$r$ satisfies the implication
\[
\text{$r$ extends $p$} \implies \text{$r$ extends $q$,}
\]
and a language~$L$ respects a rule if every element of~$L$ respects that rule.

\begin{proposition}
For a finite set~$I$, every binary language~$L\subseteq \{0,1\}^I$ is uniquely determined by the set of rules it respects. Moreover, a partial sequence has an extension in~$L$ if and only if it respects every rule respected by~$L$.
\end{proposition}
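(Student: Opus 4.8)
The statement has two parts: (1) $L$ is determined by the set of rules it respects, and (2) a partial sequence has an extension in $L$ iff it respects every rule respected by $L$. The plan is to prove (2) first, since (1) follows from it almost immediately. For (2), the key observation is that for any word $w \in \{0,1\}^I$ (a total sequence), $w$ is in $L$ if and only if $w$ respects every rule respected by $L$: the forward direction is the definition of ``$L$ respects a rule'', and for the converse I would use a specific rule. Given $w \notin L$, I want to exhibit a rule respected by $L$ that $w$ violates. The natural candidate is the rule $w \to \bot$, but since rules are pairs of partial sequences and there is no explicit ``false'', instead I would take the rule $p \to q$ where $p = \restr{w}{D}$ and $q = \restr{w}{D}$ for a suitable $D$ — this does not work directly. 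The cleaner choice: since $w \notin L$, pick the rule $p \to q$ with $p = w\res{J}$ for $J = I$ and $q$ assigning to some coordinate $i$ the value $1 - w_i$; this rule is respected by $L$ (vacuously, since no element of $L$ extends $p = w$) but is violated by $w$ itself. This shows the total sequences in $L$ are exactly those respecting all rules respected by $L$.

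\textbf{From total to partial.} Having characterized membership for total sequences, I would handle a general partial sequence $r \in \{0,1\}^D$. If $r$ has an extension $w \in L$, then $w$ respects every rule respected by $L$, and I would check that this forces $r$ to respect every such rule: if $r$ extends $p$ then $w$ extends $p$, so $w$ extends $q$; but I need $r$ (not $w$) to extend $q$. This requires the rules in play to have $\dom(q) \subseteq \dom(p)$ — so I should either restrict the notion of ``rule'' to such pairs (which is harmless, as rules with $\dom(q) \not\subseteq \dom(p)$ carry no more information than their restriction to $\dom(p) \cap \dom(q)$ together with... ) or argue more carefully. The clean fix is to observe that if $L$ respects $p \to q$, then $L$ also respects $p \to q\res{\dom(p)}$, and conversely respecting the latter plus the rules $(p \cup q\res{\dom(p)}) \to q$ recovers it; so without loss of generality one may assume $\dom(q) \subseteq \dom(p)$ in every rule, and then $r$ extending $p$ does force $r$ extending $q$. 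For the converse direction of (2): if $r$ respects every rule respected by $L$, I want an extension of $r$ in $L$. I would argue by contradiction — if no extension of $r$ lies in $L$, then the rule $r \to q$ with $\dom(q) = \dom(r)$ and $q$ differing from $r$ at one coordinate is respected by $L$ (no element of $L$ extends $r$) but violated by $r$.

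\textbf{Part (1).} Once (2) is established, uniqueness is immediate: if $L_1$ and $L_2$ respect exactly the same rules, then a total sequence $w$ lies in $L_1$ iff it respects all those rules iff it lies in $L_2$ (using (2), or directly the total-sequence characterization), hence $L_1 = L_2$.

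\textbf{Anticipated obstacle.} The main subtlety is bookkeeping around the domains of $p$ and $q$ in a rule — making precise the claim that one may assume $\dom(q) \subseteq \dom(p)$, or else threading the argument so that this assumption is never needed. This is the step where a slick phrasing matters most; everything else is a short contradiction argument using rules of the form ``$w \to (\text{flip one bit of } w)$'' which are respected by $L$ precisely because $L$ contains no extension of $w$.
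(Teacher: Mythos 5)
Your proof of the substantive direction (if $r$ has no extension in $L$, then some rule respected by $L$ is violated by $r$) is correct but takes a genuinely different route from the paper. You use \emph{vacuously} respected rules: since no element of $L$ extends $r$, the rule $r\to q$, with $q$ obtained from $r$ by flipping one coordinate of $\dom(r)$, is respected by $L$ for free and violated by $r$; the same trick with a total $w\notin L$ characterizes membership and hence gives uniqueness. (Both your argument and the paper's implicitly need $L\neq\emptyset$, and yours needs $\dom(r)\neq\emptyset$, which is automatic when $L\neq\emptyset$.) The paper instead takes a minimal $E\subseteq\dom(r)$ with $\restr{r}{E}$ non-extendible, sets $p=\restr{r}{E\setminus\{i\}}$, picks $y\in L$ extending $p$ and uses $q=\restr{y}{E}$; this is exactly where binarity is used ($z_i\neq r_i$ forces $z_i=y_i$). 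What the paper's longer construction buys is that the violated rule has a premise realizable in $L$, i.e.\ a non-vacuous ``forcing'' rule of the kind that actually does work in the inference system, and the kind that ceases to suffice over larger alphabets; your vacuous rules make the argument alphabet-independent but carry no information about $L$ beyond forbidding the single pattern $r$. Both are valid proofs of the statement as written; yours is shorter, the paper's says more.

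The step of your write-up that does not hold up is the ``WLOG $\dom(q)\subseteq\dom(p)$'' manoeuvre for the direction ``$r$ has an extension in $L$ $\Rightarrow$ $r$ respects every rule respected by $L$''. The statement quantifies over all rules, so you cannot restrict the class of rules, and your recovery argument is circular: the auxiliary rule $(p\cup q\res{\dom(p)})\to q$ is itself outside the restricted class (and $p\cup q\res{\dom(p)}$ need not even be defined when $p,q$ are incompatible). In fact, under the literal definition of ``respects'' this direction is false for partial sequences: for $L=\{00,11\}$, the rule $p\to q$ where $p$ assigns $0$ to position $0$ and $q$ assigns $0$ to position $1$ is respected by $L$, yet the partial sequence with domain $\{0\}$ and value $0$ has the extension $00\in L$ and does not extend $q$. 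The paper's own proof quietly establishes only the other direction (plus uniqueness). The honest repair is either to read ``$r$ respects $p\to q$'' as ``$r$ extends $p$ $\Rightarrow$ $r$ is compatible with $q$'' (then your direction is immediate, since a common extension $w\in L$ of $r$ and $p$ witnesses compatibility, and both your flip-rule and the paper's rule still yield the hard direction because their $q$ conflicts with $r$ on $\dom(r)$), or to build $\dom(q)\subseteq\dom(p)$ into the definition of a rule from the start (your vacuous rules have $\dom(q)=\dom(p)$, so your main argument survives). Either way, state the repair explicitly rather than invoking a WLOG.
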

\begin{proof}
Let~$r$ be a partial sequence in~$\{0,1\}^I$ that has no extension in~$L$. There exists a minimal set~$E\subseteq \dom(r)$ such that~$\restr{r}{E}$ has no extension in~$L$. As~$L$ is non-empty,~$E$ is non-empty as well. Let~$i\in E$ and~$p=\restr{r}{E\setminus \{i\}}$. By minimality of~$E$,~$p$ has an extension~$y\in L$, let~$q=\restr{y}{E}$. Observe that for any~$z\in L$ extending~$p$, one must have~$z_i\neq r_i$, otherwise~$\restr{r}{E}=\restr{z}{E}$ has an extension~$z$ in~$L$. In particular, one has~$y_i\neq r_i$, so~$r$ does not extend~$q$ hence~$r$ does not respect the rule~$p\to q$.

However, every~$z\in L$ respects~$p\to q$. Indeed, if~$z$ extends~$p$ then we have shown that~$z_i\neq r_i$, implying~$z_i=y_i$ (we are using the assumption that the alphabet is binary), hence~$z$ extends~$q$. Therefore, the rule~$p\to q$ is respected by~$L$ but not by~$r$.
\end{proof}

Note that languages over larger alphabets cannot in general be fully described by such rules.

We recall from Proposition \ref{prop_canonical} that if~$K$ generates~$L$, then there is a generating function~$f:L^J\to L$ satisfying~$K_f\subseteq K$, where~$J$ is the set of maximal simplices of~$K$. Most of the arguments showing that a complex~$K$ does not generate a given binary language~$L$ consist in assuming the existence of such a generating function and deriving constraints using the following rules:
\begin{description}
\item[Language axioms.] For each~$w\in L$, there exists~$x\in L^J$ such that~$f(x)=w$ (and we can assume that~$x=(w,\ldots,w)$),
\item[Language rules.] Let~$p\to q$ be a rule respected by~$L$. If~$f(\alpha)\succeq p$ then~$f(\alpha)\succeq q$,
\item[Restriction rule.] If~$f_i(\alpha)=a$, then~$f_i(\restr{\alpha}{\W_f(i)})=a$,
\item[Join rule.] If~$\alpha,\beta$ are compatible,~$f(\alpha)\succeq p$ and~$f(\beta)\succeq q$ and~$p,q$ are compatible, then~$f(\alpha\cup\beta)\succeq p\cup q$.
\end{description}
The argument then ends with:
\begin{description}
\item[Conflict rule.] If~$\alpha,\beta$ are compatible,~$f(\alpha)\succeq p$ and~$f(\beta)\succeq q$ and~$p,q$ are incompatible, then we obtain a contradiction, showing that~$f$ cannot exist.
\end{description}
Typically, the conflict rule will be applied to compatible partial inputs~$\alpha,\beta$ such that~$f_i(\alpha)=0$ and~$f_i(\beta)=1$ for some~$i\in I$.

The language axioms express the surjectivity of~$f$ (the specific form of the inputs comes from Proposition \ref{prop_canonical}). The language rules capture the correlations imposed by the language. The restriction and join rules exploit the lack of certain simplices in the complex~$K_f$ by reducing the domains of the involved partial inputs and eventually producing contradictory statements.

It is not difficult to see that this inference system is sound, in the sense that any constraint~$f(\alpha)\succeq p$ that can be derived using these rules is indeed true under the assumption that~$f$ exists, and that if the conflict rule can be applied then it gives a contradiction and~$f$ cannot exist.

However, this system is incomplete: in \cite{H25} we proved that the complex~$K=\comp{\int{1,4},\int{0,2},\int{4,1},\{0,2,3\},\{0,2,4\}}$ does not generate~$\mathsf{U}_5$, which is the set of binary strings of length~$5$ containing one occurrence of~$1$. An exhaustive search by a computer program shows that the inference system does not find any conflict for~$K$ and~$\mathsf{U}_5$.

\subsection{The rules of \texorpdfstring{$\Mon_n$}{Mon}}

A partial sequence~$p$ has an extension in~$\Mon_n$ if and only if it respects the following rules:
%
 for~$0\leq i<j<k<n$ and~$v\in\{0,1\}$:
\begin{itemize}
\item If~$p_i=p_k=v$, then~$p_j=v$,
\item If~$\overline{p_i}=p_j=v$, then~$p_k=v$,
\item If~$p_j=\overline{p_k}=v$, then~$p_i=v$.
\end{itemize}

\subsection{Key technical tool}
In this section we prove a technical result that will extensively be used in the analysis of~$\Mon_n$, in order to prove the minimality of the previous complexes as well as more general results. The proof implicitly uses the inference system.

Monotonic sequences are rigid in the sense that the values of two cells sometimes determine the values of all the other cells. For instance, if~$y\in\Mon_n$ satisfies~$y_0=y_{n-1}=0$, then~$y_j=0$ for all~$j$. Therefore, in a generation procedure, cells need to communicate in order to account for these correlations. The next result captures this idea, implying in particular that if a complex~$K$ generates~$\Mon_n$, then it contains many intervals: for all~$i,j\in I_n$, at least one of the intervals~$\int{i+1,j}$ and~$\int{j,i}$ must belong to~$K$.

Given a function~$f:B^J\to A^I$, we recall that if~$S\subseteq I$, then~$\up[f]{S}=\bigcap_{i\in S}\W_f(i)$ is the set of input cells visible by all the output cells in~$S$, and is non-empty if and only if~$S\in K_f$.
\begin{theorem}[Main tool]\label{thm_main_tool}
Let~$i,j\in I_n$ and~$v\in\{0,1\}$. If~$f$ generates~$\Mon_n$, then there exists a partial input~$\alpha$ such that
\begin{itemize}
\item $\dom(\alpha)=\up[f]{\int{i+1,j}}\cup\up[f]{\int{j,i}}$,
\item $f_j(\alpha)=v$.
\end{itemize}
\end{theorem}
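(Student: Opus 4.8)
The plan is to exploit the canonical form of a generating function together with the rigidity of monotonic sequences, building the partial input $\alpha$ by restricting a full input that produces a carefully chosen monotonic word. First I would invoke Proposition \ref{prop_canonical} to assume that $f:L^J\to L$ where $L=\Mon_n$ and $J$ is the set of maximal simplices of $K$, so that $f(w,\dots,w)=w$ for every $w\in\Mon_n$; this is harmless since by Proposition \ref{prop_image} any generating function can be replaced by a canonical one without enlarging the complex, and the statement only asserts the existence of some $\alpha$. Fix $i,j\in I_n$ and $v\in\{0,1\}$. The idea is to find a monotonic word $w\in\Mon_n$ with $w_j=v$ that is ``locally determined'' by its values on the two intervals $\int{i+1,j}$ and $\int{j,i}$ in the sense that any input agreeing with $(w,\dots,w)$ on the cells of $\up[f]{\int{i+1,j}}\cup\up[f]{\int{j,i}}$ is forced to output $v$ at position $j$.

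The key step is the following: consider the word $w$ which is constant equal to $v$ on $\int{j,i}$ and equal to $\overline v$ on $\int{i+1,j}\setminus\{j\}$ — concretely, if we rotate so that $j$ comes first, this is $v\,\overline v\,\overline v\cdots\overline v\,v\,v\cdots v$ read around the circle, a monotonic sequence with $w_j=v$ (one checks it is one of the words $0^a1^b$ or $1^a0^b$ after the rotation). Let $x=(w,\dots,w)\in L^J$, so $f(x)=w$ and in particular $f_j(x)=v$. Let $\alpha=\restr{x}{\up[f]{\int{i+1,j}}\cup\up[f]{\int{j,i}}}$; its domain is as required. It remains to show $f_j(\alpha)=v$, i.e. that every $y\in L^J$ extending $\alpha$ satisfies $f_j(y)=v$. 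Given such a $y$, I would argue that the output word $u=f(y)$ is forced to agree with $w$ at position $j$: by the restriction rule, $f_k(y)$ depends only on $\restr{y}{\W_f(k)}$, and for every $k\in\int{i+1,j}$ we have $\W_f(k)\supseteq\up[f]{\int{i+1,j}}$ — wait, that is false in general, so instead I would use the two ``extremes'' of each interval. Precisely, since $\up[f]{\int{i+1,j}}=\W_f(i+1)\cap\cdots\cap\W_f(j)$ is contained in $\W_f(j)$, the value $f_j(y)$ is already determined by $\restr{y}{\W_f(j)}$, and $\up[f]{\int{i+1,j}}\cup\up[f]{\int{j,i}}$ need not contain all of $\W_f(j)$ — so this direct route also needs care.

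The honest approach, and the one I would carry out, is to derive $f_j(y)=v$ using the inference system: from $x$ and the language axioms we get constraints $f_k(x_S)=v$ or $\overline v$ for the simplices $S$ meeting the two intervals, restrict these (restriction rule) to the windows, observe that on the relevant input cells $\alpha$ and $x$ agree so $y$ extends the restricted constraints too, and then propagate through the language rules for $\Mon_n$ (the three monotonicity implications) to force $f_j(y)=v$; the point is that $\int{i+1,j}$ and $\int{j,i}$ together cover $I_n$ and overlap exactly at $\{i+1\bmod n\}$ wait at $\{j,i+1\}$ hmm — they overlap at the endpoints, and the constant-on-each-side word $w$ makes the majority/monotonicity rules collapse to a forced value at $j$. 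The main obstacle is precisely this propagation argument: one must check that knowing the outputs of $f$ restricted to the windows $\up[f]{\int{i+1,j}}$ and $\up[f]{\int{j,i}}$ suffices — via the language rules and the join rule — to pin down $f_j$, and this is where the specific choice of the ``two blocks'' word and the fact that $\int{i+1,j}\cup\int{j,i}=I_n$ must be used carefully. I expect the formal proof to set up the two families of constraints (one per interval), join them, and apply a single monotonicity rule at the triple $(i, j, \text{something})$ to conclude; the bookkeeping of which cells lie in which window is the delicate part, but no genuinely new idea beyond the inference system and Proposition \ref{prop_canonical} is needed.
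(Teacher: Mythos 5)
There is a genuine gap: the forcing step, which is the entire mathematical content of the theorem, is not supplied. Your setup (take a suitable $w\in\Mon_n$, an input $x$ with $f(x)=w$, let $\alpha$ be its restriction to $\up[f]{\int{i+1,j}}\cup\up[f]{\int{j,i}}$, and argue that $f_j$ is forced) matches the paper's, but you then defer exactly the hard part, and your estimate of it --- ``set up the two families of constraints, join them, and apply a single monotonicity rule'' --- would fail. The difficulty is that $\dom(\alpha)$ is the union of the \emph{intersections} of the windows over each interval, which is in general a small subset of $\W_f(j)$, so no single application of the restriction, join and language rules pins down $f_j$. The paper gets around this with a propagation lemma (Lemma \ref{lem_mono}): if $\alpha,\beta$ are compatible partial inputs with $f_i(\alpha)=f_k(\beta)=v$ and $i\leq j\leq k$, then $f_j$ is already forced by $\restr{\alpha}{\up[f]{\int{i,j}}}\cup\restr{\beta}{\up[f]{\int{j,k}}}$. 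Its proof needs two nested inductions: first a claim in which one side keeps the \emph{full} window $\W_f(j)$ (proved by increasing induction along the interval, shrinking the domain one cell at a time via the restriction and join rules and one monotonicity rule per step), and then a decreasing induction that uses this claim to shrink the second side down to $\up[f]{\int{j,k}}$. This double induction is the genuinely new idea your sketch assumes away as ``bookkeeping.''

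A secondary but real problem is your choice of word: the sequence equal to $v$ on $\int{j,i}$ and $\overline{v}$ on $\int{i+1,j}\setminus\{j\}$ is a two-block \emph{cyclic} word and is not in $\Mon_n$ for general $i,j$ (e.g.\ $1001$ for $n=4$); ``rotating so that $j$ comes first'' is not available, since pure rotations do not preserve $\Mon_n$ --- the symmetry $r\in\Di_{2n}$ is a rotation \emph{combined with a bit flip}. The correct route, and the paper's, is to prove the statement only for $i=n-1$, $v=0$ and transfer the general case through the dihedral action on both $I_n$ and $\Mon_n$; in that base case one simply takes the constant word $0^n$, applies Lemma \ref{lem_mono} with $\beta=\alpha$, $i'=0$, $k=n-1$, and the two endpoint values $f_0(\alpha)=f_{n-1}(\alpha)=0$ force $f_j$ on the restricted domain. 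So the skeleton of your argument is right, but both the reduction by symmetry and, above all, the inductive propagation lemma must be made explicit for the proof to stand.
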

This result implies in particular that at least one of the intervals~$\int{i+1,j}$ and~$\int{j,i}$ must belong to~$K$, otherwise~$\dom(\alpha)$ would be empty hence~$f_j$ would constantly equal~$v$, which is a contradiction as it takes both values~$0$ and~$1$.

 By symmetry, it is sufficient to prove the result for~$i=n-1$ and~$v=0$: for any~$j$, there exists~$\alpha$ such that~$\dom(\alpha)=\up[f]{\int{0,j}}\cup\up[f]{\int{j,n-1}}$ and~$f_j(\alpha)=0$. 
It will follow from the following more general statement. In order to lighten the notations, we now drop the subscript~$f$ in~$\W_f$.

\begin{lemma}\label{lem_mono}
Let~$0\leq i\leq j \leq k<n$ and~$v\in\{0,1\}$. Let~$\alpha,\beta$ be compatible partial inputs such that~$f_i(\alpha)=f_k(\beta)=v$. Then~$f_j(\gamma)=v$, where
\begin{equation*}
\gamma=\alpha\res{\up{\int{i,j}}}\cup\beta\res{\up{\int{j,k}}}.
\end{equation*}
\end{lemma}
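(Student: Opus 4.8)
### Proof proposal for Lemma \ref{lem_mono}

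The plan is to induct on $k-i$. In the base case $i=j=k$, the interval $\int{i,j}=\int{j,k}=\{i\}$, so $\up{\int{i,j}}=\up{\int{j,k}}=\W_f(i)$ and $\gamma$ is the common extension of $\alpha\res{\W_f(i)}$ and $\beta\res{\W_f(i)}$. Since $f_i(\alpha)=v$, the restriction rule gives $f_i(\alpha\res{\W_f(i)})=v$, and as $\alpha\res{\W_f(i)}$ and $\gamma$ agree on $\W_f(i)=\dom(\gamma)$, we get $f_i(\gamma)=v=f_j(\gamma)$. (Note $\alpha,\beta$ are assumed compatible, so $\alpha\res{\W_f(i)}$ and $\beta\res{\W_f(i)}$ are compatible and $\gamma$ is well-defined.)

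For the inductive step, suppose $i<k$ and assume the statement for all smaller values of $k-i$. I would split on the position of $j$. If $j>i$, apply the induction hypothesis to the triple $i+1\le j\le k$ with the same $v$, $\alpha$, $\beta$; this produces a partial input $\gamma'=\alpha\res{\up{\int{i+1,j}}}\cup\beta\res{\up{\int{j,k}}}$ with $f_j(\gamma')=v$. Separately, I need another constraint forcing some cell between $i$ and $j$ — the natural move is to use the language axiom $f(w,\dots,w)=w$ for $w=0^{i+1}v^{\,\cdots}$ or to re-invoke the hypothesis on a shorter triple — and then combine it with $f_i(\alpha)=v$ via the join rule together with the monotonicity rule "if $p_i=p_k=v$ then $p_j=v$" to conclude $f_j$ on the appropriate union. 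The restriction rule is then applied to shrink domains down to $\up{\int{i,j}}\cup\up{\int{j,k}}$. The symmetric case $j=i<k$ is handled by shifting the roles of $i$ and $k$ (recurse on $i\le j\le k-1$).

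The delicate point — and what I expect to be the main obstacle — is bookkeeping the domains: after applying the induction hypothesis and the join rule one obtains a constraint whose domain is a union of sets of the form $\up{\int{a,b}}$, and one must check that restricting to $\up{\int{i,j}}\cup\up{\int{j,k}}$ still leaves enough information to keep the value of $f_j$ pinned at $v$. This works because $\up{\int{i,j}}\supseteq\up{\int{i+1,j}}$ and $\up{\int{i,j}}$ contains $\W_f(i)$-relevant cells; the containments among the various $\up{\cdot}$ sets (monotone in the interval under inclusion) are what make the restriction rule applicable without losing the constraint. I would organize the argument so that at each stage every derived constraint has domain contained in $\up{\int{i,j}}\cup\up{\int{j,k}}$, invoking the restriction rule immediately after each join, so that the final conflict-free constraint is exactly $f_j(\gamma)=v$ with the stated domain. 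The monotonicity rules of $\Mon_n$ (the three bullet rules stated just before the lemma) are exactly the language rules needed to propagate $v$ inward from positions $i$ and $k$ to position $j$.
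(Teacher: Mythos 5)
Your base case is fine, but the inductive step contains the real content of the lemma and it is not carried out; worse, the mechanism you invoke for it cannot work as described. After applying the induction hypothesis to $(i+1,j,k)$ you hold a constraint $f_j(\gamma')=v$ with $\dom(\gamma')=\up{\int{i+1,j}}\cup\up{\int{j,k}}$, and you need the \emph{stronger} constraint with domain $\up{\int{i,j}}\cup\up{\int{j,k}}$: since $\int{i,j}\supseteq\int{i+1,j}$, one has $\up{\int{i,j}}\subseteq\up{\int{i+1,j}}$, so the target domain is \emph{smaller} than what the hypothesis delivers. Your claim that ``$\up{\int{i,j}}\supseteq\up{\int{i+1,j}}$'' is backwards, and this reversal is not cosmetic: it is exactly why you cannot finish by ``applying the restriction rule to shrink domains down to $\up{\int{i,j}}\cup\up{\int{j,k}}$''. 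The restriction rule only permits replacing a constraint $f_j(\alpha)=v$ by $f_j(\restr{\alpha}{\W(j)})=v$, and all the domains in play are already contained in $\W(j)$, so it yields nothing; there is no rule that lets you discard the input cells in $\up{\int{i+1,j}}\setminus\up{\int{i,j}}$ directly. Joining with $f_i(\alpha)=v$ and reapplying the hypothesis to shorter triples (e.g.\ to $(i,j,j)$ with the pair $(\alpha,\gamma')$) only reproduces the same too-large domain, so the step you leave as ``the natural move is \dots or \dots'' is precisely the missing idea.

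The paper's proof shows what is actually needed: the trimming must be done by walking along the chain of cells, one window at a time. It first proves a one-sided claim, $f_j(\mu)=v$ for $\mu=\alpha\res{\up{\int{i,j}}}\cup\beta\res{\W(j)}$, by \emph{increasing} induction on $j$: at each step one joins the current constraint with $\beta$, uses the $\Mon_n$ rule to pin the value at $j+1$, and then restricts to $\W(j+1)$, which intersects $\up{\int{i,j}}$ to give exactly $\up{\int{i,j+1}}$ — this is how $\alpha$'s side gets cut down cell by cell. The lemma itself is then proved by \emph{decreasing} induction on $j$, applying that claim with $\beta$ replaced by the previously constructed $\gamma_j$ so that $\beta$'s side gets trimmed to $\up{\int{j-1,k}}$ in the same fashion. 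Your single induction on $k-i$ with an ad hoc join-and-restrict at the end does not reproduce this two-pass propagation, and as written the argument has a genuine gap at its central step.
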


\begin{proof}
We assume that~$v=0$, the other case is similar. 
We first show a weaker statement.

\begin{claim}\label{claim_left}
Under the same assumptions, one has~$f_j(\mu)=0$, where
\begin{equation*}
\mu=\alpha\res{\up{\int{i,j}}}\cup\beta\res{\W(j)}.
\end{equation*}
\end{claim}
We fix~$0\leq i\leq k<n$ and prove it by induction on~$j\in [i,k]$. For~$j=i$, one has~$\mu=\restr{\alpha}{\W(i)}\cup\restr{\beta}{\W(i)}$, which extends~$\restr{\alpha}{\W(i)}$ so indeed~$f_i(\mu)=0$. We assume the result for~$j<k$ and prove it for~$j+1$. Let~$\mu_j=\restr{\alpha}{\up{\int{i,j}}}\cup \restr{\beta}{\W(j)}$. By induction hypothesis, one has~$f_j(\mu_j)=f_k(\beta)=0$. As~$\mu_j$ and~$\beta$ are compatible,~$f_{j+1}(\restr{(\mu_j\cup\beta)}{\W(j+1)})=0$. One has
\begin{align*}
\restr{(\mu_j\cup\beta)}{\W(j+1)}&=\restr{\alpha}{\up{\int{i,j+1}}}\cup\beta\res{\W(j)\cap\W(j+1)}\cup\beta\res{\W(j+1)}\\
&=\restr{\alpha}{\up{\int{i,j+1}}}\cup\restr{\beta}{\W(j+1)}\\
&=\mu_{j+1}
\end{align*}
which proves the induction step. The claim is proved.

We then use this claim to prove the statement of the lemma by decreasing induction on~$j$.

For~$j=k$, one has~$\gamma=\alpha\res{\up{\int{i,k}}}\cup\restr{\beta}{\W(k)}$ which extends~$\restr{\beta}{\W(k)}$, so indeed~$f_k(\gamma)=0$. We assume the result for~$j>i$ and prove it for~$j-1$. Let~$\gamma_j=\alpha\res{\up{\int{i,j}}}\cup\beta\res{\up{\int{j,k}}}$. By induction hypothesis, one has~$f_i(\alpha)=f_j(\gamma_j)=0$. As~$\alpha$ and~$\gamma_j$ are compatible, we can apply Claim \ref{claim_left}, implying~$f_{j-1}(\delta)=0$, where
\begin{align*}
\delta&=\alpha\res{\up{\int{i,j-1}}}\cup \gamma_j\res{\W(j-1)}\\
&=\alpha\res{\up{\int{i,j-1}}}\cup \alpha\res{\up{\int{i,j}}}\cup\beta\res{\up{\int{j-1,k}}}\\
&=\alpha\res{\up{\int{i,j-1}}}\cup\beta\res{\up{\int{j-1,k}}}\\
&=\gamma_{j-1}
\end{align*}
which proves the induction step.
\end{proof}

\begin{proof}[Proof of Theorem \ref{thm_main_tool}]
We prove the statement for~$i=n-1$ and~$v=0$, the other cases hold by symmetry. Let~$j\in I_n$, i.e.~$0\leq j<n$. Let~$\alpha$ be an input such that~$f(\alpha)=0\ldots 0$. By Lemma \ref{lem_mono} applied to~$\beta=\alpha$,~$i'=0$,~$j$ and~$k=n-1$, one has~$f_j(\gamma)=0$ where~$\gamma=\restr{\alpha}{\up{\int{0,j}}}\cup\restr{\alpha}{\up{\int{j,n-1}}}$, so~$\gamma$ satisfies the prescribed conditions.
\end{proof}

\begin{remark}[Symmetric versions]\label{rmk_lem_sym}
By symmetry, Lemma \ref{lem_mono} implies more generally the following statements. Let~$v\in \{0,1\}$ and~$\alpha,\beta$ be compatible partial inputs:
\begin{itemize}
\item If~$0\leq i<j\leq k<n$,~$f_i(\beta)=\overline{v}$ and~$f_j(\alpha)=v$, then~$f_k(\gamma)=v$, where
\[
\gamma=\restr{\alpha}{\up{\int{j,k}}}\cup\restr{\beta}{\up{\int{k,i}}},
\]
\item If~$0\leq i\leq j< k<n$,~$f_j(\beta)=v$ and~$f_k(\alpha)=\overline{v}$, then~$f_i(\delta)=v$, where
\[
\delta=\restr{\alpha}{\up{\int{k,i}}}\cup \restr{\beta}{\up{\int{i,j}}}.
\]
\end{itemize}
They can be obtained from Lemma \ref{lem_mono} by making some power of~$r\in\Di_{2n}$ act on~$I_n$ and~$\Mon_n$.
\end{remark}

\subsection{An even more technical result}
The following technical result is at the core of some of the next proofs. Its statement is seemingly very \emph{ad hoc}, but was obtained by identifying common patterns in several arguments. We postpone its proof to the appendix (Section \ref{app_lem}).
 
\begin{lemma}\label{lem_technical}
Let~$1<i<j<n-1$. If a complex~$K$ generates~$\Mon_n$, then~$K$ contains at least one of the following intervals:
\[
\int{0,j},\int{j+1,i},\int{i+1,1},\int{i,0},\int{1,n-1}.
\]
\end{lemma}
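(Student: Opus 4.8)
The goal is to derive a contradiction from the assumption that none of the five listed intervals belongs to $K$. The natural strategy is to pick a suitable generating function $f:L^J\to\Mon_n$ with $K_f\subseteq K$ (Proposition \ref{prop_canonical}), and to use the Main Tool (Theorem \ref{thm_main_tool}) repeatedly to force $f_j$ or some nearby cell to take both values $0$ and $1$ on compatible partial inputs whose domains are contained in $\bigcup$ of the windows of the \emph{missing} intervals --- hence empty or small enough to collapse. Concretely, for each pair $(a,b)$ the theorem says $\W(\int{a+1,b})$ and $\W(\int{b,a})$ cannot both be empty; contrapositively, if a cleverly chosen union of such ``missing-interval'' windows is empty, then some $f_c$ is forced to be constant while the language axioms force it to be non-constant. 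So I would look for an index $c$ (probably $c=0$ or $c=1$, or $c$ near $i$ or $j$) and a sequence of applications of Lemma \ref{lem_mono} / Remark \ref{rmk_lem_sym} linking witnesses for $f_c=0$ and $f_c=1$ through cells $1,i,i+1,j,j+1,n-1$, so that all the intermediate windows that appear are exactly $\W$ of the five forbidden intervals.

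\textbf{Key steps, in order.} First, fix $f$ in canonical form and, assuming for contradiction that $\int{0,j},\int{j+1,i},\int{i+1,1},\int{i,0},\int{1,n-1}\notin K$, record that $\W(\int{0,j})=\W(\int{j+1,i})=\W(\int{i+1,1})=\W(\int{i,0})=\W(\int{1,n-1})=\emptyset$. Second, obtain a handful of base constraints from the language axioms: take inputs mapping to $0^n$, $1^n$, $0^{j+1}1^{n-j-1}$, $1^a0^{n-a}$, etc., giving $f_c=v$ for various $(c,v)$. Third, chain the Main Tool: e.g. combine $f_0=0$ (from $0^n$) with $f_j=0$ via Lemma \ref{lem_mono} to get $f$-values on a partial input supported on $\W(\int{0,j})\cup\W(\int{j,?})$, then feed that into another application linking to cell $1$ through $\W(\int{i+1,1})$ and to cell $i$ through $\W(\int{i,0})$, always arranging the index pattern $i\le j\le k$ (or its symmetric rotations from Remark \ref{rmk_lem_sym}) so that the domain of the resulting partial input lies in the union of the five forbidden windows, i.e.\ is empty. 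Finally, produce from the other side (using $1^n$ or a shifted monotonic string) a compatible partial input on the same empty/controlled domain forcing the \emph{opposite} value at the same cell, and invoke the Conflict rule.

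\textbf{Main obstacle.} The delicate part is the combinatorial bookkeeping: getting the chain of Lemma \ref{lem_mono} applications so that the five interval windows that show up are \emph{exactly} $\int{0,j},\int{j+1,i},\int{i+1,1},\int{i,0},\int{1,n-1}$ and nothing else --- in particular making sure that when I rotate via $r\in\Di_{2n}$ to use the symmetric forms in Remark \ref{rmk_lem_sym}, the index inequalities $a\le b\le c<a+n$ required by Lemma \ref{lem_bef} (and hence by Lemma \ref{lem_mono}) are actually satisfied for the specific ranges $1<i<j<n-1$. This is presumably why the authors package the needed inequalities into the concatenable arrow-diagrams introduced after Lemma \ref{lem_bef}: I expect the proof to consist of one or two such diagrams certifying that the relevant intervals glue correctly, followed by a short ``$f_c=0$ and $f_c=1$ on compatible inputs of empty domain, contradiction'' punchline. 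A secondary subtlety is choosing $c$ and the exact monotonic witnesses so that the two sides are genuinely compatible (agree on the overlap of their domains), which should follow automatically once all the intermediate domains are forced to be empty, but needs to be checked. Given the hypothesis $1<i<j<n-1$ leaves exactly the room needed for the ``$+1$'' shifts at both ends, I'd expect no case analysis beyond possibly the boundary behaviour near $i+1=j$.
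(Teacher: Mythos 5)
Your plan is indeed the paper's strategy in outline (assume the five intervals are absent, use Theorem \ref{thm_main_tool} and the symmetric versions of Lemma \ref{lem_mono}, and end with a conflict), but as written it has a genuine gap: you never exhibit the actual configuration, and that configuration \emph{is} the content of the lemma. What is missing concretely: the choice of the target cell ($c=i$, not $0$ or $1$); the two specific applications of Theorem \ref{thm_main_tool} giving $\alpha$ with $f_0(\alpha)=1$, $\dom(\alpha)=\up{\int{i+1,0}}\cup\up{\int{0,i}}$, and $\beta$ with $f_1(\beta)=0$, $\dom(\beta)=\up{\int{j+1,1}}\cup\up{\int{1,j}}$; the verification via the $\bef$-chain $\int{0,i}\bef\int{1,j}\bef\int{i+1,0}\bef\int{j+1,1}\bef\int{0,i}$ that the four pairwise unions are $\int{0,j}$, $\int{1,0}=I_n$, $\int{i+1,1}$, $\int{j+1,i}$, hence absent from $K$ (note that one of these unions is the full simplex, which is not on your list of five and is excluded only because $K$ is downward closed and misses, say, $\int{0,j}$ --- a case your bookkeeping does not anticipate); the application of Remark \ref{rmk_lem_sym} producing $\gamma$ with $f_i(\gamma)=0$ and $\dom(\gamma)=\up{\int{1,j}}$; and finally a third use of Theorem \ref{thm_main_tool} giving $\delta$ with $f_i(\delta)=1$ and $\dom(\delta)=\up{\int{0,i}}\cup\up{\int{i,n-1}}$, together with the computation $\dom(\gamma)\cap\dom(\delta)=\up{\int{0,j}}\cup\up{\int{1,n-1}}=\emptyset$ that triggers the conflict. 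Without these choices and checks, the proposal is a search heuristic, not a proof: nothing in it certifies that a chain with the required properties exists for all $1<i<j<n-1$.

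A second, smaller issue is a misconception in your plan: you expect ``all the intermediate domains to be forced to be empty,'' and compatibility to ``follow automatically'' from that. In the actual argument the intermediate domains $\up{\int{1,j}}$, $\up{\int{0,i}}$, $\up{\int{i,n-1}}$ are in general nonempty (those intervals are not forbidden); what is forced to be empty is each pairwise \emph{intersection} of domains, because each such intersection is $\W$ of a union of two overlapping intervals, and those unions are exactly the missing ones. If all the domains themselves were empty you would already contradict Theorem \ref{thm_main_tool} directly, which is not the situation here. So the compatibility of $\gamma$ and $\delta$ is not automatic; it is precisely what the interval-union computations are there to establish.
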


Of course, this lemma comes with its symmetric versions: if~$K$ generates~$\Mon_n$, then for any symmetry~$g\in \Di_{2n}$,~$g\cdot K$ generates~$\Mon_n$ hence contains one of these intervals.

\section{Families of minimal complexes generating \texorpdfstring{$\Mon_n$}{Mon}}\label{sec_families}

We now identify families of minimal complexes generating~$\Mon_n$. Each family is obtained from a particular complex by inserting vertices at certain positions. In particular, we show that the complexes presented in Section \ref{sec_mon_examples} are all minimal.

\subsection{The \texorpdfstring{$K_2$}{K2} family}\label{sec_K2_family}
We first identify all the minimal complexes generating~$\Mon_n$ and containing several~$(n-1)$-intervals. It is the first application of Theorem \ref{thm_main_tool}.

\begin{definition}
The \textbf{$K_2$ family} is the family of complexes
\[
\comp{I_n\setminus \{a\},I_n\setminus \{b\}}
\]
for~$n\geq 2$ and distinct~$a,b\in I_n$.
\end{definition}
They are exactly the complexes that can be obtained from~$K_2$ by vertex insertions, as mentioned in Example \ref{ex_K2}. They are illustrated in Figure \ref{fig_Kab}.
\begin{figure}[ht]
\centering
\includegraphics{Images/complex-0}
\caption{The complex~$\comp{I_n\setminus \{a\},I_n\setminus \{b\}}$, for $n=5$, $a=1$ and $b=4$}\label{fig_Kab}
\end{figure}

\begin{proposition}[Several~$(n-1)$-intervals]\label{prop_mono_2_simplices}
Let~$n\geq 2$. The minimal complexes generating~$\Mon_n$ and having several~$(n-1)$-intervals are the members of the~$K_2$ family.
\end{proposition}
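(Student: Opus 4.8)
The plan is to show two things: first, that every member of the $K_2$ family is minimal generating $\Mon_n$; and second, that any minimal complex $K$ generating $\Mon_n$ with at least two $(n-1)$-intervals must be of this form. Membership in the $K_2$ family already gives generation (Example \ref{ex_K2}), so the first task is purely about minimality, and the second task is the structural heart of the argument and where I expect the real work to lie.

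For minimality of $\comp{I_n\setminus\{a\},I_n\setminus\{b\}}$: since generation is upward closed, I must rule out every proper subcomplex. A proper subcomplex replaces at least one of the two $(n-1)$-intervals by a strictly smaller face. Suppose $K'$ generates $\Mon_n$ with every maximal simplex of size $<n-1$ except possibly one; I want a contradiction. The tool is Theorem \ref{thm_main_tool}: for any $i,j$, at least one of $\int{i+1,j}$, $\int{j,i}$ lies in $K'$. Given a fixed generating function $f$, iterating this observation over all pairs forces enough large intervals into $K_f$. Concretely, if only one simplex can have size $n-1$, then for most pairs $(i,j)$ the interval forced into $K'$ by Theorem \ref{thm_main_tool} of size $\geq \lceil (n-1)/2\rceil$ would need to be covered, and counting how many $(n-1)$-intervals are needed to cover all the complements of pairs $\{i+1,\dots,j\}$ should show two such intervals are unavoidable. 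The cleanest route is probably: if $K'$ has exactly one $(n-1)$-interval, say $I_n\setminus\{a\}$, pick $j=a$; then Theorem \ref{thm_main_tool} with $i$ ranging forces $\int{i+1,a}$ or $\int{a,i}$ into $K'$, but since these must be proper subsets of $I_n\setminus\{a\}$ as no other maximal simplex reaches size $n-1$, I get genuine constraints, and a short argument (or direct appeal to Lemma \ref{lem_technical}) produces the contradiction. The same argument with $K'$ having zero $(n-1)$-intervals is then an easy consequence.

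For the classification: let $K$ be minimal generating $\Mon_n$ containing two $(n-1)$-intervals $I_n\setminus\{a\}$ and $I_n\setminus\{b\}$ with $a\neq b$. By Theorem \ref{thm_intervals} every maximal simplex of $K$ is an interval. The subcomplex $K_0=\comp{I_n\setminus\{a\},I_n\setminus\{b\}}$ already generates $\Mon_n$ (it is in the $K_2$ family), and $K_0\subseteq K$; minimality of $K$ then forces $K=K_0$, provided I can check $K_0\subseteq K$, i.e.\ that $K$ contains no simplex strictly larger than those of $K_0$ — but the only simplices strictly larger than the faces of $K_0$ would be $I_n$ itself, which cannot be a simplex of a minimal complex for $n\geq 3$ (and for $n=2$ the statement is immediate since $\Mon_2=\{0,1\}^2$ and $K_2=\comp{\{0\},\{1\}}$ is the unique minimal complex). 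So the content reduces to: $K_0\subseteq K$ together with minimality yields $K=K_0$. The subtlety is that $K$ might contain $(n-1)$-intervals other than the two given ones, or interval faces not contained in either $I_n\setminus\{a\}$ or $I_n\setminus\{b\}$; but any such face, being an interval not contained in either complement, must contain both $a$ and $b$, and then maximality plus the interval structure would force it to be $I_n$ or overlap the two given maximal simplices in a way contradicting their maximality in $K$. I expect the main obstacle to be exactly this bookkeeping — showing that once two $(n-1)$-intervals with distinct missing vertices are present, no further maximal interval can coexist with them in a minimal complex — and the key lemma making it work is that $K_0$ itself already generates, so minimality leaves no room for anything else.
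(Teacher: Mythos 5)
The classification half of your argument is essentially right, but it is simpler than you make it: if a minimal complex $K$ generating $\Mon_n$ contains the two simplices $I_n\setminus\{a\}$ and $I_n\setminus\{b\}$, then it automatically contains all their faces, so $\comp{I_n\setminus\{a\},I_n\setminus\{b\}}\subseteq K$; since this subcomplex already generates $\Mon_n$, minimality forces equality. No appeal to Theorem \ref{thm_intervals} and no bookkeeping about further maximal faces is needed (those could only obstruct $K\subseteq K_{a,b}$, which is not what is required). The real content of the proposition is the other inclusion, namely that each $K_{a,b}=\comp{I_n\setminus\{a\},I_n\setminus\{b\}}$ is \emph{minimal}, and there your sketch has a genuine gap.

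For minimality you propose to use only the weak consequence of Theorem \ref{thm_main_tool} (``for every $i,j$, one of $\int{i+1,j}$, $\int{j,i}$ is a simplex''), iterate it over all pairs and count, or alternatively to ``appeal to Lemma \ref{lem_technical}''. The weak consequence provably cannot suffice: for $n=8$, $a=0$, $b=4$, the proper subcomplex $K'=\comp{\int{1,6},\int{2,7},\int{5,2},\int{6,3}}$ of $K_{0,4}$ contains, for \emph{every} pair $(i,j)$, one of $\int{i+1,j}$, $\int{j,i}$, yet it contains no $7$-interval at all; so no amount of iterating that observation forces an $(n-1)$-interval into a subcomplex. Note also that ``exactly one $(n-1)$-interval'' is not by itself contradictory (the $K_5$ family has exactly one, and the $K_8$ family has none), so any correct argument must exploit the specific structure of $K'\subseteq K_{a,b}$, namely that no simplex contains both $a$ and $b$; and Lemma \ref{lem_technical} requires $1<i<j<n-1$, hence $n\geq 5$, while the statement starts at $n=2$. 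What the paper actually does, and what is missing from your sketch, is to use the full strength of Theorem \ref{thm_main_tool}: applying it at the output cell $b$ with $i=a-1$ and again with $i=a$ yields partial inputs $\alpha,\beta$ with $f_b(\alpha)=0$ and $f_b(\beta)=1$ whose domains are exactly $\up[f]{\int{b,a-1}}$ and $\up[f]{\int{a+1,b}}$, because the windows $\up[f]{\int{a,b}}$ and $\up[f]{\int{b,a}}$ are empty (the edge $\{a,b\}$ is not in $K_{a,b}$). Since $\alpha$ and $\beta$ force opposite values at $b$, they cannot be compatible, so their domains must intersect; the intersection is $\up[f]{\int{a+1,a-1}}$, which forces $I_n\setminus\{a\}$ to be a simplex of $K'$, and symmetrically $I_n\setminus\{b\}$. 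This conflict-of-partial-inputs step, turning incompatibility into a nonempty common window and hence into the presence of the full $(n-1)$-interval, is the missing idea in your proposal.
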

\begin{proof}
We show that~$K_{a,b}=\comp{I_n\setminus \{a\},I_n\setminus \{b\}}$ is minimal. Let~$K\subseteq K_{a,b}$ generate~$\Mon_n$. We use the fact that~$K$ does not contain the edge~$\{a,b\}$ to show that~$K$ must contain~$K_{a,b}$. We show that~$K$ contains~$I_n\setminus\{a\}$, the case of~$I_n\setminus\{b\}$ is symmetric.

Let~$a^-=a-1\bmod n$ and~$a^+=a+1\bmod n$.

We apply Theorem \ref{thm_main_tool} twice. First, with~$i=a^-$ and~$j=b$, there is a partial input~$\alpha$ such that~$f_b(\alpha)=0$ and
\[
\dom(\alpha)=\up[f]{\int{a,b}}\cup \up[f]{\int{b,a^-}}=\up[f]{\int{b,a^-}}
\]
as~$K$ has no simplex containing both~$a$ and~$b$. Next, with~$i=a$ and~$j=b$, there is a partial input~$\beta$ such that~$f_b(\beta)=1$ and
\[
\dom(\beta)=\up[f]{\int{a^+,b}}\cup \up[f]{\int{b,a}}=\up[f]{\int{a^+,b}}.
\]
Therefore~$\alpha$ and~$\beta$ cannot compatible, implying that~$\dom(\alpha)$ and~$\dom(\beta)$ must intersect. Their intersection is~$\up[f]{\int{a^+,a^-}}$ so~$\int{a^+,a^-}=I_n\setminus \{a\}$ belongs to~$K$. By symmetry,~$K$ also contains~$I_n\setminus \{b\}$, showing that~$K_{a,b}$ is minimal.

Now, if a minimal complex generates~$\Mon_n$ and contains at least two~$(n-1)$-intervals, then it contains a member~$K_{a,b}$ of the~$K_2$ family, so it equals~$K_{a,b}$ by minimality.
\end{proof}

\subsection{The \texorpdfstring{$K_5$}{K5} family}\label{sec_K5_family}
We now identify the generating complexes that contain exactly one~$(n-1)$-interval. They only exist for~$n\geq 5$, and are induced by~$K_5$ under vertex insertions.

\begin{definition}
The \textbf{$K_5$ family} is made of the following complexes and their circular permutations. For~$n\geq 5$ and~$1<i<j<n-1$, let
\begin{equation*}
K^n_{i,j}=\comp{\int{1,n-1},\int{0,j-1},\int{j+1,i-1},\int{i+1,0}}.
\end{equation*}
\end{definition}
An illustration is given in Figure \ref{fig_Knij}.
\begin{figure}[ht]
\centering
\includegraphics{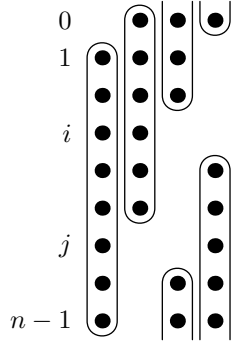}
\caption{The complex~$K^n_{i,j}$}\label{fig_Knij}
\end{figure}

\begin{remark}\label{rmk_Knij}
There is no redundancy in this description of the family: one easily checks that for each~$n$, if~$(i,j)\neq (i',j')$, then the complexes~$K^n_{i,j}$ are distinct and even incomparable w.r.t.~inclusion.
\end{remark}

We first show that this family deserves its name.
\begin{proposition}\label{prop_Knij}
Every~$K^n_{i,j}$ can be obtained from~$K_5$ by vertex insertions and circular permutations.
\end{proposition}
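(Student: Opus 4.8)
The strategy is to realize the general complex $K^n_{i,j}$ as the result of a sequence of vertex insertions applied to $K_5$, chosen so that each insertion falls under one of the "easy" cases of Definition~\ref{def_insertion} — namely, inserting a vertex into a maximal interval either when both neighbours of the insertion site lie in that interval or when neither does. Concretely, I would first recall that
\[
K_5=\comp{\int{0,2},\int{1,3},\int{2,4},\int{3,1}},
\]
and observe that, up to the circular permutations allowed in the statement, it suffices to treat $K^n_{i,j}$ for a single convenient representative; the definition of the $K_5$ family already builds the circular permutations in, so I only need to reach one complex in each orbit.

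The key step is to match the four maximal intervals of $K_5$ with the four maximal intervals of $K^n_{i,j}$, namely $\int{1,n-1}$, $\int{0,j-1}$, $\int{j+1,i-1}$, $\int{i+1,0}$, and then grow $K_5$ into $K^n_{i,j}$ one vertex at a time. The natural correspondence is: $\int{1,3}\leftrightarrow\int{1,n-1}$ (the unique $(n-1)$-interval, which in $K_5$ is $\int{1,3}=\{1,2,3\}=I_5\setminus\{0,4\}$ viewed as $I_5\setminus\{0\}$ after we notice $4$ gets absorbed — more carefully, in $K_5$ the interval $\int{3,1}=\{3,4,0,1\}=I_5\setminus\{2\}$ is the $(n-1)$-interval, so I would instead use the correspondence $\int{3,1}\leftrightarrow\int{1,n-1}$ up to a circular shift), and the three "small" intervals $\int{0,2},\int{1,3},\int{2,4}$ correspond to $\int{i+1,0},\int{0,j-1},\int{j+1,i-1}$ respectively. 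For each of the three short intervals one needs to insert a block of consecutive new vertices to lengthen it from length $3$ to the prescribed length, and the point is that each such insertion can be performed at a site interior to (or just outside, in a controlled way) the interval being lengthened, so that Proposition~\ref{prop_insertion} applies and each insertion is legitimate; iterating, $\ins{i_k}{\cdots\ins{i_1}{K_5}}=K^n_{i,j}$ after renaming vertices.

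The main obstacle — and the part that deserves care rather than a wave of the hand — is bookkeeping the vertex renamings $\rho_i^{\pm}$ under repeated insertions, and checking that at each stage the insertion site sits correctly relative to the current maximal intervals so that $\ins{i}{\cdot}$ does exactly what we want (in particular that inserting into the long interval does not inadvertently also lengthen a short one, and vice versa). The cleanest way to handle this is to fix the final indexing of $I_n$ first, decide which original $K_5$-vertices survive as which $K^n_{i,j}$-vertices, and then insert the remaining $n-5$ vertices in an order that keeps the invariant "the current complex is the subcomplex of $K^n_{i,j}$ obtained by deleting the not-yet-inserted vertices". With that invariant, each insertion is forced and its correctness is immediate from Proposition~\ref{prop_insertion} together with the description of $\ins{i}{\cdot}$ on intervals. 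Verifying the base case — that the chosen correspondence genuinely identifies some circular permutation of $K^n_{i,j}$ with $K_5$ when $n=5$ and $(i,j)=(2,3)$ — is a direct unfolding of the definitions. I expect the whole argument to be short once the indexing conventions are pinned down.
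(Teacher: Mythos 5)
Your overall strategy — identify $K^5_{2,3}$ with a circular permutation of $K_5$, then grow it to $K^n_{i,j}$ by iterated insertions — is exactly the paper's, but your proposal stops short of the step that actually constitutes the proof, and the heuristics you give for that step would not survive contact with the definitions. First, Proposition \ref{prop_insertion} is about generation of $\Mon_n$ and is irrelevant here: the statement to prove is purely combinatorial, so what must be verified is an identity of complexes, namely that inserting at specific positions produces the next member of the family. The paper does this by checking three explicit identities, $\ins{2}{K^n_{i,j}}=K^{n+1}_{i+1,j+1}$, $\ins{i+1}{K^n_{i,j}}=K^{n+1}_{i,j+1}$ and $\ins{j+1}{K^n_{i,j}}=K^{n+1}_{i,j}$, and then composes them ($i-2$ insertions at position $2$, then $j-i-1$ at position $i+1$, then $n-j-2$ at position $j+1$). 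You never name any insertion positions, and your guiding picture — lengthen one short interval at a time while making sure not to ``inadvertently'' lengthen the others — is not how the operation behaves: by Definition \ref{def_insertion}, an insertion at position $p$ adds the new vertex to \emph{every} maximal interval meeting $\{p-1\bmod n,\,p\bmod n\}$, and in fact each of the correct insertions above lengthens three of the four maximal intervals of $K^n_{i,j}$ simultaneously (only the interval avoiding both neighbours of $p$ is merely renamed). So the interval-by-interval plan as described cannot be executed literally.

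Second, the ``cleanest way'' you propose — maintain the invariant that the current complex is $K^n_{i,j}$ with the not-yet-inserted vertices deleted, and claim each re-insertion is then forced and immediate — hides a false step: insertion is not a two-sided inverse of deletion. While $\del{p}{\ins{p}{K}}=K$ always holds, $\ins{p}{\del{p}{K}}$ can strictly exceed $K$. Concretely, delete vertex $j$ from $K^n_{i,j}$ and re-insert at that position: since $\int{0,j-1}$ contains the neighbour $j-1$, the insertion enlarges it to $\int{0,j}$, which is not a simplex of $K^n_{i,j}$. Hence whether the invariant can be maintained depends on a careful choice of which vertices are ``last inserted'' and at which positions, and verifying that a valid choice exists is precisely the computation your proposal labels immediate. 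The base-case identification (that $K^5_{2,3}=\comp{\int{1,4},\int{0,2},\int{4,1},\int{3,0}}$ is the image of $K_5$ under $x\mapsto x+3\bmod 5$) is fine, but without the explicit insertion-position computations the argument has a genuine gap.
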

\begin{proof}
For~$n=5$, the only possible values of~$i$ and~$j$ are~$i=2$ and~$j=3$. The complex~$K^5_{2,3}=\comp{\int{1,4},\int{0,2},\int{4,1},\int{3,0}}$ is the image of~$K_5=\comp{\int{3,1},\int{2,4},\int{1,3},\int{0,2}}$ by the circular permutation~$x\mapsto x+3\bmod 5$.
 
We now show that every~$K^n_{i,j}$ can be obtained from~$K^5_{2,3}$ by successive insertions.
\begin{figure}[ht]
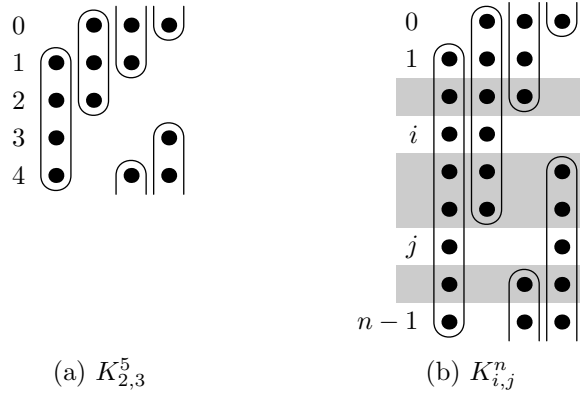

\centering
\subcaptionbox{$K^5_{2,3}$}{\includegraphics{Images/complex-10}}
\hspace{2cm}
\subcaptionbox{$K^n_{i,j}$}{\includegraphics{Images/complex-11}}
\caption{Vertex insertions from~$K^5_{2,3}$ to~$K^n_{i,j}$: the gray rows correspond to the inserted vertices}\label{fig_insertions}
\end{figure}
The reader might get convinced by staring at Figure \ref{fig_insertions}, we nonetheless give some details. Let~$1<i<j<n-1$. Inserting a vertex at certain positions in~$K^n_{i,j}$ gives rise to a complex in the same family:
\begin{itemize}
\item Position~$2$: $\ins{2}{K}=K^{n+1}_{i+1,j+1}$,
\item Position~$i+1$: $\ins{i+1}{K}=K^{n+1}_{i,j+1}$,
\item Position~$j+1$: $\ins{j+1}{K}=K^{n+1}_{i,j}$.
\end{itemize}
Therefore, if~$n\geq 6$ and~$1<i<j<n-1$, then one can start from~$K^5_{2,3}$ and successively insert vertices at suitable positions to reach~$K^n_{i,j}$:
\begin{itemize}
\item Inserting~$i-2$ vertices at position~$2$ yields~$K^{i+3}_{i,i+1}$,
\item Then inserting~$j-i-1$ vertices at position~$i+1$ yields~$K^{j+2}_{i,j}$,
\item Finally, inserting~$n-j-2$ vertices at position~$j+1$ yields~$K^{n}_{i,j}$.\qedhere
\end{itemize}
\end{proof}

\begin{theorem}[One $(n-1)$-interval]\label{thm_one_interval}
The minimal complexes generating~$\Mon_n$ and having exactly one~$(n-1)$-interval are the members of the~$K_5$ family.
\end{theorem}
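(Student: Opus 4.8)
The proof has two halves. The easy half is showing that each member $K^n_{i,j}$ of the $K_5$ family actually generates $\Mon_n$: by Proposition \ref{prop_Knij} each $K^n_{i,j}$ is obtained from $K_5$ (which generates $\Mon_5$ by Proposition \ref{prop_K5}) by a sequence of vertex insertions and circular permutations, so by Proposition \ref{prop_insertion} and Proposition \ref{prop_mon_symmetries} it generates $\Mon_n$. The substantial half is the converse together with minimality: if $K$ is a minimal complex generating $\Mon_n$ with exactly one $(n-1)$-interval, then $K$ is some $K^n_{i,j}$, and conversely each $K^n_{i,j}$ is minimal. I would organize this around a single normalization-plus-rigidity argument.

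First I would normalize. By Theorem \ref{thm_intervals} all maximal simplices of $K$ are intervals, and by Proposition \ref{prop_mon_symmetries} I may apply a circular permutation so that the unique $(n-1)$-interval is $\int{1,n-1}=I_n\setminus\{0\}$. Thus every other maximal interval has size at most $n-2$, hence misses at least two vertices, and in particular contains $0$ only if it is a proper interval through $0$, i.e.\ of the form $\int{a,b}$ with $a>b$. The goal is to pin down exactly which intervals must be present. The main engine is Theorem \ref{thm_main_tool} (and its symmetric versions in Remark \ref{rmk_lem_sym}), which forces, for each pair of output cells, at least one of two intervals into $K$; combined with Lemma \ref{lem_technical}, which forces one of five specific intervals into $K$ for each choice of $1<i<j<n-1$. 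The plan is: fix the generating function $f$ in canonical form (Proposition \ref{prop_canonical}); use Theorem \ref{thm_main_tool} applied to various pairs $(i,j)$ to show that $0$ must be covered by intervals of a very constrained shape, and that there are at least three maximal intervals besides $\int{1,n-1}$; then use Lemma \ref{lem_technical} to show there cannot be more than three, so $K$ has exactly four maximal intervals $\int{1,n-1}$, $\int{0,j-1}$, $\int{j+1,i-1}$, $\int{i+1,0}$ for suitable $i,j$; reading off that this is $K^n_{i,j}$ finishes the identification. The constraint $1<i<j<n-1$ and the fact that these intervals do cover $I_n$ and do not create a second $(n-1)$-interval are checked directly.

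For minimality, let $K'\subseteq K^n_{i,j}$ generate $\Mon_n$; I must show $K'=K^n_{i,j}$, i.e.\ each of the four intervals is a simplex of $K'$. The interval $\int{1,n-1}$ is forced because, if it were absent, every maximal simplex of $K'$ would miss $0$ and at least one other vertex $p$; applying Theorem \ref{thm_main_tool} to the pair $(0,p)$ in a way that makes both $\up{\int{1,p}}$ and $\up{\int{p,0}}$ nonempty yields a contradiction (this is the same mechanism as in the proof of Proposition \ref{prop_mono_2_simplices}). For each of the three smaller intervals $I$ of $K^n_{i,j}$, I would argue that $K'$ must contain $I$ by applying Theorem \ref{thm_main_tool} (or Remark \ref{rmk_lem_sym}) to the two endpoints of $I$: the two alternative intervals it offers are, by the specific geometry of $K^n_{i,j}$, not available in $K'\subseteq K^n_{i,j}$ unless $I$ itself is present, forcing a pair of incompatible constraints otherwise. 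Concretely, for $\int{0,j-1}$ I would look at cells near $j-1$ and $0$; the only intervals of $K^n_{i,j}$ spanning from somewhere $\le 0$ to somewhere $\ge j-1$ the "short way" are $\int{0,j-1}$ and $\int{1,n-1}$ (the latter goes the long way around), so a careful choice of the pair in Theorem \ref{thm_main_tool} isolates $\int{0,j-1}$; symmetric choices handle $\int{j+1,i-1}$ and $\int{i+1,0}$.

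The main obstacle, as usual in this kind of argument, is the bookkeeping of which two intervals Theorem \ref{thm_main_tool} produces for a given pair of cells and verifying that, under $K'\subseteq K^n_{i,j}$, the "wrong" alternative is genuinely unavailable — this requires being precise about when $\up{\cdot}$ is empty, i.e.\ about which intervals are contained in which maximal interval of $K^n_{i,j}$, and it is where the circular/modular arithmetic of the interval notation $\int{a,b}$ is easy to mishandle. A secondary obstacle is ruling out the existence of a fifth maximal interval in the identification half without assuming the answer: here Lemma \ref{lem_technical} is the right tool, but one has to choose its parameters $(i,j)$ to match the hypothetical extra interval, and to check that the five intervals it offers cannot all be blocked simultaneously unless $K$ already has the claimed four-interval form. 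I expect the forcing of $\int{1,n-1}$ and of the existence of at least three further intervals to be routine (direct applications of Theorem \ref{thm_main_tool} as in Proposition \ref{prop_mono_2_simplices}), and the upper bound of three further intervals, together with their exact positions, to be the delicate core of the proof.
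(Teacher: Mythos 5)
Your first half (generation of $\Mon_n$ by each $K^n_{i,j}$ via Proposition \ref{prop_Knij}, Proposition \ref{prop_insertion} and the symmetries) matches the paper. The identification half is in the right spirit but misassigns the roles of the tools: Lemma \ref{lem_technical} only \emph{forces} intervals to be present, so it cannot be used to show "there cannot be more than three" maximal intervals besides $\int{1,n-1}$; in the paper that cap costs nothing, because once one shows $K\supseteq K^n_{i,j}$ for some $i,j$ (this is Lemma \ref{lem_ij}, whose key trick — which your sketch does not articulate — is to take $i$ minimal with $\int{i+1,0}\in K$ and $j$ maximal with $\int{0,j-1}\in K$, so that neither $\int{i,0}$ nor $\int{0,j}$ lies in $K$ and the two-constraint conflict of Lemma \ref{lem_mono_3_intervals} forces $\int{j+1,i-1}$), minimality of $K$ together with the already-proved fact that $K^n_{i,j}$ generates $\Mon_n$ immediately gives $K=K^n_{i,j}$.

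The genuine gap is in your minimality half. You claim $\int{1,n-1}$ is forced in any generating $K'\subseteq K^n_{i,j}$ by "the same mechanism as Proposition \ref{prop_mono_2_simplices}", i.e.\ one application of Theorem \ref{thm_main_tool} to a pair $(0,p)$. That mechanism needs a missing edge, and $K^n_{i,j}$ has none: since $i<j$, the three maximal intervals through $0$ cover every pair $\{0,p\}$, so the $1$-skeleton of $K^n_{i,j}$ is complete, and whichever pair you feed to Theorem \ref{thm_main_tool}, the alternative interval it offers (e.g.\ $\int{j-1,n-1}\subseteq\int{1,n-1}$ or $\int{j,0}\subseteq\int{i+1,0}$) is available in $K^n_{i,j}$, so no conflict arises and nothing is forced. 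This is exactly why the paper needs the much stronger Lemma \ref{lem_technical} (a multi-step derivation proved in the appendix, combining two applications of the main tool, a disjointness-of-domains computation, and Lemma \ref{lem_mono}) to force $\int{1,n-1}$: among the five intervals it offers, only $\int{1,n-1}$ lies in $K^n_{i,j}$. The same objection kills your plan of forcing each of the three smaller intervals by applying Theorem \ref{thm_main_tool} to its two endpoints — the "wrong" alternative is never blocked inside $K^n_{i,j}$. The paper instead, after securing $\int{1,n-1}\in K'$, reruns Lemma \ref{lem_ij} inside $K'$ to get some $K^n_{i',j'}\subseteq K'\subseteq K^n_{i,j}$ and concludes $(i',j')=(i,j)$ from the pairwise incomparability of the family (Remark \ref{rmk_Knij}); both this incomparability step and the extremal choice of $i',j'$ are absent from your proposal, and without a substitute for Lemma \ref{lem_technical} your argument does not close.
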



The remainder of this section is devoted to the proof of this result.

First, these complexes easily generate~$\Mon_n$ because~$K_5$ generates~$\Mon_5$ (Proposition \ref{prop_K5}) and each~$K^n_{i,j}$ can be obtained, up to symmetry, by inserting vertices in~$K_5$ (Proposition \ref{prop_Knij}), so~$K^n_{i,j}$ generates~$\Mon_n$ (Proposition \ref{prop_insertion}).

We now show that every minimal complex generating~$\Mon_n$ and having one~$(n-1)$-interval is some~$K^n_{i,j}$ up to symmetry, and that every~$K^n_{i,j}$ is minimal. In the next results, we do not assume that~$n\geq 5$, which will be a consequence.

\begin{lemma}\label{lem_mono_3_intervals}
Let~$K$ generate~$\Mon_n$. For~$i,j\in I_n$,~$K$ must contain~$\int{0,j}$ or~$\int{i,0}$ or~$\int{j+1,0}\cup\int{0,i-1}$.
\end{lemma}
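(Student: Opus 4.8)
The plan is to apply the Main Tool (Theorem \ref{thm_main_tool}) twice, much as in the proof of Proposition \ref{prop_mono_2_simplices}, and to conclude by a conflict argument when none of the three listed intervals is in $K$. Fix a function $f$ generating $\Mon_n$ with $K_f\subseteq K$, given by Proposition \ref{prop_canonical}. First I would apply Theorem \ref{thm_main_tool} with the pair of positions chosen so that the cell whose value is forced is~$0$: taking the two instances with target value $v=0$ and $v=1$ respectively, one gets a partial input $\alpha$ with $f_0(\alpha)=0$ and $\dom(\alpha)=\up[f]{\int{1,0}}\cup\up[f]{\int{0,?}}$, and a partial input $\beta$ with $f_0(\beta)=1$ and a dual domain. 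The point is that $\int{1,0}=I_n$, so that term is the whole window of $0$; the interesting terms are the ones of the form $\up[f]{\int{0,j}}$ and $\up[f]{\int{i,0}}$, which are empty exactly when the corresponding interval is not a simplex of $K$.

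More precisely, I would instantiate Theorem \ref{thm_main_tool} (or rather Lemma \ref{lem_mono} / Remark \ref{rmk_lem_sym}, whichever gives the cleanest domain) so as to produce: a partial input $\alpha$ with $f_0(\alpha)=0$ whose domain is contained in $\up[f]{\int{0,j}}\cup\up[f]{\int{j+1,0}}$ (obtained from a run $f(\alpha_0)=0^n$ and propagating the value $0$ at position $0$ through the arc $\int{j+1,0}$ on one side and trivially on the other), and a partial input $\beta$ with $f_0(\beta)=1$ whose domain is contained in $\up[f]{\int{i,0}}\cup\up[f]{\int{0,i-1}}$ (from a run $f(\beta_0)=1^n$ propagating the value $1$ at $0$ through $\int{0,i-1}$). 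If now $K$ contains none of $\int{0,j}$, $\int{i,0}$, $\int{j+1,0}\cup\int{0,i-1}$, then $\up[f]{\int{0,j}}=\up[f]{\int{i,0}}=\emptyset$, and moreover $\dom(\alpha)\cap\dom(\beta)\subseteq \up[f]{\int{j+1,0}}\cap\up[f]{\int{0,i-1}}=\up[f]{\int{j+1,0}\cup\int{0,i-1}}=\emptyset$; hence $\alpha$ and $\beta$ have disjoint domains, so they are compatible, yet $f_0(\alpha)=0\neq 1=f_0(\beta)$, contradicting the Conflict rule. Therefore $K$ must contain at least one of the three intervals.

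The main obstacle I anticipate is getting the domains of $\alpha$ and $\beta$ to come out as exactly $\up[f]{\int{0,j}}\cup\up[f]{\int{j+1,0}}$ and $\up[f]{\int{i,0}}\cup\up[f]{\int{0,i-1}}$ respectively (or subsets thereof that still meet only inside $\up[f]{\int{j+1,0}\cup\int{0,i-1}}$), which requires choosing the indices $i',j',k'$ in the applications of Lemma \ref{lem_mono} correctly and tracking the arc arithmetic modulo $n$ carefully — in particular making sure that $\int{0,j}$ and $\int{0,i-1}$, resp.\ $\int{i,0}$ and $\int{j+1,0}$, are the pieces that decompose the relevant arcs, and that the leftover intersection is precisely $\int{j+1,0}\cap\int{0,i-1}$, which is the interval $\int{j+1,0}\cup\int{0,i-1}$ read on the complementary side. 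Once the bookkeeping of intervals on $\Z/n\Z$ is set up (using Lemma \ref{lem_bef}-style diagrams), the conflict is immediate. A minor point to check is that the three intervals in the statement are the only ones that can be nonempty among the four window-intersections that appear, so that their simultaneous absence really does force disjoint domains.
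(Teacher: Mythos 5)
Your proposal is correct and follows essentially the same route as the paper: two applications of Theorem \ref{thm_main_tool} at cell~$0$ with opposite target values, with the missing intervals $\int{0,j}$ and $\int{i,0}$ killing the corresponding window terms, and then the disjoint-domains/conflict argument forcing $\up[f]{\int{j+1,0}\cup\int{0,i-1}}\neq\emptyset$. The only cosmetic differences are that you argue by assuming all three sets absent (the paper assumes two absent and derives the third) and that you swap which value $v$ is paired with which domain, which is immaterial since the theorem allows either value.
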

\begin{proof}
We assume that~$K$ contains neither~$\int{0,j}$ nor~$\int{i,0}$ and show that~$K$ contains~$\int{j+1,0}\cup\int{0,i-1}$. Let~$f$ generate~$\Mon_n$ with~$K_f\subseteq K$. We apply Theorem \ref{thm_main_tool} twice, to~$(i',j')=(i-1\bmod n,0)$ and to~$(i',j')=(j,0)$:
\begin{itemize}
\item There is a partial input~$\alpha$ such that~$f_0(\alpha)=0$, with
\[
\dom(\alpha)=\up[f]{\int{0,i-1}} \cup\up[f]{\int{i,0}}=\up[f]{\int{0,i-1}},
\]
\item There is a partial input~$\beta$ such that~$f_0(\beta)=1$, with
\[
\dom(\beta)=\up[f]{\int{0,j}} \cup\up[f]{\int{j+1,0}}=\up[f]{\int{j+1,0}}.
\]
\end{itemize}

Therefore, the domains of~$\alpha$ and~$\beta$ intersect, so~$K$ contains~$\int{0,i-1}\cup\int{j+1,0}$. 
\end{proof}

\begin{lemma}\label{lem_ij}
If~$K$ generates~$\Mon_n$ and has no~$(n-1)$-interval containing~$0$, then there exist~$i,j$ satisfying~$1<i<j<n-1$, such that~$K$ contains
\[
\{\int{0,j-1},\int{i+1,0},\int{j+1,i-1}\}.
\]
\end{lemma}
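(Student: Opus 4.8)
The plan is to combine the three-interval Lemma \ref{lem_mono_3_intervals} with the technical Lemma \ref{lem_technical}, exploiting the hypothesis that~$K$ has no~$(n-1)$-interval through~$0$ to rule out the large intervals that appear in those lemmas. First I would recall that, for any~$i,j$, Lemma \ref{lem_mono_3_intervals} forces~$K$ to contain~$\int{0,j}$, or~$\int{i,0}$, or~$\int{j+1,0}\cup\int{0,i-1}$. The point is that~$\int{0,j}$ is an~$(n-1)$-interval exactly when~$j=n-2$, and~$\int{i,0}$ is an~$(n-1)$-interval exactly when~$i=1$; so for the ``interior'' range~$1<i$ and~$j<n-1$, the first two options still produce honest intervals containing~$0$ but of size~$\leq n-2$. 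I would apply this repeatedly, starting from a candidate pair and showing that whatever intervals it hands us, by adjusting~$i$ and~$j$ we can extract two intervals~$\int{0,j-1}$ and~$\int{i+1,0}$ of the desired kind, with~$1<i<j<n-1$, plus the ``middle'' interval~$\int{j+1,i-1}$.

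More concretely, I would run the following argument. Since~$K$ generates~$\Mon_n$ and has no~$(n-1)$-interval containing~$0$, in particular~$K$ cannot contain~$\int{0,n-2}$ (that would be such an interval) nor~$\int{1,0}=\int{1,n}$, etc. I would first use Theorem \ref{thm_main_tool} (or Lemma \ref{lem_mono_3_intervals}) to locate \emph{some} interval of~$K$ containing~$0$ other than~$I_n$: such an interval exists because, taking~$i=j$ generic in the main tool, at least one of~$\int{i+1,j}$, $\int{j,i}$ must be in~$K$, and by running over all rotations of~$0$ one finds that~$0$ lies in some proper interval of~$K$ — if it lay in none, then by Theorem \ref{thm_main_tool} applied with output cell~$0$ the domain would be empty and~$f_0$ constant, a contradiction. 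Having the left-most such interval~$\int{0,b}$ and the right-most such interval~$\int{a,0}$ of~$K$ containing~$0$ (with~$b\leq n-2$ and~$a\geq 1$ by hypothesis), I would set~$j-1=b$, i.e.~$j=b+1\leq n-1$, and~$i+1=a$, i.e.~$i=a-1\geq 0$. The hypothesis gives~$b\leq n-2$ so~$j\leq n-1$, hence~$j<n-1$ unless~$\int{0,n-2}\in K$ — but that is an~$(n-1)$-interval through~$0$, excluded. Similarly~$i>1$. I would then need~$i<j$, which should follow from maximality/minimality considerations: if~$i\geq j$ the two intervals~$\int{0,j-1}$ and~$\int{i+1,0}$ together would already cover~$I_n$ in a way forcing an~$(n-1)$-interval or a contradiction with the chosen extremality. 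Finally, to get the middle interval~$\int{j+1,i-1}$, I would apply Lemma \ref{lem_technical} with this~$i,j$ (noting~$1<i<j<n-1$): it says~$K$ contains one of~$\int{0,j}$, $\int{j+1,i}$, $\int{i+1,1}$, $\int{i,0}$, $\int{1,n-1}$; the intervals~$\int{0,j}$ and~$\int{i,0}$ are excluded by the maximality of~$b=j-1$ and minimality of~$a=i+1$ among intervals of~$K$ through~$0$, and~$\int{1,n-1}$ is an~$(n-1)$-interval not containing~$0$, which I would need to exclude separately (perhaps by an additional use of the main tool, or because such an interval combined with~$\int{0,j-1}$ and~$\int{i+1,0}$ would let us delete vertices down to a complex we already understand). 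That leaves~$\int{j+1,i}$ or~$\int{i+1,1}$ in~$K$, and a further short rotation-symmetry argument, or a second application of Lemma \ref{lem_technical} to a rotated copy of~$K$, should upgrade this to the symmetric interval~$\int{j+1,i-1}$ we want — adjusting the indices~$i,j$ slightly if necessary so that all three intervals~$\int{0,j-1},\int{i+1,0},\int{j+1,i-1}$ land in~$K$ with~$1<i<j<n-1$.

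The main obstacle, I expect, is the bookkeeping that pins down~$i$ and~$j$ so that \emph{simultaneously} all three intervals lie in~$K$ and the strict inequalities~$1<i<j<n-1$ hold. The lemmas individually produce ``one of several intervals'', and the hypothesis of no~$(n-1)$-interval through~$0$ kills some but not all of the alternatives; the delicate part is choosing the extremal intervals of~$K$ through~$0$ on each side and then checking that Lemma \ref{lem_technical}, applied to exactly these~$i,j$ (or to a controlled rotation), cannot return one of the forbidden intervals. In particular, ruling out the option~$\int{1,n-1}$ from Lemma \ref{lem_technical} — an~$(n-1)$-interval not through~$0$, which the hypothesis does \emph{not} directly forbid — is the subtlest point, and I would handle it by observing that if~$\int{1,n-1}\in K$ then, together with~$\int{0,j-1}$, vertex~$0$ would be an independent-looking position whose neighbourhood structure forces either a second~$(n-1)$-interval or a redundancy contradicting minimality; alternatively one runs Theorem \ref{thm_main_tool} once more with a carefully chosen pair to show~$\int{1,n-1}\in K$ would already imply~$\int{0,n-2}\in K$. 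Everything else (the existence of \emph{some} proper interval through~$0$, the translation from ``extremal interval'' indices to~$i,j$, the strictness of the inequalities) is routine once the extremal intervals are fixed.
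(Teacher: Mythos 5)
Your opening is the same as the paper's: take~$i$ minimal with~$\int{i+1,0}\in K$ and~$j$ maximal with~$\int{0,j-1}\in K$ (well-defined since~$\{0\}\in K$, else~$f_0$ would be constant), and use the no-$(n-1)$-interval-through-$0$ hypothesis to get~$i>1$ and~$j<n-1$. But the core of your argument has two genuine gaps. First, your justification of~$i<j$ is backwards: if~$i\geq j$ the intervals~$\int{0,j-1}$ and~$\int{i+1,0}$ do \emph{not} cover~$I_n$ (the positions~$j,\dots,i$ are uncovered), so no contradiction arises from ``covering'', and extremality alone does not force~$i<j$. Second, and more seriously, your plan to obtain the middle interval~$\int{j+1,i-1}$ from Lemma~\ref{lem_technical} cannot work, because that lemma may simply return~$\int{1,n-1}$, and this alternative is \emph{not} excludable: the hypothesis only forbids $(n-1)$-intervals containing~$0$, and the complexes~$K^n_{i,j}$ of the~$K_5$ family --- which satisfy both the hypothesis and the conclusion of Lemma~\ref{lem_ij} --- do contain~$\int{1,n-1}$. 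So any argument that tries to rule out~$\int{1,n-1}\in K$ is attempting to prove something false. Your fallback via ``minimality'' is also unavailable: the lemma does not assume~$K$ minimal (and in its application inside Theorem~\ref{thm_one_interval} it is applied to an arbitrary subcomplex~$K'\subseteq K^n_{i,j}$ generating~$\Mon_n$). Finally, even in the favourable branches, the alternative~$\int{i+1,1}$ from Lemma~\ref{lem_technical} does not contain~$\int{j+1,i-1}$ in general, so it too would need separate treatment that you only gesture at.

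The repair is exactly the step you set up but then abandoned: apply Lemma~\ref{lem_mono_3_intervals} to the extremal pair~$(i,j)$ itself. Maximality of~$j$ and minimality of~$i$ exclude the first two alternatives~$\int{0,j}$ and~$\int{i,0}$, so~$K$ must contain~$S=\int{j+1,0}\cup\int{0,i-1}$. Now the hypothesis does the rest: if~$i>j$ then~$S=I_n$ and if~$i=j$ then~$S=I_n\setminus\{i\}$, and in both cases~$K$ would contain an~$(n-1)$-interval containing~$0$, a contradiction; hence~$i<j$, and then~$S=\int{j+1,i-1}$. This single application simultaneously yields~$i<j$ and the middle interval, with no need for Lemma~\ref{lem_technical} at all.
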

\begin{proof}
Let~$i\geq 0$ be minimal such that~$\int{i+1,0}\in K$ and~$j\leq n$ be maximal such that~$\int{0,j-1}\in K$. Both are well-defined because~$K$ contains~$\{0\}=\int{n,0}=\int{0,0}$ (otherwise, the generation procedure would always assign the same value to~$0$, but~$0$ can take both binary values in~$\Mon_n$). Moreover, one has~$i\leq n-1$ and~$j\geq 1$.

As~$K$ has no~$(n-1)$-interval containing~$0$, one has~$i>1$ and~$j<n-1$. By minimality of~$i$ and maximality of~$j$,~$K$ contains neither~$\int{0,j}$ nor~$\int{i,0}$, so by Lemma \ref{lem_mono_3_intervals},~$K$ contains~$S:=\int{j+1,0}\cup \int{0,i-1}$. If~$i>j$ then~$S=[0,n-1]$, and if~$i=j$ then~$S=[0,n-1]\setminus \{i\}$. Both cases are impossible because~$K$ has no~$(n-1)$-interval containing~$0$. Therefore,~$i<j$ hence~$S=\int{j+1,i-1}$.
\end{proof}

Therefore, if~$K$ contains one~$(n-1)$-interval then, up to some circular permutation, we can assume that it is~$\int{1,n-1}$. As~$K$ has no~$(n-1)$-interval containing~$0$,~$K$ contains~$K^n_{i,j}$ for some~$i,j$ by Lemma \ref{lem_ij}.

\begin{proposition}
Let~$1<i<j<n-1$. $K^n_{i,j}$ is minimal generating~$\Mon_n$.
\end{proposition}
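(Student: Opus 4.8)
The complexes $K^n_{i,j}$ generate $\Mon_n$ by Propositions~\ref{prop_Knij} and~\ref{prop_insertion}, so the content of the statement is minimality. The plan is to take an arbitrary $K\subseteq K^n_{i,j}$ generating $\Mon_n$, fix a function $f$ generating $\Mon_n$ with $K_f\subseteq K$, and show that all four maximal simplices
\[
M_1=\int{1,n-1},\qquad M_2=\int{0,j-1},\qquad M_3=\int{j+1,i-1},\qquad M_4=\int{i+1,0}
\]
of $K^n_{i,j}$ are simplices of $K$. Since $K\subseteq K^n_{i,j}$, every simplex of $K$ sits inside some $M_\ell$, so once all four belong to $K$ the maximal simplices of $K$ are exactly $M_1,\dots,M_4$, and hence $K=K^n_{i,j}$.

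The mechanism in each case is the same elementary remark: because $K_f\subseteq K\subseteq K^n_{i,j}$, an interval $\I$ with $\up[f]{\I}\neq\emptyset$ is a simplex of $K^n_{i,j}$, hence contained in one of $M_1,\dots,M_4$; and, by checking which of the vertices $0,1,i,i+1,j,j+1,n-1$ each $M_\ell$ contains, none of the intervals
\[
I_n,\quad\int{0,j},\quad\int{i,0},\quad\int{j+1,i},\quad\int{i+1,1},\quad\int{j,i-1}
\]
is contained in any $M_\ell$; call these intervals \emph{forbidden}, so that $\up[f]{\I}=\emptyset$ for each forbidden $\I$. I then recover the four maximal simplices as follows.
\begin{itemize}
\item $M_1$: Lemma~\ref{lem_technical}, applied with its parameters equal to the complex's own $i,j$, gives that $K$ contains one of $\int{0,j},\int{j+1,i},\int{i+1,1},\int{i,0},\int{1,n-1}$; the first four are forbidden, so $K\ni M_1$.
\item $M_2$: Lemma~\ref{lem_mono_3_intervals} applied to the pair $(i,j-1)$ gives that $K$ contains $\int{0,j-1}$, $\int{i,0}$, or $\int{j,0}\cup\int{0,i-1}=\int{j,i-1}$; the last two are forbidden, so $K\ni M_2$.
\item $M_3$: apply Theorem~\ref{thm_main_tool} at output cell $0$ twice — once with side parameter $j$ and prescribed value $1$, once with side parameter $i-1$ and prescribed value $0$ — obtaining partial inputs $\alpha,\beta$ with $f_0(\alpha)=1$, $f_0(\beta)=0$, $\dom(\alpha)=\up[f]{\int{j+1,0}}\cup\up[f]{\int{0,j}}$ and $\dom(\beta)=\up[f]{\int{i,0}}\cup\up[f]{\int{0,i-1}}$. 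As $\alpha,\beta$ are incompatible, $\dom(\alpha)\cap\dom(\beta)\neq\emptyset$; expanding this intersection using $\up[f]{X}\cap\up[f]{Y}=\up[f]{X\cup Y}$ yields four terms that collapse to $\up[f]{\int{i,0}}$, $\up[f]{M_3}$, $\up[f]{I_n}$ and $\up[f]{\int{0,j}}$. Three of these are empty, so $\up[f]{M_3}\neq\emptyset$ and $K\ni M_3$.
\item $M_4$: the same argument at output cell $0$ with side parameters $i$ (value $1$) and $j$ (value $0$) leaves $\up[f]{M_4}\neq\emptyset$, the three discarded terms being $\up[f]{I_n}$, $\up[f]{\int{j+1,i}}$ and $\up[f]{\int{0,j}}$.
\end{itemize}

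The only genuine work is bookkeeping. One has to verify that every two-interval union appearing above is the interval claimed — these all collapse for cheap reasons such as $\int{j+1,0}\subseteq\int{i+1,0}$, $\int{j+1,0}\cup\int{0,i-1}=\int{j+1,i-1}$ and $\int{0,j}\cup\int{i,0}=I_n$ (using $i<j$) — and that the six forbidden intervals really lie outside $K^n_{i,j}$, including in the boundary configurations $i=j-1$ and $j=n-2$, where an auxiliary interval such as $\int{j,i-1}$ degenerates to an $(n-1)$-interval but is still distinct from the unique $(n-1)$-interval $M_1=I_n\setminus\{0\}$ of $K^n_{i,j}$. I expect this index arithmetic, and not any conceptual point, to be where care is needed. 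It is worth noting that $M_1$ genuinely needs Lemma~\ref{lem_technical}: a single pair of applications of Theorem~\ref{thm_main_tool} does not suffice in that case, since forcing $M_1$ this way requires side parameters $0$ and $n-1$ and then the two unwanted terms, of the form $\int{0,j'}$ and $\int{j',0}$ for a common $j'$, cannot both be made forbidden.
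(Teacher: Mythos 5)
Your proof is correct. The first half coincides with the paper's: both force $\int{1,n-1}\in K$ by applying Lemma~\ref{lem_technical} with the complex's own parameters $i,j$ and observing that the other four candidate intervals lie outside $K^n_{i,j}$. Where you diverge is in recovering the remaining maximal simplices: the paper notes that $\int{1,n-1}$ is then the only $(n-1)$-interval of $K$, invokes Lemma~\ref{lem_ij} to get some $K^n_{i',j'}\subseteq K\subseteq K^n_{i,j}$, and concludes $i'=i$, $j'=j$ from the incomparability of the family (Remark~\ref{rmk_Knij}); you instead pin down each of $\int{0,j-1}$, $\int{j+1,i-1}$, $\int{i+1,0}$ explicitly --- the first via Lemma~\ref{lem_mono_3_intervals} applied to $(i,j-1)$, the other two via bespoke pairs of applications of Theorem~\ref{thm_main_tool} at output cell $0$ and the incompatibility/disjoint-domain argument, checking in each case that the three unwanted terms $\up[f]{\,\cdot\,}$ vanish because the corresponding intervals ($I_n$, $\int{0,j}$, $\int{i,0}$, $\int{j+1,i}$, $\int{j,i-1}$) are not contained in any maximal simplex of $K^n_{i,j}$. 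I verified your interval bookkeeping (the six forbidden intervals and the collapses of the two-interval unions) and it is sound for all $1<i<j<n-1$. The trade-off: the paper's route is shorter because Lemma~\ref{lem_ij} and Remark~\ref{rmk_Knij} already package the ``there is some member of the family below $K$, and members are incomparable'' step, whereas your argument is more self-contained at the level of this proposition and exhibits directly why each individual maximal interval is indispensable, at the cost of redoing, in specialized form, work the paper factors out; the underlying toolkit (Theorem~\ref{thm_main_tool} plus Lemma~\ref{lem_mono_3_intervals}, which also underlies Lemma~\ref{lem_ij}) is the same in both.
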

\begin{proof}
Let~$K'\subseteq K^n_{i,j}$ generate~$\Mon_n$. We first show that~$K'$ must contain~$\int{1,n-1}$. Lemma \ref{lem_technical} implies that~$K'$ contains one of the following intervals:
\[
\int{0,j},\int{j+1,i},\int{i+1,1},\int{i,0},\int{1,n-1}.
\]
The only one that belongs to~$K^n_{i,j}$ is~$\int{1,n-1}$, which therefore must belong to~$K'$.

As~$\int{1,n-1}$ is the only~$(n-1)$-interval of~$K'$, Lemma \ref{lem_ij} implies that~$K'$ contains~$K^n_{i',j'}$ for some~$i',j'$ satisfying~$1<i'<j'<n-1$. As a result,~$K^n_{i',j'}\subseteq K'\subseteq K^n_{i,j}$, implying~$i'=i$,~$j'=j$ and~$K'=K^n_{i,j}$ by Remark \ref{rmk_Knij}.
%
\end{proof}

\begin{remark}[$n\leq 4$]
The conclusion of Lemma \ref{lem_ij}, i.e.~the existence of~$i,j$ satisfying~$1<i<j<n-1$, is possible only when~$n\geq 5$. Therefore, if~$K$ is a complex generating~$\Mon_n$ for~$n\leq 4$, then every element of~$I_n$ belongs to an~$(n-1)$-interval. It implies that for~$n\in\{2,3,4\}$, the minimal complexes generating~$\Mon_n$ are the complexes in the~$K_2$ family.
\end{remark}

\subsection{The \texorpdfstring{$K_8$}{K8} family}\label{sec_K8_family}

In Section \ref{sec_Mon8}, we show that the complex
\[
K_8=\comp{\int{0,5},\int{2,7},\int{4,1},\int{6,3}}
\]
generates~$\Mon_8$. It induces a whole family of minimal complexes generating~$\Mon_n$ for~$n\geq 8$, obtained by inserting vertices in~$K_8$.

\begin{definition}
The \textbf{$K_8$ family} is made of the complexes
\[
\comp{\int{a_3,a_2-1},\int{a_2,a_1-1},\int{a_1,a_0-1},\int{a_0,a_3-1}},
\]
where~$n\geq 8$ and~$a_0,a_1,a_2,a_3\in\Z$ are such that~$a_{i+1}-a_i\geq 2$ and~$a_3-a_0\leq n-2$.
\end{definition}

\begin{proposition}\label{prop_K8_family}
The members of the~$K_8$ family can be obtained from~$K_8$ by vertex insertions and circular permutations.
\end{proposition}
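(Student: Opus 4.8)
The plan is to follow the proof of Proposition \ref{prop_Knij}. First I would introduce convenient notation: given a member $K=\comp{\int{a_3,a_2-1},\int{a_2,a_1-1},\int{a_1,a_0-1},\int{a_0,a_3-1}}$ of the $K_8$ family, call the four intervals $G_0=\int{a_0,a_1-1}$, $G_1=\int{a_1,a_2-1}$, $G_2=\int{a_2,a_3-1}$, $G_3=\int{a_3,a_0-1}$ the \emph{gaps} of $K$; these are exactly the complements of the four maximal intervals. The gaps are four consecutive intervals that partition $\Z/n\Z$, and the constraints $a_{i+1}-a_i\geq 2$ and $a_3-a_0\leq n-2$ say precisely that each gap has size $\geq 2$. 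Conversely, $K$ is recovered from the cyclic sequence of gap sizes $(s_0,s_1,s_2,s_3)$ together with the position $a_0$ of the left endpoint of $G_0$, and a circular permutation $i\mapsto i+t$ on $I_n$ changes $a_0$ to $a_0+t$ without changing the gap sizes. Since the only way to write $8$ as a sum of four integers $\geq 2$ is $2+2+2+2$, for $n=8$ there is only one member up to circular permutation, and it is $K_8$ (with $(a_0,a_1,a_2,a_3)=(0,2,4,6)$).

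The key observation is that vertex insertion enlarges one gap by one. Suppose $K$ is the member over $\Z/n\Z$ with gaps $G_0,\ldots,G_3$ of sizes $(s_0,\ldots,s_3)$, and take $i=a_k+1$, so that $i-1\bmod n$ and $i\bmod n$ both lie in $G_k$ (this uses $s_k\geq 2$). Since the gaps partition $\Z/n\Z$, the vertices $i-1$ and $i$ belong to every maximal interval of $K$ except $\int{a_{k+1},a_k-1}$; so in $\ins{i}{K}$ the new vertex is added to every maximal interval except that one, and in the graphical picture of Definition \ref{def_insertion} we are simply inserting a new slot between two old slots with identical interval-membership. Hence $\ins{i}{K}$ is again a member of the $K_8$ family, namely the one over $\Z/(n+1)\Z$ obtained by replacing $G_k$ with a gap of size $s_k+1$ and leaving the cyclic sequence of the other gap sizes unchanged. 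I would verify this by reading off, directly from Definition \ref{def_insertion}, the four intervals of $\ins{i}{K}$ in terms of $\rho_i^{+}$ and matching them against the definition of the family.

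With this in hand the proof concludes as in Proposition \ref{prop_Knij}. Let $K'$ be an arbitrary member with gap sizes $(s_0,s_1,s_2,s_3)$, each $\geq 2$ and summing to $n\geq 8$, and with $a_0$ the left endpoint of its first gap. Starting from $K_8$ (gap sizes $(2,2,2,2)$), apply the key observation $s_0-2$ times to enlarge the first gap, then $s_1-2$ times for the second, $s_2-2$ times for the third and $s_3-2$ times for the fourth — a total of $n-8$ insertions, at positions that are immediate to write down at each step. The result is the member over $\Z/n\Z$ with gap sizes $(s_0,s_1,s_2,s_3)$ whose first gap has left endpoint $0$; applying the circular permutation $i\mapsto i+a_0$ then turns it into $K'$.

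I expect the only genuine work to be the bookkeeping in the key observation: one must check carefully, modulo $n$, that the relabeling $\rho_i^{+}$ together with the addition of the inserted vertex sends the four maximal intervals of $K$ precisely to the four maximal intervals of the enlarged member — in particular for the maximal interval that wraps around $0$ and for the membership of the new vertex. The base case and the counting in the last step are routine.
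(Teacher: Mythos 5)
Your proposal is correct and follows essentially the same route as the paper: after normalizing by a circular permutation, one realizes an arbitrary member by repeatedly inserting vertices inside each of the four ``gaps'' (the complements of the maximal intervals) of~$K_8$, which is exactly the paper's prescription of $a_0-2$, $a_1-a_0-2$, $a_2-a_1-2$ and $n-a_2-2$ insertions at the (updated) positions $1,3,5,7$. Your gap-size bookkeeping, including the observation that an insertion between two vertices of a common gap enlarges that gap by one and leaves the family structure intact, is the same computation the paper leaves implicit.
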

\begin{proof}
For~$n=8$, the only possibility is~$a_{i+1}=a_i+2$, so the complex is~$K_8$ or its image by the circular permutation~$a\mapsto a+1\bmod 8$.

Let now~$n>8$ and~$K$ be part of the~$K_8$ family. Applying a circular permutation, we can assume that~$a_3=n$, so that
\[
K=\comp{\int{0,a_2-1},\int{a_2,a_1-1},\int{a_1,a_0-1},\int{a_0,n-1}}.
\]
This complex can be obtained from~$K_8$ by inserting vertices: $a_0-2$ insertions at position~$1$, $a_1-a_0-2$ insertions at position~$3$,~$a_2-a_1-2$ insertions at position~$5$ and~$n-a_2-2$ insertions at position~$7$ (more correctly, the insertions should be made at the new positions of~$3,5,7$).
\end{proof}

\begin{theorem}\label{thm_from_eight}
The members of the~$K_8$ family are minimal generating~$\Mon_n$.
\end{theorem}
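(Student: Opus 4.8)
The strategy is to show that the $K_8$ family is closed under the minimality-preserving vertex insertions from Proposition~\ref{prop_insertion_minimal2}, so that minimality of every member reduces to minimality of $K_8$ itself, which would be proved as a separate (base-case) lemma. First I would check the base case: $K_8$ is minimal generating $\Mon_8$. By Proposition~\ref{prop_K8} it does generate $\Mon_8$, so one must show that no proper subcomplex $K'\subseteq K_8$ generates $\Mon_8$. Since $K_8=\comp{\int{0,5},\int{2,7},\int{4,1},\int{6,3}}$ has no $(n-1)$-interval and no edge outside these four $6$-intervals, one can feed Lemma~\ref{lem_technical} (and its dihedral images under $\Di_{16}$) into $K_8$: for suitable choices of $1<i<j<n-1$ each of the five candidate intervals listed in the lemma must, when intersected with the simplices actually present in $K_8$, force one of the four maximal intervals to lie in $K'$. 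Iterating over the four rotational positions of the parameters $(i,j)$ should pin down all four $6$-intervals, giving $K'=K_8$.

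Next I would handle the inductive step. Given a member $K$ of the $K_8$ family on $I_n$ with parameters $a_0<a_1<a_2<a_3=n$ (normalised by a circular permutation as in the proof of Proposition~\ref{prop_K8_family}), Proposition~\ref{prop_K8_family} exhibits $K$ as obtained from $K_8$ by a sequence of vertex insertions at the four ``free'' positions $1,3,5,7$ (or their current images). The key point is that each such insertion satisfies the hypothesis of Proposition~\ref{prop_insertion_minimal2}: when we insert at position $i$, the vertex $i-1$ lies strictly inside one of the four maximal intervals (it is one of the endpoints that we are ``stretching''), and every maximal interval of the current complex that contains $i\bmod n$ also contains $i-1\bmod n$ --- because at those four insertion sites exactly one maximal interval starts and exactly one maximal interval ends, and the new vertex is inserted between two consecutive interval-endpoints so that it inherits membership in precisely the intervals that already contained its left neighbour. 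Hence Proposition~\ref{prop_insertion_minimal2} applies at each step, and minimality propagates from $K_8$ all the way up to $K$.

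The main obstacle is the base case. Lemma~\ref{lem_technical} and Theorem~\ref{thm_main_tool} give ``at least one of these intervals is present'' disjunctions, but to conclude $K'=K_8$ one needs, for each of the four maximal intervals of $K_8$, a choice of parameters (and a dihedral symmetry $g\in\Di_{16}$) so that the only interval in the disjunction that could possibly sit inside $g\cdot K_8$ is the one we want --- and one must verify this against the rather dense list of $6$-intervals in $K_8$, which overlap heavily. This is the same type of case analysis as in the proof that $K^n_{i,j}$ is minimal (where Lemma~\ref{lem_technical} has a unique ``survivor'' among its five intervals), but here, with no $(n-1)$-interval to anchor the argument and four symmetric maximal simplices, one must be careful to find, for each target interval, a symmetry and parameter choice under which the survivor is unique; this bookkeeping is where the real work lies, and it is presumably also where the automated search tool of \cite{Prover} was used. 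A secondary, easier obstacle is to double-check that the ``inherits membership from the left neighbour'' claim in the inductive step is exactly the hypothesis of Proposition~\ref{prop_insertion_minimal2} (rather than its symmetric variant), i.e.\ to verify the orientation of the insertion at each of the four sites; this is a direct inspection of Definition~\ref{def_insertion} against the interval data and carries no surprises.
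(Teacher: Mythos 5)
Your plan follows the paper's proof almost exactly: minimality of $K_8$ as a base case via Lemma~\ref{lem_technical}, then propagation through the vertex insertions of Proposition~\ref{prop_K8_family}. The one step you leave open --- finding, for each maximal interval of $K_8$, parameters and a symmetry under which the ``survivor'' in Lemma~\ref{lem_technical} is unique --- does go through and is lighter than you fear: after the rotation $a\mapsto a+1\bmod 8$ one works with $K'_8=\comp{\int{1,6},\int{7,4},\int{5,2},\int{3,0}}$, and a single application of the lemma with $(i,j)=(3,5)$ lists $\int{0,5},\int{6,3},\int{4,1},\int{3,0},\int{1,7}$, of which only $\int{3,0}$ lies in $K'_8$; since $K'_8$ is invariant under $a\mapsto a+2\bmod 8$, this one case forces all four maximal intervals, so no further parameter choices or dihedral images are needed. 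For the inductive step the paper invokes Corollary~\ref{cor_insertion_minimal} rather than Proposition~\ref{prop_insertion_minimal2}: each pair $(0,1),(2,3),(4,5),(6,7)$ is contained in or disjoint from every maximal interval of $K_8$, and this block structure is preserved by the insertions, which also disposes of your worry about the orientation (left versus right neighbour) of the hypothesis.
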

\begin{proof}
We know by Proposition \ref{prop_K8} that~$K_8$ generates~$\Mon_8$, we show that it is minimal.
%
It is more convenient to first apply the circular permutation~$a\mapsto a+1\bmod 8$, yielding
 \[
 K'_8=\comp{\int{1,6},\int{7,4},\int{5,2},\int{3,0}}.
 \]
 Let~$K'\subseteq K'_8$ generate~$\Mon_8$. We apply Lemma \ref{lem_technical} to~$i=3$ and~$j=5$, implying that~$K'_8$ contains one of the following intervals:
\[
\int{0,5},\int{6,3},\int{4,1},\int{3,0},\int{1,7}.
\]
The only interval that belongs to~$K'_8$ is~$\int{3,0}$, which therefore must belong to~$K'$. By symmetry,~$K'$ contains every maximal interval of~$K'_8$, hence~$K'=K'_8$. More precisely, observe that~$K'_8$ is very symmetric: it is invariant by the circular permutation~$a\mapsto a+2\bmod 8$. Any maximal interval of~$K'_8$ can be sent to~$\int{3,0}$ by iterating this permutation, hence must belong to~$K'$. We have proved the minimality of~$K'_8$, hence of~$K_8$.


For~$n>8$, up to a circular permutation, a complex in the~$K_8$ family is obtained by vertex insertions, as explained in Proposition \ref{prop_K8_family}. Moreover, in~$K_8$, each one of the pairs~$(0,1)$, $(2,3)$, $(4,5)$ and~$(6,7)$ has the particular property that every interval of~$K_8$ either contains this pair, or is disjoint from it. Therefore, Corollary \ref{cor_insertion_minimal} implies that the resulting complex~$K$ is minimal generating~$\Mon_n$.
\end{proof}

\subsection{Small values of \texorpdfstring{$n$}{n}}
For small values of~$n$, we can almost completely describe the minimal complexes generating~$\Mon_n$. There is nonetheless one exception: for~$n=6$, there is one complex for which we do not know whether it generates~$\Mon_6$. It is the complex made of all the~$4$-intervals,
\[
K_6=\comp{\int{0,3},\int{1,4},\int{2,5},\int{3,0},\int{4,1},\int{5,2}},
\]
and is illustrated in Figure \ref{fig_K6}.
\begin{figure}[ht]
\centering
\includegraphics{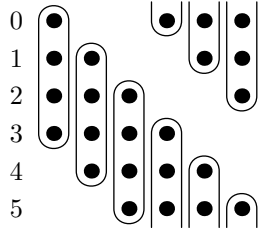}
\caption{The complex~$K_6$}\label{fig_K6}
\end{figure}
We define the~$K_6$ family as the family of complexes obtained from~$K_6$ by vertex insertions.

\begin{theorem}\label{thm_small_values}
For~$2\leq n\leq 7$, the minimal complexes generating~$\Mon_n$ belong to the families of~$K_2$,~$K_5$,~$K_6$ or~$K_7$.
\end{theorem}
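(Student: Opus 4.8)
The plan is to proceed by induction on $n$, using the two insertion mechanisms already available and reducing the base cases to the exhaustive results proved earlier in the paper. For each $n$ in the range, a minimal complex generating $\Mon_n$ is, by Theorem~\ref{thm_intervals}, made of intervals, so it has a well-defined number $m$ of $(n-1)$-intervals (i.e.\ maximal simplices of size $n-1$). The case $m\geq 2$ is handled by Proposition~\ref{prop_mono_2_simplices}: such a complex lies in the $K_2$ family. The case $m=1$ is handled by Theorem~\ref{thm_one_interval}: it lies in the $K_5$ family. So the entire burden is the case $m=0$, i.e.\ minimal complexes all of whose maximal intervals have size at most $n-2$.

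For the case $m=0$, I would first dispose of $n\leq 4$: by the Remark following Theorem~\ref{thm_one_interval}, for $n\in\{2,3,4\}$ every vertex lies in an $(n-1)$-interval, so $m=0$ is impossible and there is nothing to prove. For $n=5$, a minimal complex with no $4$-interval has all maximal intervals of size $\leq 3$; a short direct argument (using Theorem~\ref{thm_main_tool}, which forces, for every pair $i,j$, one of $\int{i+1,j}$ or $\int{j,i}$ into $K$, and the total-size constraints these impose) should show no such complex can generate $\Mon_5$, so again $m=0$ contributes nothing. The real content is $n=6$ and $n=7$. For $n=7$, I would argue that a minimal complex with no $6$-interval and all maximal intervals of size $\leq 5$ must in fact be, up to circular permutation, $K_7$ (or possibly $K_6$ after appropriate insertion): one applies Lemma~\ref{lem_technical} and its symmetric versions, together with Theorem~\ref{thm_main_tool}, to force enough intervals into $K$, and then checks that the only interval patterns compatible with all these forced memberships are those of the $K_6$ family, the $K_7$ complex itself, or complexes obtained from $K_6$ or $K_7$ by a single vertex insertion from a minimal complex generating $\Mon_6$. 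For $n=6$, the case $m=0$ means all maximal intervals have size $\leq 4$; combining the forcing lemmas shows that the only candidate is $K_6$ itself (all $4$-intervals) --- which is exactly the complex the paper explicitly flags as unresolved, so it enters the statement only as a named family and is not claimed to generate $\Mon_6$.

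The cleanest way to organize the inductive step from $n-1$ to $n$ (for $n=6,7$) is: given a minimal $K$ generating $\Mon_n$ with $m=0$, if some pair of adjacent vertices $\{i-1,i\}$ has the property that every maximal interval of $K$ either contains it or is disjoint from it, then $\del{i}{K}$ is (after checking minimality, using Proposition~\ref{prop_insertion_minimal} / Corollary~\ref{cor_insertion_minimal} in reverse) a minimal complex generating $\Mon_{n-1}$, hence --- by the inductive hypothesis --- in one of the four named families, and $K=\ins{i}{\del{i}{K}}$ lies in the same family; otherwise, the interval structure is ``rigid'' enough that the forcing lemmas pin $K$ down to $K_6$ or $K_7$ directly. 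The main obstacle is the second alternative: ruling out, for $n=7$, every minimal interval complex with no $6$-interval other than $K_7$ and the one-insertion descendants of $K_6$. This is a finite but genuinely case-heavy analysis --- precisely the kind of argument the paper says was carried out with computer assistance (\cite{Prover}) --- and the write-up would lean on Lemma~\ref{lem_technical}, Theorem~\ref{thm_main_tool}, and the symmetries of $\Di_{2n}$ to keep the number of cases manageable, treating the residual $n=6$ complex $K_6$ as an explicitly acknowledged gap rather than attempting to resolve it.
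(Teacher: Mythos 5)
There is a genuine gap: your proposal is correct at the top level (splitting by the number $m$ of $(n-1)$-intervals, invoking Proposition \ref{prop_mono_2_simplices} for $m\geq 2$, Theorem \ref{thm_one_interval} for $m=1$, and the Remark after that theorem for $n\leq 4$ --- this is exactly how the paper begins), but the entire substance of the theorem is the case $m=0$ for $n=5,6,7$, and there you only describe a plan. For $n=7$ you explicitly defer the ``finite but genuinely case-heavy analysis'' without carrying it out; the paper does carry it out, and with a short structured argument rather than brute force: Lemma \ref{lem_technical} applied to $(i,j)=(2,5)$ forces every $4$-interval into $K$ (Proposition \ref{prop_every4}), then applications to $(i,j)=(2,4)$ and $(3,4)$ show that the set $A$ of starting points of $5$-intervals meets $\{a,a+2\}$ and $\{a,a+1,a+4\}$ for every $a$ (Proposition \ref{prop_A}), and a few lines of case analysis on $A$ (using minimality to bound $|A|\leq 5$) leave only $\ins{6}{K_6}$ and circular permutations of $K_7$. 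Similarly for $n=6$ the decisive step is Lemma \ref{lem_technical} at $(i,j)=(3,4)$, whose list contains a single interval of length $<5$, namely $\int{3,0}$, forcing all $4$-intervals by symmetry, hence $K=K_6$. Your sketch gestures at ``the forcing lemmas'' but supplies none of these instantiations, so the proof of the only nontrivial cases is missing.

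Two of the specific mechanisms you propose would also not work as stated. For $n=5$ with $m=0$ you suggest that Theorem \ref{thm_main_tool} plus size bookkeeping suffices, but for $n=5$ there is no pair $(i,j)$ for which both $\int{i+1,j}$ and $\int{j,i}$ have size $\geq 4$, so the main tool alone can never force a $4$-interval; you need Lemma \ref{lem_technical} with $(i,j)=(2,3)$, whose five listed intervals are precisely the $4$-intervals of $I_5$ (this is the paper's argument). And your inductive organization via vertex deletion rests on the claim that $\del{i}{K}$ is again minimal ``using Proposition \ref{prop_insertion_minimal} / Corollary \ref{cor_insertion_minimal} in reverse''; those results only go in the insertion direction, and deletion does \emph{not} preserve minimality in general --- the paper's own example $\del{1}{K_5}\supsetneq\comp{I_4\setminus\{0\},I_4\setminus\{1\}}$ shows this --- nor is $K=\ins{i}{\del{i}{K}}$ automatic. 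So the reduction to $\Mon_{n-1}$ you propose is unsupported, and the direct case analysis it was meant to avoid is exactly what has to be done.
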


Conversely, we have previously seen that all the members of the families of~$K_2$,~$K_5$ and~$K_7$ are minimal generating~$\Mon_n$, and we leave open whether~$K_6$ indeed generates~$\Mon_6$. A computer program shows that the inference system from Section \ref{sec_rule_system} does not find any conflict on~$K_6$ and~$\Mon_6$. The inference system is incomplete for general languages, and we do not know whether it is complete for~$\Mon_n$. A computer search shows that there is no function~$f:\{0,1\}^6\to \{0,1\}^6$ generating~$\Mon_6$ with~$K_f\subseteq K_6$ and that commutes with negation.

The proof of Theorem \ref{thm_small_values} is given in the appendix, Section \ref{app_small}.

\section{The \texorpdfstring{$3/4$}{3/4} ratio}\label{sec_34}
In this section, we investigate the following question: how small can the intervals of a complex generating~$\Mon_n$ be? More precisely, what is the minimal value of~$k$ such that there is a complex generating~$\Mon_n$ and whose intervals all have size at most~$k$? We show that~$k$ is approximately~$\frac{3n}{4}$.

\begin{definition}\label{def_mu}
Let~$\mu(n)$ be the minimal~$k$ such that there exists a complex generating~$\Mon_n$ whose intervals all have sizes at most~$k$.
\end{definition}
\begin{theorem}[Complexes with shortest intervals]\label{thm_mu}
For all~$n$, one has
\[
\bigfloor{\frac{3n+1}{4}}\leq \mu(n)\leq \bigceil{\frac{3n}{4}}.
\]
\end{theorem}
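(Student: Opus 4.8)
The proof splits into an explicit construction for the upper bound and a combinatorial argument for the lower bound.

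\emph{Upper bound.} Assume $n\geq 8$, set $q=\lfloor n/4\rfloor\geq 2$, and take the four points $a_0=0,\ a_1=q,\ a_2=2q,\ a_3=3q$. Then $a_{i+1}-a_i=q\geq 2$ for $i=0,1,2$ and $a_3-a_0=3q\leq 3n/4\leq n-2$, so
\[
K=\comp{\int{a_3,a_2-1},\int{a_2,a_1-1},\int{a_1,a_0-1},\int{a_0,a_3-1}}
\]
is a member of the $K_8$ family and hence generates $\Mon_n$ by Theorem \ref{thm_from_eight}. Three of its four maximal intervals have size $n-q=\lceil 3n/4\rceil$ and the last, $\int{a_0,a_3-1}$, has size $3q=3\lfloor n/4\rfloor$; checking the four residues of $n$ modulo $4$ gives $3\lfloor n/4\rfloor\leq\lceil 3n/4\rceil$. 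Thus all intervals of $K$ have size at most $\lceil 3n/4\rceil$, so $\mu(n)\leq\lceil 3n/4\rceil$.

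\emph{Lower bound, setup.} Suppose $K_0$ generates $\Mon_n$; it contains a minimal complex $K$ generating $\Mon_n$, all of whose maximal simplices are intervals by Theorem \ref{thm_intervals}. It suffices to produce a maximal interval of $K$ of size at least $\lfloor(3n+1)/4\rfloor=\lceil(3n-2)/4\rceil$, since this interval is a face of $K_0$. Let $k$ be the maximal size of an interval of $K$; we may assume $k<n-1$, since otherwise $k\geq n-1\geq\lceil(3n-2)/4\rceil$, so it is enough to prove $4k\geq 3n-2$. For a vertex $j$, let $B(j)$ and $F(j)$ be the maximal sizes of an interval of $K$ ending, respectively starting, at $j$; both are at most $k$ and at least $1$ (every singleton is in $K$, as otherwise some $f_j$ would be constant). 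Fixing $j$ and applying Theorem \ref{thm_main_tool} with second coordinate $j$, the domain of the resulting partial input cannot be empty (otherwise $f_j$ would be forced to be simultaneously constant $0$ and constant $1$), which means that for every $m\in\{1,\dots,n\}$ either the length-$m$ arc ending at $j$ or the length-$(n+1-m)$ arc starting at $j$ lies in $K$. Hence no $m$ satisfies $B(j)<m<n+1-F(j)$, i.e.\ $B(j)+F(j)\geq n$, and in particular $B(j),F(j)\geq n-k$ for every $j$.

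\emph{Lower bound, refinement.} The bound $B(j)+F(j)\geq n$ alone gives only $k\geq\lceil n/2\rceil$, and the improvement to $\frac{3n}{4}$ is the crux. I would fix four vertices spaced as evenly as possible on the circle, so the four arcs they determine have sizes within $1$ of $n/4$, and apply Lemma \ref{lem_technical} together with its images under $\Di_{2n}$ --- and with a companion statement of the same flavour, obtained by concatenating the comparison diagrams of Section \ref{sec_def} along a slightly different pattern, which I would establish first --- to each of the four triples obtained by deleting one of the four vertices. Each such application yields a disjunction: one of a short list of arc-sums of the configuration (sums of two or three of the four basic arcs, with $\pm1$ corrections produced by the strict spacing constraints $1<i<j<n-1$ in Lemma \ref{lem_technical}) is at most $k$. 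Feeding in $B,F\geq n-k$ rules out the alternatives in which a single basic arc would have to be long, and summing the surviving inequalities over the four triples (each basic arc occurring a fixed number of times) produces $4k\geq 3n-2$. The main obstacle is precisely this case analysis: keeping track of how the $\pm1$ corrections combine across the four overlapping triples is what yields $\lfloor(3n+1)/4\rfloor$ rather than the weaker value $n-\lceil n/4\rceil$ that the most naive balancing gives (the two differ by one when $n\equiv 1\pmod 4$), and it is where the concatenability of the comparison diagrams is essential.
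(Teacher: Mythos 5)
Your upper bound is correct and is essentially the paper's own construction: choosing four roughly equally spaced points and invoking the $K_8$ family (Theorem \ref{thm_from_eight}) is exactly how the paper obtains $\mu(n)\leq\bigceil{\frac{3n}{4}}$ for $n\geq 8$. Your first step toward the lower bound is also sound: the observation that Theorem \ref{thm_main_tool} forces, for every vertex $j$, either the length-$m$ interval ending at $j$ or the length-$(n+1-m)$ interval starting at $j$ to lie in $K$, hence $B(j)+F(j)\geq n$, is a valid (if not explicitly stated) consequence of that theorem.

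However, the lower bound is the hard half of the theorem, and your proposal does not prove it. The entire improvement from $k\gtrsim n/2$ to $k\geq\bigfloor{\frac{3n+1}{4}}$ is delegated to an unnamed ``companion statement of the same flavour'' as Lemma \ref{lem_technical}, which you neither state nor prove, and to a case analysis you explicitly do not carry out. Worse, the outlined strategy does not obviously work even granting effort: each application of Lemma \ref{lem_technical} to a triple of your four evenly spaced vertices produces a disjunction whose weakest alternative is an interval covering only two adjacent quarter-arcs (size about $n/2$; with three evenly spaced points one gets about $2n/3$), and such an alternative can only be ``ruled out'' by knowing that this specific interval is \emph{not} in $K$, i.e.\ that its size exceeds $k$ --- information that the bounds $B(j),F(j)\geq n-k$ do not provide, since they assert that certain intervals \emph{are} in $K$, which never falsifies a disjunct. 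The paper's actual argument is of a different nature: it is an induction (Lemma \ref{lem_b}) that, assuming no $k$-interval with $k=\bigfloor{\frac{3n+1}{4}}$, builds for each $j$ from $2k-n-1$ up to $k-1$ a constraint $f_0(\gamma)=0$ whose domain is $\up{\int{0,j}}\cup\up{\int{j+1,2(n-k)+j+1}}$, each step requiring Theorem \ref{thm_main_tool}, Lemma \ref{lem_mono}, and a verification that four unions of shifted intervals all have length at least $k$ (so their common windows are empty); only after roughly $n-k$ such steps does one reach two compatible partial inputs assigning opposite values to position $0$. This chain of constraints, not a bounded number of applications of Lemma \ref{lem_technical}, is what produces the $3/4$ ratio, and it is missing from your proposal.
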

When~$n=0$ or~$1\bmod 4$, the lower and upper bound coincide, so they give the exact value of~$\mu(n)$. In the other cases, they give the value~$\mu(n)$ up to~$1$. We give the first values in Table \ref{tab_mu}.
\begin{table}[ht]
\[
\begin{array}{|c||ccc|}
\hline
\rule[-2.5mm]{0pt}{4.3ex}n&\floor{\frac{3n+1}{4}}&\mu(n)&\ceil{\frac{3n}{4}}\\
\hline
\hline
\rule{0pt}{2.5ex}
1&1&1&1\\\hline
2&1&1&2\\\hline
3&2&2&3\\\hline
4&3&3&3\\\hline
5&4&4&4\\\hline
6&4&?&5\\\hline
7&5&5&6\\\hline
8&6&6&6\\\hline
\end{array}
\]
\caption{First values of~$\mu(n)$ and its bounds}\label{tab_mu}
\end{table}

The upper bound is not optimal for~$n=2,3$ and~$7$, because~$\Mon_2$ and~$\Mon_3$ are generated by members of the~$K_2$ family, and~$\Mon_7$ is generated by~$K_7$. For~$n=6$, we do not know the exact value of~$k$, which is~$4$ if~$K_6$ generates~$\Mon_6$ and~$5$ otherwise.

\subsection{Upper bound}\label{sec_upper_bound}
The upper bound can be directly obtained from the~$K_8$ family of complexes presented in Section \ref{sec_K8_family}.

\begin{corollary}[Upper bound]
For~$n\geq 8$, there exists a complex generating~$\Mon_n$ whose intervals all have size at most~$\ceil{\frac{3n}{4}}$.
\end{corollary}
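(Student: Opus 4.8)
The plan is to exhibit a suitable member of the $K_8$ family from Section \ref{sec_K8_family}. Recall that such a complex is specified by four integers $a_0,a_1,a_2,a_3\in\Z$ with $a_{\ell+1}-a_\ell\geq 2$ and $a_3-a_0\leq n-2$. Setting $g_1=a_1-a_0$, $g_2=a_2-a_1$, $g_3=a_3-a_2$ and $g_0=n-(a_3-a_0)$, the data is equivalently four cyclic ``gaps'' $g_0,g_1,g_2,g_3\geq 2$ with $g_0+g_1+g_2+g_3=n$, and by Theorem \ref{thm_from_eight} every such complex (minimally) generates $\Mon_n$.

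First I would compute, in terms of the gaps, the sizes of the four maximal intervals $\int{a_3,a_2-1}$, $\int{a_2,a_1-1}$, $\int{a_1,a_0-1}$, $\int{a_0,a_3-1}$. Taking $a_0=0$ without loss of generality and evaluating $1+((b-a)\bmod n)$ in each case, one gets that the four sizes are $n-g_0$, $n-g_1$, $n-g_2$, $n-g_3$. Hence the longest interval of the complex has size $n-\min_\ell g_\ell$, and the task reduces to making $\min_\ell g_\ell$ as large as possible subject to $\sum_\ell g_\ell=n$ and each $g_\ell\geq 2$.

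I would then take the balanced choice $g_1=g_2=g_3=\bigfloor{n/4}$ and $g_0=n-3\bigfloor{n/4}$. Since $n\geq 8$ we have $\bigfloor{n/4}\geq 2$, and $g_0\geq \bigfloor{n/4}\geq 2$ because $4\bigfloor{n/4}\leq n$; likewise $a_3-a_0=3\bigfloor{n/4}\leq n-2$, so this is a legitimate member of the $K_8$ family. Its maximal intervals then all have size at most $n-\bigfloor{n/4}$, and a short case analysis on $n\bmod 4$ gives the identity $n-\bigfloor{n/4}=\bigceil{3n/4}$, which is the claimed bound.

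There is essentially no obstacle here; the only points that need care are the interval-size computation — the indices in the definition of the $K_8$ family are arranged precisely so that, up to reindexing, the $\ell$-th maximal interval omits the $\ell$-th gap — and the elementary identity $n-\bigfloor{n/4}=\bigceil{3n/4}$.
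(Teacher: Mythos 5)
Your proof is correct and follows essentially the same route as the paper: both exhibit the balanced member of the $K_8$ family (three gaps equal to $\floor{n/4}$, the remaining gap $n-3\floor{n/4}$, which is exactly the paper's choice $a_i=(i+1)(n-k)$ with $k=\ceil{3n/4}$ up to a circular shift) and invoke Theorem \ref{thm_from_eight}. Your gap reformulation, the verification that each $g_\ell\geq 2$ for $n\geq 8$, and the identity $n-\floor{n/4}=\ceil{3n/4}$ all check out.
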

\begin{proof}
Let~$k=\ceil{\frac{3n}{4}}$ and~$a_i=(i+1)(n-k)$.  One can check that~$n\geq 8$ implies~$n-k\geq 2$, so~$a_{i+1}-a_i\geq 2$ and~$n-a_2\geq 2$. Therefore, we can apply Theorem \ref{thm_from_eight}: the complex
\[
K=\{\int{a_3,a_2-1},\int{a_2,a_1-1},\int{a_1,a_0-1},\int{a_0,a_3-1}\}
\]
generates~$\Mon_n$. Its first three maximal intervals have length~$k$ and its fourth one has length~$3(n-k)\leq k$.

We note that this choice of~$a_i$ is optimal: the sum of the lengths of the four maximal intervals in~$K$ is~$3n$, so one of them must have length at least~$\ceil{\frac{3n}{4}}$ (otherwise, the sum of the lengths is at most~$4(\ceil{\frac{3n}{4}}-1)<3n$).
\end{proof}

\subsection{Lower bound}
We now present the proof of the lower bound, which is unfortunately very technical. It has been found by automatically searching conflicts for small values of~$n$ and complexes made of small intervals, and then identifying common patterns in these conflicts. It would be interesting to find a more elegant argument that also explains the~$3/4$ ratio.
\begin{theorem}\label{thm_lower_bound}
If a complex generates~$\Mon_n$, then it contains an interval of length~$\floor{\frac{3n+1}{4}}$.
\end{theorem}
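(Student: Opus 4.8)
## Proof Proposal for Theorem \ref{thm_lower_bound}

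\paragraph{Proof proposal.}
The plan is to argue by contradiction. Let $K$ be a complex generating $\Mon_n$; passing to a minimal subcomplex and invoking Theorem \ref{thm_intervals}, we may assume $K$ is minimal and that all its maximal simplices are intervals. Set $m=\floor{\frac{3n+1}{4}}$ and suppose, for contradiction, that every interval of $K$ has length at most $m-1$. Since $m-1<n-1$ for every $n\ge 2$, in particular $K$ has no $(n-1)$-interval, so in every instance of Lemma \ref{lem_technical} and of its images under $\Di_{2n}$ the option $\int{1,n-1}$ (and all its rotations, of length $n-1$) is excluded, forcing $K$ to contain one of the four remaining, shorter intervals. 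For small $n$ (say $n\le 7$) the statement follows directly from the classification of minimal complexes in Section \ref{sec_families}: by Theorem \ref{thm_small_values} every minimal complex lies in the family of $K_2$, $K_5$, $K_6$ or $K_7$, and each such complex visibly has an interval of length at least $\floor{\frac{3n+1}{4}}$ (compare Table \ref{tab_mu}). So we may assume $n\ge 8$.

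The difficulty is that a single application of Theorem \ref{thm_main_tool} is too weak: the two candidate intervals $\int{i+1,j}$ and $\int{j,i}$ have lengths summing to $n+1$, and when $m$ is close to $\frac{3n}{4}$ both can have length at most $m-1$, so no single step forces a long interval into $K$; the same obstruction affects a single use of Lemma \ref{lem_mono_3_intervals} or of Lemma \ref{lem_technical}, whose four short options cannot all be made long simultaneously. The plan is therefore to chain many such steps together, in the spirit of the proof of Lemma \ref{lem_mono}: fixing $f$ in canonical form with $K_f=K$, one derives a sequence of constraints $f_{k_1}(\alpha_1)=v_1,\ f_{k_2}(\alpha_2)=v_2,\dots$ by repeatedly applying the language rules together with the Restriction and Join rules, each step exploiting the absence from $K$ of some interval of length exceeding $m-1$ in order to shrink the domain $\dom(\alpha_t)$, which stays of the form $\up{\int{a,b}}\cup\up{\int{b',a'}}$. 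Tracking how the lengths of the intervals occurring as these domains evolve along the chain, one sees that after what amounts to four ``quarter turns'' around the circle one reaches two compatible partial inputs forced to opposite values at a common cell, a conflict; the bookkeeping is arranged so that the threshold at which the argument closes is exactly $m=\floor{\frac{3n+1}{4}}$, which is also what produces the $3/4$ ratio.

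The main obstacle is the explicit construction and verification of this chain for general $n$. Concretely one must: exhibit the precise sequence of positions $k_t$, values $v_t$, and intervals invoked (these were located by exhaustive search over small $n$ and complexes made of short intervals, then generalized); check at every step that the interval removed from the current domain has length greater than $m-1$ and is therefore genuinely absent from $K$; and carry out the ensuing case analysis, which depends on $n\bmod 4$ and on which intervals of lengths $m-1,m-2,\dots$ actually occur in $K$. This is the part that, as in the proof of Lemma \ref{lem_technical}, is unavoidably technical; the preliminary reductions above are routine.
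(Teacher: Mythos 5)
Your proposal identifies the right general strategy — assume all intervals have length at most $m-1$, fix a canonical generating function, and chain applications of Theorem \ref{thm_main_tool} / Lemma \ref{lem_mono}, using the absence of long intervals to make the relevant $\up{\cdot}$ sets empty, until two compatible partial inputs force opposite values at a common cell. This is indeed the shape of the paper's argument. However, what you present is a plan, not a proof: the entire mathematical content of the theorem lies in the step you defer as ``the main obstacle,'' namely exhibiting the chain and verifying that it closes at exactly $m=\floor{\frac{3n+1}{4}}$. The paper does this via a precise inductive statement (Lemma \ref{lem_b}): for $k=\floor{\frac{3n+1}{4}}$ and every $j$ with $2k-n-1\leq j\leq k-1$, there is a partial input $\gamma$ with $f_0(\gamma)=0$ and $\dom(\gamma)=\up{\int{0,j}}\cup\up{\int{j+1,2(n-k)+j+1}}$. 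The induction step combines the statement at position $0$ with its shift by $a=n-k$, and the crucial verification is that all four pairwise unions of the intervals $\int{0,j}$, $\int{j+1,b+j+1}$, $\int{a,a+j}$, $\int{a+j+1,a+b+j+1}$ are intervals of length at least $k$ (so their $\up{\cdot}$ sets are empty and the two partial inputs are compatible), which follows from the two inequalities $k<n$ and $3k\geq 2n+1$. The conclusion at $j=k-1$, played against a second partial input from Theorem \ref{thm_main_tool} forcing $f_0=1$, with two more interval-union length checks, yields the conflict. None of this is in your write-up, and nothing in your sketch guarantees that a chain with the required bookkeeping exists or that the threshold is $\floor{\frac{3n+1}{4}}$ rather than some other fraction of $n$.

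Two smaller points. First, your anticipated ``case analysis depending on $n\bmod 4$ and on which intervals of lengths $m-1,m-2,\dots$ actually occur in $K$'' does not reflect how the argument actually closes: the paper needs no such case split, only the uniform inequalities $k<n$ and $3k\geq 2n+1$ (checked separately for $n=7,8$), which is precisely where the $3/4$ ratio comes from; this suggests the chain you have in mind has not been worked out. Second, your reduction of small cases via Theorem \ref{thm_small_values} is fine (the paper uses it for $n\leq 6$ and handles $n\geq 7$ uniformly), and the appeal to minimality and Theorem \ref{thm_intervals} is harmless but not needed — the paper works directly with emptiness of $\up{S}$ for missing intervals. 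As it stands, the proposal restates the difficulty rather than resolving it, so it does not constitute a proof of the theorem.
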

\begin{proof}
Let~$k=\floor{\frac{3n+1}{4}}$. 
For~$n\leq 6$, the statement follows from Theorem \ref{thm_small_values}. We now assume that~$n\geq 7$. We will use the facts that~$k<n$ and~$3k\geq 2n+1$ (the latter holds for~$n\geq 9$ because~$3k>3(\frac{3n+1}{4}-1)=2n+\frac{n-9}{4}\geq 2n$, and can be checked for~$n=7$ or~$8$ in Table \ref{tab_mu}).

Assume for a contradiction that~$K$ generates~$\Mon_n$ and contains no interval of size~$k$. The argument consists in using the inference system from Section \ref{sec_rule_system} to build more and more constraints that eventually lead to a contradiction.

\newcommand{\aaa}{a}
\newcommand{\bbb}{b}
\newcommand{\ccc}{c}
\begin{lemma}\label{lem_b}
Assume that~$f$ generates~$\Mon_n$ and~$K_f$ contains no~$k$-interval. Let~$\bbb=2(n-k)$. For~$2k-n-1\leq j\leq k-1$, there exists~$\gamma$ such that~$f_0(\gamma)=0$ and
\[
\dom(\gamma)=\up[f]{\int{0,j}}\cup\up[f]{\int{j+1,\bbb+j+1}}.
\]
\end{lemma}
\begin{proof}
To lighten the notations, we drop the subscript~$f$ in~$\W_f$. We prove the result by induction on~$j$. Note that by symmetry, the statement implies more generally that for any~$a\in I_n$, there exists~$\gamma$ such that~$f_a(\gamma)=0$ and~$\dom(\gamma)=\up{\int{a,a+j}}\cup\up{\int{a+j+1,a+b+j+1}}$. This more general statement will be used as induction hypothesis.

For~$j=2k-n-1$, the result is a direct application of Theorem \ref{thm_main_tool}, giving~$\dom(\gamma)=\up{\int{0,j}}\cup\up{\int{j+1,n}}$, and indeed~$b+j+1=n$ (and the version for any~$a$ holds by symmetry).

We assume the result for~$j$ satisfying~$2k-n-1\leq j<k-1$ and prove it for~$j+1$. Let~$\aaa=n-k$ and
\begin{align*}
\I_0&=\int{0,j}&\I_1&=\int{j+1,\bbb+j+1}\\
\J_0&=\int{\aaa,\aaa+j}&\J_1&=\int{\aaa+j+1,\aaa+\bbb+j+1}.
\end{align*}

From the induction hypothesis, there exist partial inputs~$\alpha$ and~$\beta$ such that:
\begin{itemize}
\item $f_0(\alpha)=0$ and~$\dom(\alpha)=\up{\I_0}\cup\up{\I_1}$,
\item $f_\aaa(\beta)=0$ on~$\dom(\beta)=\up{\J_0}\cup\up{\J_1}$.
\end{itemize}

\begin{claim}
The domains of~$\alpha$ and~$\beta$ are disjoint.
\end{claim}
The intersection of these domains is
\[
\dom(\alpha)\cap\dom(\beta)=\up{\I_0\cup\J_0}\cup\up{\J_0\cup\I_1}\cup\up{\I_1\cup\J_1}\cup\up{\J_1\cup\I_0}.
\]
We prove that it is empty by showing that each union~$\I_u\cup\J_v$ is an interval of length at least~$k$.

\begin{figure}[ht]
\centering
\includegraphics{Images/intervals-2}
\caption{$\I_0\bef\J_0\bef\I_1\bef\J_1\bef\I_0$}\label{fig_overlap}
\end{figure}

We check that~$\I_0\bef\J_0\bef\I_1\bef\J_1\bef\I_0$ (illustrated in Figure \ref{fig_overlap}). Indeed, each pair of consecutive intervals can be expressed as~$\int{x,z}\bef\int {y,t}$ where~$x\leq y\leq z\leq t<x+n$ (Lemma \ref{lem_bef}), as summarized by the following diagram (see paragraph after Lemma \ref{lem_bef}):
\begin{center}
\begin{tikzcd}
0\arrow[r]& \wid[a+j]{a}\arrow[r]\dlArrow& \wid[b+j+1]{j+1}\arrow[r]\dlArrow& \wid[a+b+j+1]{a+j+1}\arrow[r]\dlArrow& \wid[n+j.]{n}\dlArrow\\
j\arrow[r]& a+j\arrow[r]\ulArrow& b+j+1\arrow[r]\ulArrow& a+b+j+1\arrow[r]\ulArrow& n+j.\ulArrow
\end{tikzcd}
\end{center}

%

These inequalities all follow from:
\begin{gather*}
k<n,\\
2k-n-1\leq j<k-1,\\
2n+1\leq 3k.
\end{gather*}

Therefore, the consecutive unions of intervals are intervals:
\begin{itemize}
\item $\I_0\cup \J_0=\int{0,\aaa+j}$ has size~$\aaa+j+1\in [k,n]$,
\item $\J_0\cup\I_1=\int{\aaa,\bbb+j+1}$ has size~$\bbb+j+2-\aaa\in [k,n]$,
\item $\I_1\cup\J_1=\int{j+1,\aaa+\bbb+j+1}$ has size~$\aaa+\bbb+1\in [k,n]$,
\item $\J_1\cup \I_0=\int{\aaa+j+1,n+j}$ has size~$n-\aaa=k\in [k,n]$.
\end{itemize}

As a result,~$K$ does not contain any~$\I_u\cup\J_v$, which means that~$\up{\I_u\cup\J_v}$ is empty. Therefore,~$\alpha$ and~$\beta$ have disjoint domains and the claim is proved.

This claim implies that~$\alpha$ and~$\beta$ are compatible, so we can apply Lemma \ref{lem_mono}. Let~$\ccc=n-1\in \int{\aaa,0}$. One has~$f_\ccc(\gamma)=0$ where
\[
\gamma=\alpha\res{\up{\int{\ccc,0}}}\cup \beta\res{\up{\int{\aaa,\ccc}}}.
\]

The interval~$\int{\aaa,\ccc}$ has size~$\ccc+1-\aaa=n-\aaa=k$ so~$\up{\int{\aaa,\ccc}}$ is empty. Therefore,~$\gamma=\alpha\res{\up{\int{\ccc,0}}}$ has domain
\begin{align*}
\dom(\gamma)&=\dom(\alpha)\cap\up{\int{\ccc,0}}\\
&=\big(\up{\int{0,j}}\cup\up{\int{j+1,b+j+1}}\big)\cap \up{\int{\ccc,0}}\\
&=\up{\int{n-1,j}}\cup\up{\int{j+1,\bbb+j+1}}.
\end{align*}

By symmetry, we can apply the circular permutation~$x\mapsto x+1\bmod n$, implying that there exists a partial input~$\delta$ such that~$f_0(\delta)=0$ and~$\dom(\delta)=\up{\int{0,j+1}}\cup\up{\int{j+2,\bbb+j+2}}$ which proves the induction step.
\end{proof}
Let
\[
\I=\int{k,2n-k}.
\]
We apply Lemma \ref{lem_b} to~$j=k-1$, giving~$\gamma$ such that~$f_0(\gamma)=0$ and~$\dom(\gamma)=\up{\int{0,k-1}}\cup \up{\int{k,2n-k}}$. As~$\int{0,k-1}$ has size~$k$,~$\up{\int{0,k-1}}$ is empty so~$\dom(\gamma)=\up{\I}$.

Let
\begin{align*}
\J_0&=\int{2k-n,0}\\
\J_1&=\int{0,2k-n-1}=\int{0,2k-1}
\end{align*}
\begin{figure}[ht]
\centering
\includegraphics{Images/intervals-4}
\caption{$\J_0\bef\I\bef\J_1$}
\end{figure}

By Theorem \ref{thm_main_tool}, there exists~$\alpha$ such that~$f_0(\alpha)=1$ and~$\dom(\alpha)=\up{\int{2k-n,n}}\cup \up{\int{0,2k-n-1}}=\up{\J_0}\cup\up{\J_1}$.

We show that~$\alpha$ and~$\gamma$ have disjoint domains. One has~$\J_0\bef\I\bef\J_1$, because the endpoints of these intervals satisfy the inequalities summarized by the following diagram,
\begin{center}
\begin{tikzcd}
2k-n\arrow[r]& \wid[2n-k]{k}\arrow[r]\dlArrow& \wid[2k-1]{n}\dlArrow\\
\wid[2k-n]{n}\arrow[r]& 2n-k\arrow[r]\ulArrow& 2k-1\ulArrow,
\end{tikzcd}
\end{center}
which all follow from~$k\leq n$ and~$3k\geq 2n+1$. Therefore,
\begin{itemize}
\item $\J_0\cup\I=\int{2k-n,2n-k}$ has size~$3(n-k)+1\in [k,n]$,
\item $\I\cup \J_1=\int{k,2k-1}$ has size~$k$.
\end{itemize}
As a result,~$\alpha$ and~$\gamma$ have disjoint domains so they are compatible, but they give opposite values to~$0$, which is a contradiction.
\end{proof}

\begin{remark}
One might be tempted to prove Theorem \ref{thm_lower_bound} by induction on~$n$ in the following way. Let~$k_n=\floor{\frac{3n+1}{4}}$ and assume for a contradiction that the complex~$K$ made of all the~$(k_n-1)$-intervals generates~$\Mon_{n}$. Delete~$4$ vertices regularly spaced, so that at least~$3$ vertices are removed from each maximal interval of~$K$. The resulting complex~$K'$ has~$n-4$ vertices and intervals of length at most~$k_n-4=\floor{\frac{3(n-1)}{4}}$. When~$n=2\bmod 4$, this quantity is strictly smaller than~$k_{n-1}$ so we obtain a contradiction by induction hypothesis. 

However, this argument does not work: there is no way to choose~$4$ elements in~$[0,n-1]$ so that every~$(k_n-1)$-interval contains at least~$3$ of them. The only way to make it work would be to slightly increase~$k_n$ to~$\ceil{\frac{3n}{4}}+2$, but then the base case would not hold (and~$k_n$ would exceed the upper bound~$\ceil{\frac{3n}{4}}$, so the statement would be wrong anyway).
\end{remark}

\section{Future directions}\label{sec_future}

\subsection{Language generation via combinatorial topology}

As explained in \cite{H25}, the language generation problem can be reformulated in the framework developed in \cite{HS99,HKR13}, which enables the analysis of distributed algorithms using techniques from combinatorial topology.

In particular, to each language~$L\subseteq A^I$ one can associate a chromatic labeled simplicial complex~$\O_L$, whose simplices represent the elements of~$L$ and in which simplices sharing many vertices correspond to strings have many common values. Precisely, the vertices of~$\O_L$ are the pairs~$(i,x_i)\in I\times A$ for~$i\in I$ and~$x\in L$, and each string~$x\in L$ gives rise to a simplex~$\{(i,x_i):i\in I\}$. Therefore,~$\O_L$ is a pure simplicial complex of dimension~$|I|-1$, i.e.~its maximal simplices all have~$|I|$ vertices. It is chromatic and labeled because for each vertex~$(i,a)$, its component~$i$ is seen as a color and its component~$a$ is seen as a label, and the vertices of each simplex all have distinct colors.

For~$n\geq 2$, the complex~$\O_{\Mon_n}$, seen as a topological space, is the product of the circle with the~$(n-2)$-dimensional ball and is shown in Figures \ref{fig_chromatic2}, \ref{fig_chromatic3} and \ref{fig_chromatic4} for~$n=2,3,4$. The colors of the vertices are elements of~$\{0,1,2,3\}$ and are visualized as $\{\pic{chromatic-100},\pic{chromatic-101},\pic{chromatic-102},\pic{chromatic-103}\}$, and their labels belong to~$\{0,1\}$ (in the pictures, the colors of the simplices have no meaning and are there only to help distinguishing them).

\begin{figure}[ht]
\centering
\includegraphics{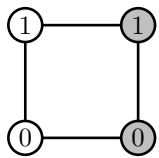}
\caption{The output complex of~$\Mon_2$ is a circle}\label{fig_chromatic2}
\end{figure}

\begin{figure}[ht]
\centering
\subcaptionbox{A flattened version: the left and right edges are glued together}{\includegraphics{Images/chromatic-mono-1}}\hspace{1cm}
\subcaptionbox{The complex}[.2\textwidth]{\includegraphics[width=.15\textwidth]{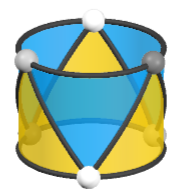}}
\caption{The output complex of~$\Mon_3$ is a cylinder}\label{fig_chromatic3}
\end{figure}

\begin{figure}[ht]
\centering
\subcaptionbox{An unfold version: the left and right triangles are glued together.}{\raisebox{2mm}{\includegraphics[width=.45\textwidth]{
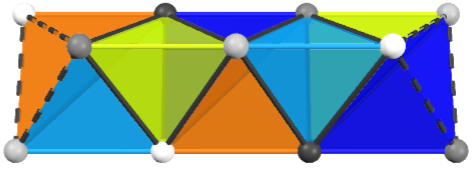
}}}\hspace{1cm}
\subcaptionbox{The surface of the complex}[.4\textwidth]{\includegraphics[width=.35\textwidth]{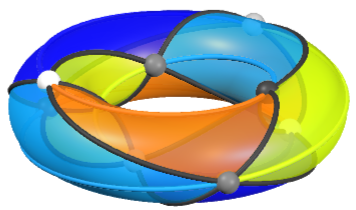}}
\caption{The output complex of~$\Mon_4$ is a solid doughnut (the models can be interactively visualized at \cite{DesmosMon4Unfold} and \cite{DesmosMon4})}\label{fig_chromatic4}
\end{figure}

Each communication complex~$K$ over~$I$, together with an input alphabet~$B$, can be turned into a similar complex~$\II_K(B)$ in which the simplices represent the inputs in~$B^J$, the vertices are also assigned colors~$i\in I$ and their labels are the restrictions of the inputs to~$\W(i)$. Whether~$K$ generates a language~$L$ is then equivalent to the existence of a surjective simplicial map~$f:\II_K(B)\to \O_L$ that is color-preserving.

It would be interesting to understand whether this reformulation can help to study the problem of local generation of~$\Mon_n$, using tools from combinatorial topology.

\subsection{Open questions}
The main open problem is a complete classification of the minimal complexes generating~$\Mon_n$, for all~$n$. Many complexes are obtained by inserting vertices in smaller complexes, but we have seen that for some values of~$n$, namely~$n=2,5,7,8$, a new way of generating~$\Mon_n$ appears.
%

We raise the following questions:
\begin{itemize}
\item Are there infinitely many values of~$n$ coming with a new way of generating~$\Mon_n$? Or is there a finite family of complexes that induce all of them by insertions?
\item Does~$K_6$ generate~$\Mon_6$?
\item What is the exact value of~$\mu(n)$ (Definition \ref{def_mu}), and is there an elegant argument explaining its value?
\end{itemize}


The proofs of Lemma \ref{lem_technical} and Theorem \ref{thm_lower_bound} are ``empirical'' in the sense that they were obtained with the help of computer search  (note however that the process was far from automatic, and the statements presented in the article are the results of an effort to unify and generalize arguments found for particular complexes and small values of~$n$). These arguments cannot really be shortened, because they are essentially derivations of conflicts with smallest derivation trees, in the inference system from Section \ref{sec_rule_system}. Still, one could hope for more theoretically grounded arguments, abstracting away the most tedious details.

More generally, is the behavior of~$\Mon_n$ inherently chaotic, or is it possible to have a more structural understanding of the local generation aspects of this language?

\newpage
\appendix
\section{Appendix}\label{sec_app}

Some of the proofs are given in this section.

\subsection{Proof of Lemma \ref{lem_technical}}\label{app_lem}
We first recall the statement of Lemma \ref{lem_technical}.
 
\begin{lemma*}
Let~$1<i<j<n-1$. If a complex~$K$ generates~$\Mon_n$, then~$K$ contains at least one of the following intervals:
\[
\int{0,j},\int{j+1,i},\int{i+1,1},\int{i,0},\int{1,n-1}.
\]
\end{lemma*}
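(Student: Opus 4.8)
We argue by contradiction. Suppose $K$ generates $\Mon_n$ but contains none of the five listed intervals, and fix a function $f$ generating $\Mon_n$ with $K_f \subseteq K$ (Proposition~\ref{prop_canonical}). Since a simplicial complex is downward closed, the hypothesis gives $\up[f]{S} = \emptyset$ for every set $S$ containing one of the five intervals; in particular $I_n \supseteq \int{1,n-1}$ is not a simplex, so $\up[f]{I_n} = \emptyset$ as well. The combinatorial feature that drives the argument is that, among the four cells $0$, $1$, $i$, $j$, the interval $\int{0,j}$ contains all of them, whereas $\int{j+1,i}$, $\int{i+1,1}$, $\int{i,0}$ and $\int{1,n-1}$ each contain exactly three of them, missing respectively $j$, $i$, $1$ and $0$. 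This is exactly the configuration needed to chain applications of Theorem~\ref{thm_main_tool}: reading off a prescribed value at one of these cells forces a partial input whose domain is supported on an interval overlapping two of the forbidden intervals. (A single naive pair of applications of Theorem~\ref{thm_main_tool} is not enough: for instance it yields only $\int{j+1,i-1}\in K$, a proper subinterval of the forbidden $\int{j+1,i}$.)

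The plan is then to build a cyclic list of intervals $\I_0 \bef \I_1 \bef \cdots \bef \I_m \bef \I_0$ together with partial inputs $\alpha_0, \ldots, \alpha_m$, where $\dom(\alpha_t) = \up[f]{\I_t}$ and $f$ takes a prescribed value at a prescribed cell on each $\alpha_t$, produced by successive applications of Theorem~\ref{thm_main_tool} and Lemma~\ref{lem_mono} together with its symmetric forms (Remark~\ref{rmk_lem_sym}). At each step we use the hypothesis to discard the $\up[f]{\cdot}$ terms attached to intervals that contain one of the five forbidden intervals, which is exactly what keeps every domain supported on a single interval. The benefit of arranging the $\I_t$ into a $\bef$-cycle is that each consecutive union $\I_t \cup \I_{t+1}$ contains one of the five forbidden intervals, hence $\up[f]{\I_t \cup \I_{t+1}} = \up[f]{\I_t} \cap \up[f]{\I_{t+1}} = \emptyset$; thus consecutive partial inputs have disjoint domains, so they are compatible, and Lemma~\ref{lem_mono} may be applied to propagate values around the cycle. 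Going once around then produces two compatible partial inputs assigning opposite values to one output cell, a contradiction. As in the proof of Lemma~\ref{lem_b}, all the overlap and size conditions can be collected into a single concatenated $\bef$-diagram whose arrows are immediate consequences of $1 < i < j < n-1$.

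The genuine difficulty is the design of this cycle: which output cells to read, which values to prescribe, and in which order to combine the resulting partial inputs, so that every intermediate domain collapses to a single interval and every consecutive union meets the forbidden list. There seems to be no shortcut here: this is the step that was found by running the inference system of Section~\ref{sec_rule_system} on small values of $n$ and on complexes built from short intervals, and then extracting the common pattern. Once the cycle is fixed, each individual step is a mechanical invocation of Theorem~\ref{thm_main_tool} or Lemma~\ref{lem_mono} plus an inequality check, so no single deduction is hard — the whole content lies in the bookkeeping, which we would present in the template of the proof of Lemma~\ref{lem_b}.
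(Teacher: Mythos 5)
There is a genuine gap: what you have written is a strategy, not a proof. You correctly identify the machinery the paper itself uses — partial inputs from Theorem~\ref{thm_main_tool}, domains that collapse because forbidden intervals give $\up[f]{\cdot}=\emptyset$, compatibility from disjoint domains, value propagation via Lemma~\ref{lem_mono}, and a final conflict at one output cell — but you never exhibit the cycle of intervals, the cells, the prescribed values, or the order of combination, and you explicitly declare this step to be beyond reach without computer search. That construction is not an omissible ``bookkeeping'' detail: it is the entire mathematical content of the lemma, and an argument that says ``one should be able to arrange such a cycle'' establishes nothing, since whether the domains really collapse and whether each consecutive union really contains a forbidden interval depends entirely on the specific choice.

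Moreover, the needed construction is short, so the appeal to an unavoidable exhaustive search is not accurate. Concretely: two applications of Theorem~\ref{thm_main_tool} give $\alpha$ with $f_0(\alpha)=1$, $\dom(\alpha)=\up[f]{\int{i+1,0}}\cup\up[f]{\int{0,i}}$, and $\beta$ with $f_1(\beta)=0$, $\dom(\beta)=\up[f]{\int{j+1,1}}\cup\up[f]{\int{1,j}}$; the cycle $\int{0,i}\bef\int{1,j}\bef\int{i+1,0}\bef\int{j+1,1}\bef\int{0,i}$ has consecutive unions $\int{0,j}$, $\int{1,0}$, $\int{i+1,1}$, $\int{j+1,i}$, all absent from $K$ by hypothesis (and downward closure for $\int{1,0}=I_n$), so $\alpha,\beta$ are compatible and Lemma~\ref{lem_mono} yields $\gamma$ with $f_i(\gamma)=0$ and $\dom(\gamma)=\up[f]{\int{1,j}}$. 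A third application of Theorem~\ref{thm_main_tool} gives $\delta$ with $f_i(\delta)=1$ and $\dom(\delta)=\up[f]{\int{0,i}}\cup\up[f]{\int{i,n-1}}$, and since $\int{0,j}$ and $\int{1,n-1}$ are forbidden, $\dom(\gamma)\cap\dom(\delta)=\emptyset$, giving the contradiction at cell~$i$. Your framework would accommodate this, but until such an explicit instantiation is supplied and its interval-size and overlap checks verified, the proposal does not prove the statement.
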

\begin{proof}
Let~$f$ generate~$\Mon_n$ with~$K_f\subseteq K$. Again, we write~$\W$ for~$\W_f$. We assume for a contradiction that~$K$, hence~$K_f$, contains none of these intervals.
\begin{claim}
There exists a partial input~$\gamma$ such that~$\dom(\gamma)=\up{\int{1,j}}$ and~$f_i(\gamma)=0$.
\end{claim}
Applying Theorem \ref{thm_main_tool} twice, there exist partial inputs~$\alpha,\beta$ such that:
\begin{itemize}
\item $f_0(\alpha)=1$ and~$\dom(\alpha)=\up{\int{i+1,0}}\cup\up{\int{0,i}}$,
\item $f_1(\beta)=0$ and~$\dom(\beta)=\up{\int{j+1,1}}\cup\up{\int{1,j}}$.
\end{itemize}

One has
\[
\int{0,i}\bef\int{1,j}\bef\int{i+1,0}\bef\int{j+1,1}\bef\int{0,i}
\]
because the endpoints of these intervals have representatives in~$\Z$ satisfying the inequalities expressed in Lemma \ref{lem_bef} and summarized in the following diagram:
\begin{center}
\begin{tikzcd}
0\arrow[r]& 1\arrow[r]\dlArrow& i+1\arrow[r]\dlArrow& j+1 \arrow[r]\dlArrow& \wid[n+i.]{n}\dlArrow\\
i\arrow[r]& j\arrow[r]\ulArrow& \wid[i+1]{n}\arrow[r]\ulArrow& \wid[j+1]{n+1}\arrow[r]\ulArrow& n+i.\ulArrow
\end{tikzcd}
\end{center}

Therefore,  the consecutive unions of these intervals are
\[
\int{0,j},\int{1,0},\int{i+1,1}\text{ and }\int{j+1,i}
\]
which are all missing in~$K$. As a result,~$\alpha$ and~$\beta$ have disjoint domains so they are compatible. Let~$\gamma=\restr{\alpha}{\up{\int{i,0}}}\sqcup\restr{\beta}{\up{\int{1,i}}}=\restr{\beta}{\up{\int{1,i}}}$. By Lemma \ref{lem_mono}, one has~$f_i(\gamma)=0$ and
\begin{align*}
\dom(\gamma)&=\dom(\beta)\cap\up{\int{1,i}}\\
&=\up{\int{j+1,i}}\cup\up{\int{1,j}}\\
&=\up{\int{1,j}}
\end{align*} and the claim is proved.

On the other hand, by Theorem \ref{thm_main_tool} there is a partial input~$\delta$ such that~$f_i(\delta)=1$ and~$\dom(\delta)=\up{\int{0,i}}\cup\up{\int{i,n-1}}$.

Finally, one has
\begin{align*}
\dom(\gamma)\cap\dom(\delta)&=\up{\int{1,j}}\cap\big(\up{\int{0,i}}\cup\up{\int{i,n-1}}\big)\\
&=\up{\int{0,j}}\cup\up{\int{1,n-1}}\\
&=\emptyset,
\end{align*}
so~$\gamma$ and~$\delta$ are compatible but give opposite values to~$i$, which is a contradiction.
\end{proof}

\subsection{Proof of Theorem \ref{thm_small_values}}\label{app_small}

We recall the statement of Theorem \ref{thm_small_values}.
\begin{theorem*}
For~$2\leq n\leq 7$, the minimal complexes generating~$\Mon_n$ belong to the family of~$K_2$,~$K_5$,~$K_6$ or~$K_7$.
\end{theorem*}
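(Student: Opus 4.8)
The plan is to fix a minimal complex $K$ generating $\Mon_n$ with $2\le n\le 7$ and to classify it according to the number $m$ of its maximal simplices of size $n-1$. By Theorem~\ref{thm_intervals} every maximal simplex of $K$ is an interval, so these are precisely the $(n-1)$-intervals of $K$. If $m\ge 2$, Proposition~\ref{prop_mono_2_simplices} places $K$ in the $K_2$ family; if $m=1$, Theorem~\ref{thm_one_interval} places $K$ in the $K_5$ family. For $n\le 4$ the remaining case $m=0$ cannot occur, by the remark following Theorem~\ref{thm_one_interval}, so the claim is already settled for $n\in\{2,3,4\}$, and it only remains to treat minimal complexes with no $(n-1)$-interval for $n\in\{5,6,7\}$.

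For $n=5$ and $m=0$, I would apply Lemma~\ref{lem_technical} with its only admissible parameters $i=2$, $j=3$: the five intervals $\int{0,3},\int{4,2},\int{3,1},\int{2,0},\int{1,4}$ it produces all have size $4=n-1$, so $K$ would contain a $4$-interval, contradicting $m=0$. Hence no minimal complex generates $\Mon_5$ without a $4$-interval, and for $n=5$ only the $K_2$ and $K_5$ families appear.

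For $n=6$ and $m=0$, Lemma~\ref{lem_technical} with $i=2$, $j=3$ produces $\int{0,3}$ together with four $5$-intervals; as $K$ has no $5$-interval, it must contain $\int{0,3}$. Running this argument through every symmetry (Proposition~\ref{prop_mon_symmetries}) shows that $K$ contains the whole $\Di_{12}$-orbit of $\int{0,3}$, i.e.~every $4$-interval of $I_6$. Since the maximal simplices of $K$ have size at most $4$ (no $5$-interval by assumption, and $I_6\notin K$ because $K$ is minimal), they are exactly the $4$-intervals, so $K=K_6$, a member of the $K_6$ family.

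The case $n=7$ and $m=0$ is the main obstacle. The same opening move — Lemma~\ref{lem_technical} with $i=2$, $j=3$, whose only output of size at most $5$ is $\int{0,3}$ — together with the symmetries forces every $4$-interval of $I_7$ into $K$, while Theorem~\ref{thm_lower_bound} forces at least one $5$-interval into $K$. So $K$ is determined by the set $T$ of $5$-intervals it carries on top of all the $4$-intervals, and the proof would be finished by enumerating, up to the $\Di_{14}$-action, those $T$ for which $K$ both generates $\Mon_7$ and is minimal. The delicate ingredients I expect are: a sharpening of Theorem~\ref{thm_lower_bound} showing $|T|\ge 5$; a combination of the remaining instances of Lemma~\ref{lem_technical} and its symmetric versions to reduce the candidate sets $T$ to a short list; and, for each surviving non-family candidate, a conflict in the inference system of Section~\ref{sec_rule_system}, the hardest of which are located by the automated search of \cite{Prover}. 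The intended conclusion is that, up to symmetry, $T$ is either the five $5$-intervals of $K_7$ or the five $5$-intervals of the complex obtained from $K_6$ by one vertex insertion, so that $K$ lies in the $K_7$ family or the $K_6$ family, completing the classification.
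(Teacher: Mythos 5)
Your treatment of $n\le 6$ is correct and essentially the paper's own argument: the reduction to the $K_2$ and $K_5$ families via Proposition~\ref{prop_mono_2_simplices} and Theorem~\ref{thm_one_interval}, the exclusion of $m=0$ for $n\le 4$, and the use of Lemma~\ref{lem_technical} to force a $4$-interval and then, by symmetry, all $4$-intervals (your parameter choice $i=2,j=3$ for $n=6$ differs from the paper's $i=3,j=4$, but both isolate a single short interval among the five, so this is fine). Likewise your opening move for $n=7$ without a $6$-interval — forcing all $4$-intervals and at least one $5$-interval — is sound (and using Theorem~\ref{thm_lower_bound} here is legitimate, since its proof is self-contained for $n\ge 7$).

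The genuine gap is the rest of the $n=7$ case, which is the hardest part of the theorem and is left as a plan rather than a proof. You reduce the problem to classifying the set $T$ of $5$-intervals of $K$, but you never derive the constraints that actually pin $T$ down: you ``expect'' a sharpening of Theorem~\ref{thm_lower_bound} giving $|T|\ge 5$, an unspecified combination of instances of Lemma~\ref{lem_technical} to shorten the list, and computer-found conflicts in the inference system for the leftovers. None of this is carried out, and no upper bound on $|T|$ is even addressed. The paper closes this case with two further explicit applications of Lemma~\ref{lem_technical}, with $(i,j)=(2,4)$ and $(3,4)$: writing $A$ for the set of starting points of the $5$-intervals of $K$, these show (after applying all symmetries) that $A$ meets $\{a,a+2\}$ and $\{a,a+1,a+4\}$ for every $a\in\Z/7\Z$. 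A short case split then finishes: if $A$ misses some pair $\{a,a+1\}$ (say $\{0,1\}$), the covering conditions force $A\supseteq\{2,3,4,5,6\}$ and $K=\ins{6}{K_6}$, a member of the $K_6$ family; otherwise $A$ meets every $\{a,a+1\}$, minimality gives $|A|\le 5$ (else $K$ strictly contains a rotate of $K_7$), and the conditions force $A$ to be $\{1,2,3,5,6\}$ or $\{1,2,4,5,6\}$ up to rotation, i.e.\ $K$ is a circular permutation of $K_7$. Without deriving such covering constraints and a minimality bound, your enumeration of candidate $T$'s cannot get off the ground, so the $n=7$ classification remains unproved in your write-up.
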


\paragraph{The case \texorpdfstring{$n=5$}{n=5}.}
\begin{proposition}
The minimal complexes generating~$\Mon_5$ are, up to symmetry:
\begin{itemize}
\item In the~$K_2$ family,
\item $K_5$.
\end{itemize}
\end{proposition}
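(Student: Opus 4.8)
The plan is to let $K$ be a minimal complex generating $\Mon_5$ and split into cases according to the number of maximal simplices of size~$4$ it has. First I would invoke Theorem~\ref{thm_intervals}, so that all maximal simplices of $K$ are intervals; since the only $5$-interval is $I_5$ itself, and $I_5\in K$ would force $K=\comp{I_5}$, which properly contains a member of the $K_2$ family generating $\Mon_5$ and is therefore not minimal, every maximal interval of $K$ has size at most~$4$. Hence there are exactly three cases to treat: $K$ has at least two $4$-intervals, exactly one, or none.

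For the first two cases the earlier structural results do all the work. If $K$ has at least two $4$-intervals, Proposition~\ref{prop_mono_2_simplices} gives directly that $K$ is a member of the $K_2$ family, i.e.\ $K=\comp{I_5\setminus\{a\},I_5\setminus\{b\}}$. If $K$ has exactly one $4$-interval, Theorem~\ref{thm_one_interval} puts $K$ in the $K_5$ family; for $n=5$ the only admissible parameters are $i=2$, $j=3$, so $K=K^5_{2,3}$ up to a circular permutation, and by Proposition~\ref{prop_Knij} this is precisely $K_5$ up to symmetry.

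It remains to rule out the case where $K$ has no $4$-interval. Then all maximal simplices of $K$ are intervals of size at most~$3$, so $K$ is contained in the complex of all $3$-intervals of $I_5$, namely $L=\comp{\int{0,2},\int{1,3},\int{2,4},\int{3,0},\int{4,1}}$; since the class of complexes generating $\Mon_5$ is upward closed, $L$ would then also generate $\Mon_5$. But applying Lemma~\ref{lem_technical} with $n=5$, $i=2$, $j=3$ shows that any complex generating $\Mon_5$ must contain one of $\int{0,3},\int{4,2},\int{3,1},\int{2,0},\int{1,4}$, and each of these has size~$4$ in $\Z/5\Z$, hence is contained in no $3$-interval and is absent from $L$. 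This contradiction eliminates the third case and completes the classification.

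The only genuinely new ingredient is this third case, and the point to verify carefully is that the five intervals produced by Lemma~\ref{lem_technical} for the parameters $i=2$, $j=3$ all happen to be $4$-intervals when $n=5$; everything else is bookkeeping on top of Theorems~\ref{thm_intervals}, \ref{thm_one_interval} and Proposition~\ref{prop_mono_2_simplices}. I do not expect a real obstacle here. If Lemma~\ref{lem_technical} were unavailable, Theorem~\ref{thm_main_tool} alone would be too weak, since for every pair $i,j$ at least one of $\int{i+1,j}$ and $\int{j,i}$ has size at most~$3$ when $n=5$; one would then need to combine two applications of Lemma~\ref{lem_mono} as in the proof of Lemma~\ref{lem_technical}.
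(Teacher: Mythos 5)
Your proof is correct and follows essentially the same route as the paper: the key step in both is applying Lemma~\ref{lem_technical} with $i=2$, $j=3$ to see that any complex generating $\Mon_5$ must contain one of the five $4$-intervals, after which Proposition~\ref{prop_mono_2_simplices} and Theorem~\ref{thm_one_interval} handle the cases of several versus exactly one $4$-interval. Your extra scaffolding (invoking Theorem~\ref{thm_intervals}, excluding $I_5$, and routing the no-$4$-interval case through the complex of all $3$-intervals) is harmless but unnecessary, since the paper simply applies Lemma~\ref{lem_technical} directly to $K$ and notes that minimality excludes the degenerate cases.
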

\begin{proof}
We show that if~$K$ generates~$\Mon_5$, then~$K$ contains a~$4$-interval. Applying Lemma \ref{lem_technical} to~$i=2$ and~$j=3$,~$K$ contains at least of these intervals:
\[
\int{0,3},\int{4,2},\int{3,1},\int{2,0},\int{1,4},
\]
which are all the~$4$-intervals. If~$K$ contains at least two~$4$-intervals, then it belongs to the~$K_2$ family by Proposition \ref{prop_mono_2_simplices}. If it contains only one~$4$-interval, then it belongs to the~$K_5$ family by Theorem \ref{thm_one_interval}. For~$n=5$, the only members of this family are~$K_5$ and its symmetric versions.
\end{proof}

\paragraph{The case \texorpdfstring{$n=6$}{n=6}.}

\begin{proposition}
If~$K$ is a minimal complex generating~$\Mon_6$, then
\begin{itemize}
\item $K$ belongs to the~$K_2$ or the~$K_5$ families,
\item Or~$K=K_6$.
\end{itemize}
\end{proposition}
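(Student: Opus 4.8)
The plan is to split on the number of $5$-intervals that $K$ contains, reducing two of the three cases to results already in hand. First I would record that, by Theorem~\ref{thm_intervals}, the maximal simplices of $K$ are intervals, and that $I_6\notin K$: otherwise $K$ would be the full power set $\comp{I_6}$, which is not minimal since it strictly contains the member $\comp{I_6\setminus\{0\},I_6\setminus\{1\}}$ of the $K_2$ family, a complex that already generates $\Mon_6$. Hence every maximal simplex of $K$ has size at most $5$. If $K$ has two or more $5$-intervals, then $K$ lies in the $K_2$ family by Proposition~\ref{prop_mono_2_simplices}; if it has exactly one, then it lies in the $K_5$ family by Theorem~\ref{thm_one_interval}. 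This leaves the case in which $K$ has \emph{no} $5$-interval, and I claim that then $K=K_6$.

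In that case every maximal simplex of $K$ is an interval of size at most $4$, and the heart of the argument is to show that $K$ must in fact contain all six $4$-intervals. I would apply Lemma~\ref{lem_technical} with $n=6$, $i=2$, $j=3$, which forces $K$ to contain one of
\[
\int{0,3},\ \int{4,2},\ \int{3,1},\ \int{2,0},\ \int{1,5};
\]
computing sizes (using that an interval $\int{a,b}$ has size $1+((b-a)\bmod 6)$) shows that $\int{0,3}$ has size $4$ while the other four each have size $5$, so $K$ must contain $\int{0,3}$. Then I would invoke the symmetries of Proposition~\ref{prop_mon_symmetries}: for each $g\in\Di_{12}$, the complex $g\cdot K$ also generates $\Mon_6$ and still contains no $5$-interval, since rotations and reflections permute intervals while preserving their sizes. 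By the same reasoning $g\cdot K$ contains $\int{0,3}$, i.e.\ $K$ contains $g^{-1}\cdot\int{0,3}$. Since the orbit of $\int{0,3}$ under the rotation subgroup is precisely the collection of all $4$-intervals $\int{a,a+3}$ for $a\in\Z/6\Z$, we conclude $K\supseteq K_6$. Finally, every maximal simplex of $K$ is an interval of size at most $4$, hence contained in some $4$-interval, which already belongs to $K$; so the maximal simplices of $K$ are exactly the $4$-intervals and $K=K_6$.

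No step here is a genuine obstacle: the real work is done by Lemma~\ref{lem_technical} (itself built on Theorem~\ref{thm_main_tool}), which is already proved. The only point demanding a little care is the case split itself, namely making sure that ``$K$ has no $5$-interval'' really excludes $5$-element simplices and not merely $5$-element \emph{maximal} simplices; this is exactly what the preliminary observations ($I_6\notin K$ and maximal simplices are intervals) guarantee, since any $5$-interval occurring as a simplex would then have to occur as a maximal simplex.
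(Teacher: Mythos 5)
Your proposal is correct and follows essentially the same route as the paper: the same case split on the number of $5$-intervals, resolved by Proposition \ref{prop_mono_2_simplices} and Theorem \ref{thm_one_interval}, and in the remaining case an application of Lemma \ref{lem_technical} followed by the dihedral symmetry of Proposition \ref{prop_mon_symmetries} to force all $4$-intervals, hence $K=K_6$. The only difference is cosmetic: you instantiate Lemma \ref{lem_technical} at $(i,j)=(2,3)$ while the paper uses $i=3$, $j=4$, and your choice is if anything slightly cleaner, since exactly one interval in your list has size $4$ (your extra care about $I_6\notin K$ is also fine, though automatic, as every $5$-element subset of $I_6$ is itself an interval).
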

\begin{proof}
If~$K$ contains at least one~$5$-interval, then it belongs to the~$K_2$ family or the~$K_5$ family by Proposition \ref{prop_mono_2_simplices} and Theorem \ref{thm_one_interval}. Now assume that~$K$ contains no~$5$-interval. Applying Lemma \ref{lem_technical} to~$i=3$ and~$j=4$,~$K$ contains at least of these intervals:
\[
\int{0,4},\int{5,3},\int{3,1},\int{3,0},\int{1,5}.
\]
All these intervals except~$\int{3,0}$ have length~$5$, so~$K$ contains~$\int{3,0}$, which has length~$4$. By symmetry,~$K$ contains every~$4$-interval, so~$K=K_6$.
\end{proof}
As previously mentioned, we do not know whether~$K_6$ generates~$\Mon_6$. 

\paragraph{The case \texorpdfstring{$n=7$}{n=7}.}
Let~$K$ generate~$\Mon_7$. Once again, if~$K$ contains at least one~$6$-interval, then~$K$ belongs to the~$K_2$ or~$K_5$ family. Assume that~$K$ contains no~$6$-interval.
\begin{proposition}\label{prop_every4}
If a complex generates~$\Mon_7$ and contains no~$6$-interval, then it contains all the~$4$-intervals.
\end{proposition}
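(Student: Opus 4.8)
The plan is to produce one specific $4$-interval inside $K$ by a single application of Lemma~\ref{lem_technical}, and then transport it around the circle using the rotational symmetry of $\Mon_7$.

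Concretely, I would apply Lemma~\ref{lem_technical} with $i=2$ and $j=3$ (admissible since $1<2<3<6=n-1$): it gives that $K$ contains at least one of the five intervals
\[
\int{0,3},\qquad \int{4,2},\qquad \int{3,1},\qquad \int{2,0},\qquad \int{1,6}.
\]
A short size count — the only genuine computation — shows that $\int{0,3}$ is a $4$-interval while each of $\int{4,2}$, $\int{3,1}$, $\int{2,0}$, $\int{1,6}$ has size $6$ (each is $I_7$ with a single vertex removed). Since by hypothesis no $6$-interval is a simplex of $K$, the only remaining possibility is $\int{0,3}\in K$.

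Finally I would invoke symmetry. For each $k\in\Z/7\Z$ the complex $r^{k}\cdot K$ again generates $\Mon_7$ (Proposition~\ref{prop_mon_symmetries}) and still contains no $6$-interval, because a rotation preserves the size of an interval as well as the property of being a simplex of the complex. Running the previous step for $r^{k}\cdot K$ yields $\int{0,3}\in r^{k}\cdot K$, i.e.\ $r^{-k}\cdot\int{0,3}=\int{-k,3-k}\in K$. As $k$ ranges over $\Z/7\Z$, the intervals $\int{-k,3-k}$ are exactly the $4$-intervals of $I_7$, so $K$ contains all of them.

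There is essentially no obstacle once Lemma~\ref{lem_technical} is available; the only point requiring a moment's care is the choice of the pair $(i,j)=(2,3)$, which is the one that makes four of the five listed intervals have size $6$. For other admissible pairs (e.g.\ $(i,j)=(4,5)$) two of the five intervals are $4$-intervals, which would still suffice — producing any single $4$-interval in $K$ forces all of them by rotation — but the argument is cleanest with $(2,3)$.
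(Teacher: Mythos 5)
Your proof is correct and follows essentially the same route as the paper: a single application of Lemma~\ref{lem_technical} with a pair $(i,j)$ chosen so that four of the five listed intervals are $6$-intervals, forcing one $4$-interval into $K$, followed by rotational symmetry (Proposition~\ref{prop_mon_symmetries}) to obtain all $4$-intervals. The only difference is the cosmetic choice $(i,j)=(2,3)$ yielding $\int{0,3}$, where the paper uses $(2,5)$ yielding $\int{6,2}$, and your explicit spelling-out of the symmetry step that the paper leaves implicit.
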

\begin{proof}
Applying Lemma \ref{lem_technical} to~$i=2$ and~$j=5$,~$K$ contains at least one of these intervals:
\[
\int{0,5},\int{6,2},\int{3,1},\int{2,0},\int{1,6}.
\]
All these intervals except~$\int{6,2}$ have lengths at least~$6$, so~$K$ contains~$\int{6,2}$, which is a~$4$-interval. By symmetry,~$K$ contains every~$4$-interval.´
\end{proof}

Let~$A$ be the set of starting points of~$5$-intervals in~$K$.
\begin{proposition}\label{prop_A}
For each~$a\in\Z/7\Z$,~$A$ intersects~$\{a,a+2\}$ and~$\{a,a+1,a+4\}$.
\end{proposition}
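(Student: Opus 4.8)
The plan is to deduce both assertions from Lemma~\ref{lem_technical}, specialised to $n=7$, together with the standing hypothesis that $K$ has no $6$-interval. For $n=7$ and any $1<i<j<6$, the lemma says that $K$ contains one of
\[
\int{0,j},\quad \int{j+1,i},\quad \int{i+1,1},\quad \int{i,0},\quad \int{1,6}.
\]
The sizes of these five intervals are read off modulo $7$, and the idea is to choose $(i,j)$ so that the list contains only $5$-intervals and $6$-intervals: no $4$-interval must occur, since a $4$-interval is automatically in $K$ by Proposition~\ref{prop_every4} and would make the conclusion vacuous. Once the list consists only of $5$- and $6$-intervals, the absence of $6$-intervals forces $K$ to contain one of the listed $5$-intervals, which means a specific element of $\Z/7\Z$ lies in $A$.

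Concretely, for the first assertion I would take $i=2$, $j=4$: the five intervals are $\int{0,4}$ and $\int{5,2}$ (both of size $5$) together with $\int{3,1}$, $\int{2,0}$, $\int{1,6}$ (all of size $6$), so $K$ contains $\int{0,4}$ or $\int{5,2}$, that is $0\in A$ or $5\in A$, hence $A\cap\{0,5\}\neq\emptyset$. For the second assertion I would take $i=3$, $j=4$: now the list is $\int{0,4}$, $\int{4,1}$, $\int{3,0}$ (size $5$) and $\int{5,3}$, $\int{1,6}$ (size $6$), so $K$ contains one of the three $5$-intervals and $A\cap\{0,3,4\}\neq\emptyset$.

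It then remains to promote these two facts to all their translates. By Proposition~\ref{prop_mon_symmetries}, for every $m\in\Z/7\Z$ the circular permutation $x\mapsto x+m$ sends $K$ to a complex that again generates $\Mon_7$, again has no $6$-interval, and whose set of starting points of $5$-intervals is $A+m$. Running the two instances above for each of these complexes shows that $A$ meets $\{0,5\}-m$ and $\{0,3,4\}-m$ for every $m$, i.e.\ $A$ meets every translate of $\{0,5\}$ and of $\{0,3,4\}$. Since the translates of $\{0,5\}$ are exactly the sets $\{a,a+2\}$ (note $\{c,c+5\}=\{c+5,(c+5)+2\}$) and the translates of $\{0,3,4\}$ are exactly the sets $\{a,a+1,a+4\}$ (note $\{c,c+3,c+4\}=\{c+3,(c+3)+1,(c+3)+4\}$), this is precisely the statement. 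The only point demanding care is the bookkeeping with interval sizes modulo $7$, to confirm that for $(i,j)=(2,4)$ and $(3,4)$ the ``dead'' entries of each list are exactly its $6$-intervals and that no $4$-interval appears; beyond that the argument presents no real obstacle, since the heavy lifting is already done by Lemma~\ref{lem_technical}.
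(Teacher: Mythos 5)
Your proposal is correct and follows essentially the same route as the paper: both assertions come from Lemma \ref{lem_technical} with $(i,j)=(2,4)$ and $(i,j)=(3,4)$, discarding the $6$-intervals in each list (absent by hypothesis) to conclude that $A$ meets $\{0,5\}$ and $\{0,3,4\}$, and then extending to all $a$ by the circular symmetry of Proposition \ref{prop_mon_symmetries}. The paper's proof is identical in substance, merely stating the symmetry step more tersely ("the other cases hold by symmetry") where you spell out the translate bookkeeping.
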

\begin{proof}
Applying Lemma \ref{lem_technical} to~$i=2$ and~$j=4$,~$K$ contains at least one of these intervals:
\[
\int{0,4},\int{5,2},\int{3,1},\int{2,0},\int{1,6}.
\]
The first two intervals have length~$5$, the others have length~$6$. Therefore,~$K$ contains~$\int{0,4}$ or~$\int{5,2}$ hence~$A$ intersects~$\{5,0\}$, showing the case~$a=5$. The other cases hold by symmetry.

Applying Lemma \ref{lem_technical} to~$i=3$ and~$j=4$,~$K$ contains at least one of these intervals:
\[
\int{0,4},\int{5,3},\int{4,1},\int{3,0},\int{1,6}.
\]
The second and last intervals have length~$6$, so~$K$ contains one of the others, i.e.~$A$ intersects~$\{3,4,0\}$. It proves the case~$a=3$, the other cases hold by symmetry.
\end{proof}

We now consider two cases.

First assume that~$A$ is disjoint from~$\{a,a+1\}$ for some~$a$. By symmetry, we can assume that~$a=0$, i.e.~$A$ is disjoint from~$\{0,1\}$. Proposition \ref{prop_A} implies that~$A$ contains~$\{2,3,4,5,6\}$: indeed,~$A$ intersects~$\{0,2\}$,~$\{1,2\}$,~$\{5,0\}$, $\{6,1\}$ and~$\{0,1,4\}$. As~$K$ contains every~$4$-interval by Proposition \ref{prop_every4},~$K=\comp{\int{0,3},\int{1,4},\int{2,6},\int{3,6},\int{4,1},\int{5,2},\int{6,3}}$ which is~$\ins{6}{K_6}$, hence~$K$ belongs to the~$K_6$ family.

Now assume that~$A$ intersects~$\{a,a+1\}$ for every~$a$. As~$K$ is minimal, one has~$|A|\leq 5$ (if~$|A|\geq 6$, then~$K$ strictly contains a circular permutation of~$K_7$, which is impossible by minimality). We can assume by symmetry that~$0\notin A$. It implies that~$A$ contains~$\{1,2,5,6\}$ and~$3$ or~$4$: by our current assumption, it intersects~$\{0,1\}$ and~$\{6,0\}$, and by Proposition \ref{prop_A} it intersects~$\{0,2\}$,~$\{5,0\}$ and~$\{3,4,0\}$. Therefore,~$|A|\geq 5$ so~$A=\{1,2,3,5,6\}$ or~$\{1,2,4,5,6\}$, which both correspond to circular permutations of~$K_7$.

\end{document}